
\pdfoutput=1
\documentclass[a4paper,11pt,english]{article}

\usepackage[bindingoffset=0.3cm,textheight=22.5cm,hdivide={2.7cm,*,2.7cm}, vdivide={*,22cm,*}]{geometry}
\usepackage{cite}
\usepackage{youngtab}
\usepackage{amsmath,amsfonts,amssymb,babel,slashed,color,empheq,tensor,amsthm}
\usepackage{hyperref}

\makeatletter

\newtheorem{thm}{Theorem}[section]
\newtheorem{lem}[thm]{Lemma}
\newtheorem{corollary}[thm]{Corollary}
\newtheorem{prop}[thm]{Proposition}

\newcommand{\dd}{\mathrm{d}}

\newcommand{\del}{\partial}

\def\nn{\nonumber} 
\def\obar{\overline}
\numberwithin{equation}{section}

\def\a{\alpha}  \def\b{\beta}
 \def\g{\gamma} 
 \def\d{\delta} 

\def\l{\lambda} \def\L{\Lambda}  
    \def\r{\rho}
  \def\t{\tau}

\def\cA{{\cal A}}  \def\cC{{\cal C}} 
\def\cD{{\cal D}}  \def\cF{{\cal F}} 
\def\cG{{\cal G}} \def\cH{{\cal H}} \def\cI{{\cal I}} 
 \def\cK{{\cal K}}  
\def\cM{{\cal M}} \def\cN{{\cal N}} \def\cO{{\cal O}} 
 \def\cQ{{\cal Q}}  
\def\cS{{\cal S}}



\def\R{{\mathbb R}} \def\C{{\mathbb C}} \def\N{{\mathbb N}}

 \def\one{\mbox{1 \kern-.59em {\rm l}}}

\def\msu{\mathfrak{su}}
\def\mso{\mathfrak{so}}

\newcommand{\Tr}{\mathrm{Tr}}

\newcommand{\End}{\mathrm{End}}
\def\hs{\mathfrak{hs}}

%
%



\def\Tr{\mbox{Tr}}

\def\und{\underline}

\newcommand{\im}{\mathrm{i}}

\newcommand{\diag}{\rm diag}

\newcommand{\eq}[1]{(\ref{#1})}



\sloppy \allowdisplaybreaks[3]

\textwidth 6.4in \evensidemargin 0mm \oddsidemargin 0mm 
\topmargin-10mm \textheight 48\baselineskip \unitlength 1mm

\setcounter{tocdepth}{2}

\begin{document}


\parindent=0.4cm
\parskip=0.1cm

\renewcommand{\title}[1]{\vspace{10mm}\noindent{\Large{\bf#1}}\vspace{8mm}} 
\newcommand{\authors}[1]{\noindent{\large #1}\vspace{5mm}}
\newcommand{\address}[1]{{\itshape #1\vspace{2mm}}}


\begin{titlepage}
\begin{flushright}
 UWThPh-2019-28
\end{flushright}
\begin{center}
\title{ {\Large Higher-spin kinematics \& no ghosts on quantum space-time\\[1ex] 
in Yang-Mills matrix models}  }

\vskip 3mm

\authors{Harold C.\ Steinacker${}^\ddagger$}

\vskip 3mm

 \address{ 
${}^\ddagger${\it Faculty of Physics, University of Vienna\\
Boltzmanngasse 5, A-1090 Vienna, Austria  }  \\
Email: {\tt harold.steinacker@univie.ac.at}  
  }

\bigskip

\vskip 1.4cm

\textbf{Abstract}
\vskip 3mm

\begin{minipage}{14cm}%

A classification of bosonic on- and off-shell modes on a cosmological quantum space-time solution of the 
IIB matrix model is given, which leads to a higher-spin gauge theory. 
In particular, the no-ghost-theorem is established.
The physical on-shell modes consist of 2 towers of higher-spin modes, which are
effectively massless but include would-be massive degrees of freedom.
The off-shell modes consist of 4 towers of higher-spin modes, one of which was missing previously.
The noncommutativity leads to a cutoff in spin, which disappears in the semi-classical limit. 
An explicit  basis allows to obtain the full propagator, 
which is governed by a universal effective metric.
The physical metric fluctuations arise from would-be massive spin 2 modes,
which were previously shown to include the linearized Schwarzschild solution.
Due to the maximal supersymmetry of the IIB model, this 
is expected to define a consistent quantum theory in 3+1 dimensions, which includes gravity.

\end{minipage}

\end{center}

\end{titlepage}

\tableofcontents
%
%
\section{Introduction} 

The starting point of this paper is a 
recent solution of  IKKT-type matrix models with mass term \cite{Sperling:2019xar}, 
which is naturally interpreted as 3+1-dimensional cosmological FLRW quantum space-time.
It was  shown   that the fluctuation modes around this background 
include spin-2 metric fluctuations, as well as a truncated tower of higher-spin modes which are 
organized in a higher-spin gauge theory. The standard Ricci-flat massless 
graviton modes were found, as well as some additional 
vector-like and scalar metric modes. The latter was
shown to provide the linearized Schwarzschild solution in \cite{Steinacker:2019dii}.
However, the fluctuation analysis was not complete. In particular, although general arguments suggest 
that the model should be free of ghosts, this has not been established up to now.

The present paper provides a complete analysis and classification of all bosonic
fluctuation modes which arise on this background in the matrix model. It turns out that in addition to 
the three towers of (off-shell) higher spin modes found in \cite{Sperling:2019xar}, there is a fourth 
tower, which is obtained  in a coherent way. This provides a full and explicit diagonalization 
of the gauge-fixed quadratic action for the bosonic matrix fluctuations. 
Moreover, we classify and find the physical modes (i.e. the gauge-fixed 
on-shell modes modulo pure gauge modes) 
and show that the invariant inner product is positive, so that
they define a Hilbert space. 
Since the quadratic action is defined by the same inner product, this amounts to the statement that 
there are no ghosts, i.e. no physical modes with negative norm.  
We also compute the inner products for all off-shell modes, which is found to have the same 
Minkowski structure as in flat space. This allows in principle to write down the full propagator, 
and should be very useful in a future analysis of perturbative quantization. 

Along the way, many useful and surprisingly nice properties of the spacetime and its modes are uncovered, 
including simple on-shell relations which show that the time evolution  behaves 
very much like on commutative space, even in the presence of space-time noncommutativity.
Quite generally speaking, even though the organization is rather involved due to the higher-spin structure, 
the results are remarkably nice and simple.

The origin of higher-spin modes can be understood as follows.
The mathematical structure underlying the background under consideration is quantized twistor 
 space $\C P^{1,2}_n$, 
which is a quantized 6-dimensional coadjoint orbit of $SU(2,2)$ or $SO(4,2)$.
Semi-classically, this is an $S^2$ bundle over the 4-hyperboloid $H^4$, or over the space-time $\cM^{3,1}$.
The latter is a projection of $H^4$ with Minkowski signature, describing a 
FLRW cosmological space-time with a Big Bounce. This $S^2$ fiber is quantized and therefore 
admits only finitely many harmonics, which transmute into higher spin modes on $\cM^{3,1}$
due to the twisted bundle structure. All this is automatic on the 
matrix background under consideration.

For reasons of 
transparency and simplicity the analysis is performed in the semi-classical Poisson limit,
where spacetime is described by a classical manifold carrying extra structure which is underlying 
the noncommutativity. This case is already very interesting in its own right,
and since most computations are based on the Lie-algebraic 
structures, 
most steps would go through  in the noncommutative case with  minor modifications.
The classification of modes is literally the same due to the $SO(4,2)$-covariant 
quantization map $\cQ$ \eq{quantization-map}, 
and the no-ghost result is expected to hold also in the 
 non-commutative case up to the cutoff.

However, there is one complication. Due to the FLRW geometry, the isometry group $SO(3,1)$
of the background comprises  space-like translations and rotations, but no boosts.
This means that local Lorentz invariance is only partially manifest. 
The usual 3+1-dimensional tensor fields accordingly decompose into several $SO(3,1)$ sub-sectors. 
This sub-structure is addressed in section \ref{sec:higher-spin}
which  leads to an organization reminiscent of but distinct from primary and secondary fields in CFT.
In any case, the underlying $SO(4,2)$ 
structure group is powerful enough to  control the kinematics.
There is in fact one advantage, since the absence of ghost is quite transparent
as the fields are naturally organized in space-like or radiation gauge.
In the end, local Lorentz invariance seems to be effectively respected and
all modes propagate in the exact same way, governed by a universal effective metric.
This is expected due to the manifest higher spin gauge symmetry,
which includes an analog of (modified) volume-preserving diffeos.
Nevertheless, the issue of local Lorentz invariance should  be clarified further.

The appearance of a higher-spin gauge theory is of course very reminiscent of Vasiliev's higher spin 
theory \cite{Vasiliev:1990en,Didenko:2014dwa}. 
Indeed as elaborated in previous papers \cite{Sperling:2018xrm,Sperling:2017gmy}, the present higher-spin kinematics 
is clearly related to the higher spin algebras of Vasiliev theory, although further clarification would
be desirable. There may also be a close relation with the Yang-Mills higher spin models 
considered in \cite{Bonora:2018ggh}. 
However there are clearly significant differences. In particular, the present model is defined by an 
action and features two scales, and IR scale given by the cosmic curvature and a UV scale 
where the noncommutativity becomes significant. The separation of these scales 
is determined by an integer $n$, and is therefore protected from quantum corrections.

The results of this paper thus provides a solid base for 
an interacting higher spin gauge theory
which appears to include gravity.
Although the model is intrinsically noncommutative, it should be viewed 
in the spirit of field theory. In contrast to holographic approaches 
space-time arises as a condensation of matrices here, whose dynamical fluctuations are
described by an effective  (almost-local) field theory.
Most importantly, the present model is well suited for quantization, 
as discussed in the outlook. The present
results should allow to study the quantum theory in detail. In particular, it would be 
very interesting to make contact with the numerical simulations of the 
IKKT model \cite{Nishimura:2019qal,Aoki:2019tby,Kim:2011cr}, 
which provide evidence that an expanding 3+1-dimensional space-time  indeed arises 
at the non-perturbative level.

The paper is rather technical and includes all the required details. 
To make it more accessible, the conceptual considerations 
are kept in the main text while many technical details are delegated to the appendix.
The main results are the classification of modes in sections \ref{sec:modes} and \ref{sec:completeness}, and 
the no-ghost theorem  in section \ref{sec:physical}. The required background is provided in 
sections \ref{sec:defs} and \ref{sec:semi-class}, which should make the paper
mostly self-contained. 
Finally, a disclaimer on  mathematical rigour:
The use of ``Theorem'', ``Lemma'' etc. 
should be understood in a semi-rigorous physicist's sense.
The statements are clear-cut and  justified with formal proofs,
but  full  mathematical precision is not attempted. 

%
%

%
%
%

\section{Basic definitions and algebraic structures}
\label{sec:defs}

The theory under consideration \cite{Sperling:2019xar} is based on  the Lie algebra $\mso(4,2)$
generated by $M^{ab}$,  
\begin{align}
  [M_{ab},M_{cd}] &=\im \left(\eta_{ac}M_{bd} - \eta_{ad}M_{bc} - 
\eta_{bc}M_{ad} + \eta_{bd}M_{ac}\right) \ 
 \label{M-M-relations-noncompact}
\end{align}
for $a,b=0,..,5$,
and a specific class of
unitary representations  $\cH_n$ known as doubletons or 
minireps \cite{Mack:1975je,Fernando:2009fq}, labeled by  $n\in\N$. These are short discrete series unitary irreps of 
$\mso(4,2)$, which have the distinctive feature that they remain 
irreducible if restricted to $SO(4,1)\subset SO(4,2)$.
They are also multiplicity-free lowest weight representations. The special case $n=0$ is excluded.

\paragraph{Fuzzy hyperboloid $H^4_n$.}

The fuzzy hyperboloid $H^4_n$ \cite{Hasebe:2012mz,Sperling:2018xrm} is defined in terms of $SO(4,1)$ vector operators   
\begin{align}
 X^a = r M^{a5}, \qquad a=0,...,4 \ .
\end{align}
Here $r$ has dimension length, and  $\eta_{ab} = \diag(-1,1,1,1,1,-1)$. 
Since $\cH_n$ remains irreducible for $SO(4,1)$, they satisfy
the relations of a 4-dimensional hyperboloid
\begin{align} 
 \eta_{ab} X^a X^b &=  - R^2 \one \  ,\qquad
 R^2 = \frac{r^2}{4}(n^2-4)
  \label{X-constraint}
\end{align}
where the sum is over $a,b=0,...,4$. It is easy to see that the 
$X^a$ generate the full algebra $\End(\cH_n)$, which
 transforms under  $SO(4,2)$  via
\begin{align}
 M^{ab} \triangleright\phi = [M^{ab} ,\phi]  \ , \qquad \phi \in \End(\cH_n) \ .
\end{align}
The quadratic Casimirs of $SO(4,2)$ and $SO(4,2)$ act on $\phi \in \End(\cH_n)$ as
\begin{align}
 C^2[\mso(4,2)] \phi &= \frac 12 [M^{ab},[M_{ab},\phi]], \qquad a,b=0,...,5 \nn\\
 C^2[\mso(4,1)] \phi &= \frac 12 [M^{ab},[M_{ab},\phi]] , \qquad a,b=0,...,4 
 \label{Casimirs-adjoint}
\end{align}
and  the  $SO(4,1)$- invariant matrix Laplacian on $H^4_n$ 
 \begin{align}
 \Box_H  \phi &= [X_a,[X^a,\phi]] = (- C^2[\mso(4,2)] + C^2[\mso(4,1)]) \phi \ 
 \label{Laplacian-def}
\end{align}
 encodes the geometry of $H^4$. 
All indices will be raised or lowered with the appropriate 
  $\eta^{ab}$ throughout the paper, and latin labels 
$a,b$ range from $0$ to $4$ (or possibly 5).
In particular, the following $SO(4,1)$- invariant  Casimir on $\End(\cH_n)$ \cite{Sperling:2018xrm,Sperling:2019xar} 
\begin{align}
\cS^2 &:= \frac 12\sum_{a,b\neq 5} [M_{ab},[M^{ab},.]] 
  + r^{-2} [X_a,[X^a,.]]  \nn\\
  &= 2 C^2[\mso(4,1)] - C^2[\mso(4,2)]
\label{Spin-casimir}
 \end{align}
 can be
interpreted as a spin observable on $H^4_n$, which satisfies
 \begin{align}
 [\cS^2,\Box_H] = 0    \ .
\end{align}
Hence $\Box_H$ and $\cS^2$ can be simultaneously diagonalized, and $\End(\cH_n)$ decomposes into \cite{Sperling:2018xrm}
\begin{align}
 \End(\cH_n) =\cC = \cC^0 \oplus \cC^1 \oplus \ldots \oplus  \cC^n\qquad \text{with} 
 \qquad \cS^2|_{\cC^s} = 2s(s+1)  \ .
 \label{End-decomp-0}
\end{align}
We will see that $\cC^0$ describes the 
space of (scalar) functions on $H^4_n$,  while $\cC^s$ describes spin $s$ modes on $H^4_n$.
The origin of this higher spin structure can be understood by noting that 
$\End(\cH_n)$ should be interpreted as 
  {\em quantized algebra of functions on  $\C P^{1,2}$}, 
which is an equivariant\footnote{i.e. $SO(4,1)$ acts on the entire bundle in a way consistent with the bundle projection.} 
$S^2$-bundle over $H^4$. 
This is best understood in terms of 
coherent states, which are defined as follows: let
\begin{align}
 |x_0\rangle := |0\rangle \qquad \in \cH_n
\end{align}
be the lowest weight state. This is an optimally 
localized state\footnote{In a suitable sense, cf.  \cite{Perelomov:1986tf}, 
or \cite{Schneiderbauer:2016wub} for a discussion in a similar context.} 
at the  ''south pole`` 
of $H^4$, with $\langle x_0|X^a|x_0\rangle = x_0 = R(\frac{n}{2}+1,0,0,0,0)$. 
Then the  coherent state $|x\rangle = g\triangleright|x_0\rangle  \in \cH_n$ is defined by a rotation
$g\in SO(4,1)$ which rotates $x_0$ into $x\in H^4$. Since the stabilizer group of $x_0\in H^4$ is $SO(4)$, 
the expectation values 
\begin{align}
x^a = \langle x|X^a|x\rangle
\label{Hopf-coherent}
\end{align}
span  $H^4 \cong SO(4,1)/SO(4)$.
However there is a hidden fiber bundle over $H^4$, which arises from the fact that $\cH_n$ is  
a representation of $\msu(2,2)\cong \mso(4,2)\supset \mso(4,1)$. Then  the 
coherent states sweep out the space
\begin{align}
 \{|p\rangle = g\triangleright|0\rangle, \ g\in SU(2,2)\} \cong SU(2,2)/SU(2,1) = \C P^{1,2} \times U(1) \ .
\end{align}
Here $\C P^{1,2}$ is a 6-dimensional coadjoint orbit of $SU(2,2)$, 
which is a $S^2$ bundle over $H^4$ via the Hopf map \eq{Hopf-coherent}.
The fiber describes in fact a fuzzy $S^2_n$ spanned by the stabilizer $SU(2)_L$ of $x_0\in H^4$ acting on $|0\rangle$,
which spans an $n+1$-dimensional irrep, leading to the truncation in \eq{End-decomp-0}.
For more details we refer to \cite{Sperling:2018xrm}. The extra $U(1)$ is just the phase 
of the coherent states on $\C P^{1,2}$.

Using these coherent states, we can write down a natural $SO(4,2)$-equivariant
quantization map from the classical space of functions on $\C P^{1,2}$
to the noncommutative or fuzzy functions $End(\cH_n)$:
\begin{align}
 \cQ: \quad \cC(\C P^{1,2}) &\to End(\cH_n)  \nn\\
 \phi(p)  &\mapsto \hat \phi := \int\limits_{\C P^{1,2}} \phi(p) 
\left|p\right\rangle \left\langle p\right| \ .
 \label{quantization-map}
\end{align}
Here $\C P^{1,2}$ is equipped with the canonical $SO(4,2)$-invariant  measure.
This map is essentially one-to-one up to a cutoff \cite{Sperling:2018xrm}, 
mapping square-integrable functions to Hilbert-Schmidt operators.
The inverse map (up to normalization \& cutoff) is given by the symbol 
\begin{align}
 \hat \phi \in End(\cH_n) \mapsto 
\langle p| \hat \phi |p\rangle \ = \phi(p) \ \in \  \cC(\C P^{1,2}) \ .
 \label{quantization-inverse}
\end{align}
Hence $End(\cH_n)$ decomposes into the same unitary irreps as  $L^2(\C P^{1,2})$ below the cutoff,
and the harmonics on the $S^2_n$ fiber
lead to \eq{End-decomp-0}.
Since $\cQ$ respects $SO(4,2)$, the generators act as
\begin{align}
  [M^{ab},\cQ(\phi)] &= \cQ(i\{m^{ab},\phi(x)\})
  \label{Q-intertwiner}
\end{align}
where $\{m^{ab},.\}$ implements the $SO(4,2)$ action on  $\cC(\C P^{1,2})$ via the Poisson bracket arising 
from the canonical (Kirillov-Kostant-Souriau) symplectic structure. This Poisson bracket is defined through
the Lie algebra relations \eq{M-M-relations-noncompact}
for the embedding functions $m^{ab}:\ \C P^{1,2}\hookrightarrow\mso(4,2)\cong\R^{15}$, 
replacing $[.,.]$ by $i\{.,.\}$.
This replacement will be called {\em semi-classical limit} indicated by $\sim$.
In particular, it is easy to see that $M^{ab} = \cQ(m^{ab})$ and 
$X^a = \cQ(x^a)$ (up to normalization).

Due to the intertwiner property of $\cQ$, most of the (Lie-algebraic) computations   
carried out at the Poisson level carry over immediately to the full non-commutative (NC) case in $End(\cH_n)$. 
For example, the Casimirs and Laplacian are respected:
\begin{align}
  [M^{ab},[M_{ab},\cQ(\phi)]]  &= \cQ(-\{M^{ab},\{M_{ab},\phi\}\}),  \nn\\
  \Box_H \cQ(\phi) &=  \cQ(\Box_H\phi) \ , 
 \label{Laplacian-intertwiner}
\end{align}
where  $\Box_H\phi = -\{x^a,\{x_a,\phi\}\}$ on the rhs
is the Laplacian on $H^{4}$.
Thus even though we will mostly work in the 
semi-classical case, most of the results  carry over immediately to the NC case.

\paragraph{Fuzzy space-time $\cM^{3,1}_n$.}

The main space of interest here is  
the fuzzy or quantum space-time $\cM^{3,1}_n$, which is  generated by the
$X^\mu, \ \mu = 0,...,3$,  dropping the $X^4$ generator of  $H^4_n$. 
Then
\begin{align}
 \eta_{\mu\nu} X^\mu X^\nu &=  - R^2 \one - X_4^2 \ ,
\end{align}
and  greek labels $\mu,\nu$ etc. will run from $0$ to $3$ throughout the paper.
Dropping the $X^4$ generator corresponds to a projection  of $H^4$ to $\R^{3,1}$,
so that $\cM^{3,1}_n$ should be interpreted as 2-sheeted hyperboloid, 
as sketched in figure \ref{fig:projection}. 
\begin{figure}[h]
\hspace{2cm} \includegraphics[width=0.5\textwidth]{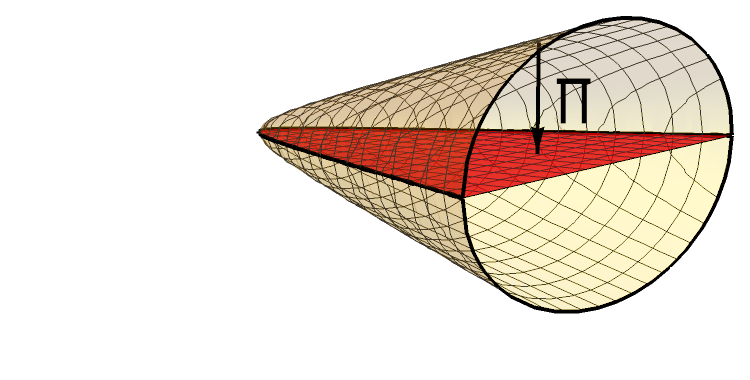}
 \caption{Projection $\Pi$ from $H^4$ to $\cM^{3,1}$  with 
Minkowski signature.}
 \label{fig:projection}
\end{figure}
This interpretation is substantiated via the  matrix d'Alembertian
\begin{align}
 \Box  \phi &=  [T_\mu,[T^\mu,\phi]] =  (C^2[\mso(4,1)] - C^2[\mso(3,1)])\phi \ ,
 \label{dAlembert-fuzzy-def}
\end{align}
which encodes an  $SO(3,1)$-invariant d'Alembertian for $\cM^{3,1}$
 with Lorentzian structure\footnote{It is natural to wonder about the Sitter solutions.
 While this is possible in principle \cite{Gazeau:2009mi,Buric:2017yes}, 
 $End(\cH)$ would imply a non-compact internal fiber and infinitely many dof per unit volume. This is avoided here.}, where 
 \begin{align}
 T^\mu = \frac{1}{R} M^{\mu 4} \ .
\end{align}
It is easy to see that the $X^\mu$ alone generate the full algebra $End(\cH_n)$,
which can now be interpreted as quantized functions on a $S^2$-bundle over $\cM^{3,1}$. 
They satisfy the commutation relations
\begin{align}
 [X^\mu,X^\nu] =: i\Theta^{\mu\nu} = -i r^2 M^{\mu\nu} \ .
 \label{XX-CR}
\end{align}
It turns out that $\Theta^{\mu\nu}$ is related to $T^\mu$ (cf. \eq{theta-P-relation}), which
satisfy the commutation relations 
\begin{align}
 [T^\mu,T^\nu] = -\frac{i}{r^2R^2}\Theta^{\mu\nu}, \qquad
 [T^\mu,X^\nu] = \frac{i}{R} \eta^{\mu\nu}\, X_4 \ .
  \label{XT-CR}
\end{align}
These generators satisfy further constraints due to the special representation $\cH_n$.
To simplify these relations we will focus on the semi-classical (Poisson) 
limit $n \to \infty$ from now on, working with commutative functions of $x^\mu\sim X^\mu$ and $t^\mu \sim T^\mu$, 
but keeping the Poisson or symplectic structure  $[.,.] \sim i \{.,.\}$ encoded in $\theta^{\mu\nu}$.

In order to have a well-defined action,
we will  consider modes on $\cM^{3,1}$ which are square-integrable, in the sense that 
the $SO(4,2)$-invariant inner product is finite,
\begin{align}
0 < \ \langle\phi,\phi'\rangle := \Tr {\phi}^\dagger\phi'
 \ \sim \int_{\C P^{1,2}} \, {\phi}^* \phi'   \ < \infty \ 
\label{inner-basic-def}
\end{align}
where functions $\phi \in L^2(\C P^{2,1})$ are identified with operators $End(\cH_n)$
via \eq{quantization-map}. 
The measure is the symplectic volume form
$\Omega = \frac{(2\pi)^3}{3!}\omega^{\wedge 3}$ on  $\C P^{1,2}$, which is dropped.
All integrals in the paper are understood in this sense, unless stated otherwise.
Accordingly, $\phi\in L^2(\C P^{1,2})$ belongs to some unitary representation of $SO(4,2)$.

Since $SO(4,2)$ is the conformal group on $\R^{3,1}$, one might hope to 
apply  CFT concepts  such as conformal primaries etc.
Indeed $\cH_n$ is a lowest-weight module with ground state $|0\rangle$ which
is an eigenstate of $D = X^4$, whose eigenvalues are raised and lowered with 
 $M^{\mu 5} \pm i M^{\mu 4}$.
However, the main object of interest  is  $End(\cH_n) \cong \cH_n \otimes \cH_n^*$, 
and the square-integrable modes
consists of principal series modules rather than highest or lowest weight modules. 
Therefore the familiar concepts from CFT are not useful here.
Instead   we will develop some more suitable 
structures  in section \ref{sec:higher-spin} which replace these concepts  to some extent.

%
\section{Semi-classical structure of \texorpdfstring{$\cM^{3,1}$}{M(3,1)}}
\label{sec:semi-class}

In the semi-classical limit, the generators $x^\mu$ and $t^\mu$ satisfy 
the following constraints   \cite{Sperling:2018xrm}
\begin{subequations}
\label{geometry-H-M}
\begin{align}
 x_\mu x^\mu &= -R^2 - x_4^2 = -R^2 \cosh^2(\eta) \, , 
 \qquad R \sim \frac{r}{2}n   \label{radial-constraint}\\
 t_{\mu} t^{\mu}  &=  r^{-2}\, \cosh^2(\eta) \, \\
 t_\mu x^\mu &= 0 \ \label{x-t-orth}
\end{align}
\end{subequations}
which arise from the special properties of $\cH_n$. 
We will interpret $x^\mu:\, \cM^{3,1}\hookrightarrow \R^{3,1}$ as Cartesian  
coordinate functions.
Here $\eta$ is a global time parameter defined via
\begin{align}
 x^4 = R \sinh(\eta) \ 
 \label{x4-eta-def}
\end{align}
which defines a foliation of $\cM^{3,1}$ into space-like surfaces $H^3$; this 
will be related to the scale parameter of a FLRW cosmology \eqref{a-eta} with $k=-1$.
Note that $\eta$ distinguishes the two degenerate sheets of $\cM^{3,1}$, 
cf. figure \ref{fig:projection}. 
The $t^\mu$ generators clearly describe the $S^2$ fiber over $\cM^{3,1}$, which is 
 space-like due to \eq{x-t-orth}. 
These generators satisfy the Poisson brackets
\begin{align}
 \{x^\mu,x^\nu\} &= \theta^{\mu\nu}  = - r^2 R^2 \{t^\mu,t^\nu\}  ,  \nn\\
 \{t^\mu,x^\nu\} &= \frac{x^4}{R} \eta^{\mu\nu} \label{Poisson-brackets} \ .
\end{align}
The Poisson tensor $\theta^{\mu\nu}$ can be expressed in terms of $t^\mu$ via  \cite{Sperling:2018xrm}
\begin{align}
 \theta^{\mu\nu} &= \frac{r^2}{\cosh^2(\eta)} 
   \Big(\sinh(\eta) (x^\mu t^\nu - x^\nu t^\mu) +  \epsilon^{\mu\nu\a\b} x_\a t_\b \Big) \ ,
 \label{theta-P-relation} \, 
\end{align}
and it satisfies the constraints
\begin{subequations}
\label{geometry-H-theta}
\begin{align}
 t_\mu \theta^{\mu\a} &= - \sinh(\eta) x^\a , \\
 x_\mu \theta^{\mu\a} &= - r^2 R^2 \sinh(\eta) t^\a , \label{x-theta-contract}\\
 \eta_{\mu\nu}\theta^{\mu\a} \theta^{\nu\b} &= R^2 r^2 \eta^{\a\b} - R^2 r^4 
t^\a t^\b + r^2 x^\a x^\b  
%
\end{align}
\end{subequations}
as well as self-duality relations given in Lemma \ref{lem:selfdual-2}.

We observe that due to the relation \eqref{Poisson-brackets},  the derivations or Hamiltonian vector fields
\begin{align}
 -i[T^\mu,.] \sim \{t^\mu,.\}  \ 
\end{align}
play the role of momentum generators on $\cM^{3,1}$, which satisfy
\begin{align}
 \{t_\mu,\phi\} =   \sinh(\eta)\del_\mu \phi 
 \label{del-t-rel}
\end{align}
for $\phi = \phi(x)$. There is  also an
$SO(3,1)$-invariant 
global time-like vector field 
\begin{align}
 \t := x^\mu \del_\mu  .
 \label{tau-def}
\end{align}

\subsection{Effective metric and d'Alembertian}
\label{sec:metric}

In the matrix model framework, the effective metric on any given background 
is obtained by rewriting the
kinetic term in covariant form \cite{Sperling:2019xar,Steinacker:2010rh}.
For the $\cM^{3,1}$ background  under consideration, this is
\begin{align}
S[\phi] =  - \Tr [T^\mu,\phi][T_\mu,\phi] 
\sim \int d^4 x\,\sqrt{|G|}G^{\mu\nu}\del_\mu\phi \del_\nu \phi \ 
\label{scalar-action-metric}
\end{align}
and one obtains
\cite{Sperling:2019xar}
\begin{align}
   G^{\mu\nu} &= \sinh^{-3}(\eta)\, \g^{\mu\nu},
   \qquad  \g^{\a\b} = \eta_{\mu\nu}\theta^{\mu\a}\theta^{\nu\b} 
  = \sinh^2(\eta) \eta^{\a\b} \ 
   \label{eff-metric-G}
\end{align}
dropping some irrelevant constant.
This metric can be recognized as 
$SO(3,1)$-invariant FLRW metric with signature $(-+++)$, 
\begin{align}
 \dd s^2_G = G_{\mu\nu} \dd x^\mu \dd x^\nu 
   &= -R^2 \sinh^3(\eta) \dd \eta^2 + R^2\sinh(\eta) \cosh^2(\eta)\, \dd \Sigma^2 \ \nn\\
   &= -\dd t^2 + a^2(t)\dd\Sigma^2 \, .
   \label{eff-metric-FRW}
\end{align}
We can read off the cosmic scale parameter $a(t)$  
\begin{align}
a(t)^2 &=  R^2\sinh(\eta) \cosh^2(\eta) \ \stackrel {t\to\infty}{\sim}  \  R^2\sinh^3(\eta) ,  \label{a-eta}\\
\dd t &=  R \sinh(\eta)^{\frac{3}{2}} \dd\eta \ 
\end{align}
which leads to $a(t) \sim \frac 32 t$ for late times.
This metric can also be extracted from the 
''matrix`` d'Alembertian \eq{dAlembert-fuzzy-def} 
\begin{align}
 \Box := [T^\mu,[T_\mu,.]] \  \sim \ -\{t^\mu,\{t_\mu,.\}\}  \ 
   = \sinh^{3}(\eta) \Box_G
 \label{Box-def}
\end{align}
acting on  $\phi \in \cC^0$, 
where\footnote{It is interesting to observe that the invariant volume form 
$d^4 x\, \frac 1{x^4}$ arising from the symplectic volume form \cite{Sperling:2019xar}
does {\em not} coincide with the 
Riemannian volume  $d^4x\, \sqrt{|G|}$. Accordingly, spin 1 gauge transformations
are diffeomorphisms which preserve $\Omega$ rather than the Riemannian volume.}
$\Box_G = -\frac{1}{\sqrt{|G|}}\del_\mu\big(\sqrt{|G|}\, G^{\mu\nu}\del_\nu\big)$.

\subsection{Higher spin sectors  on $\cM^{3,1}$  and $H^3$ substructure}
\label{sec:higher-spin}

Due to the extra generators $t^\mu$, we obtain explicitly the decomposition \eq{End-decomp-0} of the 
full algebra of functions 
into sectors $\cC^s$ which correspond to spin $s$ harmonics on the $S^2$ fiber:
\begin{align}
    \End(\cH_n) =\cC = \cC^0 \oplus \cC^1 \oplus \ldots \oplus  \cC^n\qquad \text{with} 
\quad 
  \cS^2|_{\cC^s} = 2s(s+1)  \, .
  \label{EndH-Cs-decomposition}
\end{align} 
In the semi-classical limit, the  $\cC^s$  are modules\footnote{The module structure also applies 
in the noncomutative case if $\cC^0$ is equipped with the commutative but non-associative pull-back 
algebra structure, due to (3.46) in \cite{Sperling:2018xrm}. Useful discussions with S. Rangoolam 
are acknowledged.}   
over $\cC^0$, which should be viewed as 
sections of (higher spin) bundles over $H^4$. 
More specifically,
$\cC^s$  can be viewed  as totally symmetric traceless space-like 
rank $s$ tensor fields on $\cM^{3,1}$
\begin{align}
 \phi^{(s)} = \phi_{\mu_1 ... \mu_s}(x) t^{\mu_1} ... t^{\mu_s} , 
 \qquad \phi_{\mu_1 ... \mu_s} x^{\mu_i} = 0
 \label{Cs-explicit}
\end{align}
due to  \eqref{geometry-H-M}.
The underlying $\mso(4,2)$ structure provides an $SO(3,1)$ -invariant derivation
\begin{align}
 D\phi &:= \{x^4,\phi\} \ 
  = r^2 R^2 \frac{1}{x^4} t^\mu \{t_\mu,\phi\}   
  = -\frac{1}{x^4}x_\mu\{x^\mu,\phi\} \nn\\
   &= r^2 R\, t^{\mu_1}\ldots t^{\mu_s} t^\mu \, \nabla^{(3)}_\mu\phi_{\mu_1\ldots\mu_s}(x) 
 \label{D-properties}
\end{align}
where  $\nabla^{(3)}$ is the covariant derivative along the 
space-like $H^3 \subset \cM^{3,1}$.
Hence $D$ relates the different spin sectors in \eqref{EndH-Cs-decomposition}:
\begin{align}
  D = D^- + D^+: \ \cC^{s} \ &\to \cC^{s-1} \oplus \cC^{s+1}, \qquad
    D^\pm \phi^{(s)} = [D\phi^{(s)}]_{s\pm 1} \ 
    \label{D-properties-2} 
\end{align}
where $[.]_{s}$ denotes the projection to $\cC^{s}$ defined through 
\eqref{EndH-Cs-decomposition}.
It is easy to see that 
\begin{align}
(D^+)^\dagger = - D^-
\end{align}
w.r.t. the inner product \eq{inner-basic-def}.
Explicitly, $D x^\mu = r^2 R\, t^\mu$ and $D t^\mu =  R^{-1}\, x^\mu$. 
In particular,  $\cC^{(s,0)} \subset \cC^{s}$ is the space of divergence-free traceless
space-like rank $s$ tensor fields on $\cM^{3,1}$, in radiation gauge.

The $D^\pm$ operators allow to organize the $\cC^{s}$ modes
into primals and descendants
\begin{align}
 \boxed{ \
\begin{aligned}
 \cC^{(s,0)} &= \{\phi\in\cC^s; \ D^- \phi = 0 \} &  \mbox{... primal fields}\ \nn\\
 \cC^{(s+k,k)} &= (D^+)^k\cC^{(s,0)} &  \mbox{... descendants} \quad 
\end{aligned}
}
\end{align}
cf. \cite{Sperling:2018xrm}. 
This is somewhat reminiscent of primaries in CFT but the concepts are different.
The primals\footnote{In contrast to primaries in CFT, these are not annihilated by the 
$M^{\mu 5} - i M^{\mu 4}$ operators which lowers the eigenvalue of $D$.
Primal fields do not have an eigenvalue of $D$.} have minimal spin $\cS^2$, 
which is raised and lowered by $D^\pm$; they
correspond to divergence-free spin $s$ tensor fields on $H^4$ in space-like gauge, 
i.e. tangential to $H^3$.
The descendants are {\em space-like} derivatives of the primal fields.
However they should not be considered as 
pure gauge fields, and they are part of the physical Hilbert space. 

This sub-structure encodes two different concepts on the FRW background, which 
arise from the presence of a space-like foliation:
$\cS^2=2s(s+1)$ measures the 4-dimensional spin on $H^4$,
while $(s-k)$ measures the 3-dimensional spin of $\cC^{(s,k)}$ on $H^3$. 
Nevertheless, local Lorentz invariance should be largely restored through  gauge invariance, 
which contains  $\Omega$- volume-preserving diffeos. 
Although these act  in a somewhat unusual manner \cite{Sperling:2018xrm},  
one may expect that they protect the model from 
pathological Lorentz violation. This will be illustrated by the fact that all modes 
propagate according to the same effective d'Alembertian $\Box$.
In  physical terms,
an $SO(4,1)$ irrep $\phi^{(s)} \in \cC^{s}$ encodes a series of massless modes $\phi^{(s,k)}$
in radiation gauge with spin $s-k$ for $k=0,...,s$.

\paragraph{Averaging over $S^2$.}

We can interpret the projection  $[f(t)]_0$  on the scalar sector $\cC^0$ as an averaging
or integral over the $S^2$ fiber described by the $t$ generators,
\begin{align}
 [f(t)]_0 &= \frac{1}{4\pi r^{-2}\cosh(\eta)^2}\int_{S^2_t}  f(t) 
 \label{average-int-def}
\end{align}
such that $[1]_0 = 1$.
This gives the formula
\begin{align}
 [t^\mu t^\nu]_0 &= \frac{\cosh^2(\eta)}{3r^2} P_\perp^{\mu\nu}
   \label{kappa-average}
\end{align}
where
\begin{align}
P_\perp^{\mu\nu} &:= \eta^{\mu\nu} + \frac{1}{R^2\cosh^2(\eta)} x^\mu x^\nu
 \label{H3-projector}
\end{align}
is the  positive semi-definite projector tangential to the space-like $H^3$. 
Furthermore, we have \cite{Sperling:2019xar}
\begin{subequations}
\label{averaging-relns}
\begin{align}
\left[t^{\a}  \theta^{\mu\nu}\right]_{0} 
  &= \frac{1}{3} \Big(\sinh(\eta) ( \eta^{\a\nu} x^\mu - \eta^{\a\mu} x^\nu)  +  x_\b \varepsilon^{\b 4\a\mu \nu}  \Big)\,, \\
 [t^{\mu_1} \ldots t^{\mu_4}]_0 &= 
  \frac 35 \big([t^{\mu_1}t^{\mu_2}][t^{\mu_3} t^{\mu_4}]_0 
   + [t^{\mu_1}t^{\mu_3}][t^{\mu_2} t^{\mu_4}]_0  + 
[t^{\mu_1}t^{\mu_4}][t^{\mu_2} t^{\mu_3}]_0\big) \,. 
\end{align}
\end{subequations}
This also provides a formula for the projection on $\cC^1$,
\begin{align}
[t^{\a} t^\b t^\g ]_{1} 
  &=  \frac 35 \Big([t^{\a} t^\b]_{0} t^\g + t^{\a} [t^\b t^\g]_{0} 
   +  t^\b[t^{\a} t^\g ]_{0}  \Big) \ .
   \label{average-3}
\end{align}
The general Wick theorem
\begin{align}
 [t^{\a_1} ...  t^{\a_{2s}}]_{0} 
  &= a_{2s}\sum [t^{\a_i}t^{\a_j}]  ... [t^{\a_k}t^{\a_l}]
\end{align}
summing over all
contractions can be obtained recursively from 
Lemma \ref{lemma-wick-1} in the appendix.

\subsection{$\cC^s$ and higher spin on  $H^4$}
\label{sec:Cs-higherspin}

In the previous section, $\cC^s$  was identified with space-like
spin $s$ tensor fields on $\cM^{3,1}$.
On the other hand, $\cC^s$ can also be identified 
with totally symmetric, traceless, divergence-free
tangential rank $s$ 
tensor fields $\phi_{a_1\ldots a_s}$ on $H^4$  via \cite{Sperling:2018xrm}
\begin{align}
  \phi^{(s)} = \{x^{a_s},\ldots\{x^{a_{1}},\phi_{a_1\ldots a_{s}}\}\ldots \} \ 
\in \cC^{s} \ .
\label{phis-rep-deriv-tensorfields}
\end{align}
Conversely, a totally symmetric tensor field on $H^4$ can be extracted from $\phi^{(s)}$ via
\begin{align}
 \tilde\phi_{a_1a_2\ldots a_{s}} := \{x_{a_1},...\{x_{a_s},\phi^{(s)}\}_- ...\}_-
  = \cA_{a_1}^{(-)}[ ... [ \cA_{a_s}^{(-)}[\phi^{(s)}]...] \quad \in \cC^{0}
  \label{tensorfields-Am-formula}
\end{align}
anticipating the notation \eq{A2-mode-ansatz},
which is tangential due to $ x_a\{x^a,\phi\} = 0$.
One can also define  intermediate tensor fields such as
\begin{align}
 \phi_{a_s}^{(s)} = \{x^{a_{s-1}},\ldots ,\{x^{a_{1}},\phi_{a_1\ldots a_{s-1} a_s}\}\ldots \}
     \ \in \cC^{s-1}
 \label{phi-a-s}
\end{align}
which are  tangential and
associated to the underlying irreducible rank $s$ tensor field. 
 Using Lemma \ref{lemma-alpha}  we obtain
 \begin{align}
 -\{x^{a}, \tilde\phi_{a}\} 
  &= \a_1(\Box_H - 4r^2)\phi^{(1)} 
 \end{align}
  and similarly using \eq{BoxH-A-relation}
 \begin{align}
  -\{x^{a_1}, \tilde\phi_{a_1a_2}\}
  &= \a_1(\Box_H - 4r^2) \cA_{a_2}^{(-)}[\phi^{(2)}] \nn\\
  &= \a_1 \cA_{a_2}^{(-)}[(\Box_H - 2(2+2)r^2)\phi^{(2)}] 
 \end{align}
and in general
\begin{align}
 -\{x^{a_1}, \tilde\phi_{a_1a_2\ldots a_{s}}\}
  &= \a_1\cA_{a_2}^{(-)}[ ... [ \cA_{a_s}^{(-)}[(\Box_H - r^2(s^2+s+2))\phi^{(s)}]...] \ . 
  \end{align}
Iterating this, we recover \eq{phis-rep-deriv-tensorfields} up to some action of $\Box_H$,
\begin{align}
 (-1)^s \{x^{a_1}, ...\{x^{a_s} \tilde\phi_{a_1a_2\ldots a_{s}}\}\ldots \}
  &= \cO(\Box_H) \phi
  \label{O(BoxH)-id}
\end{align}
where $\cO(\Box_H)$ is a positive and hence invertible operator provided
\begin{align}
   \Box_H > r^2 (s^2+s+2) \qquad \mbox{on} \ \cC^{s} \ . 
 \label{admissible}
\end{align}
We will see that 
this is indeed the case for admissible modes, because \eq{BoxH-estimate-onshell-principal} gives 
\begin{align}
  r^{-2} \Box_H > s^2+s+ 9/4 > s^2+s+2 \ .
\end{align}
Therefore the maps \eq{phis-rep-deriv-tensorfields} and \eq{tensorfields-Am-formula} are 
inverse of each other up to normalization.

%
%

\paragraph{Relation with higher spin field strength.}

It is instructive to work out these formulae more explicitly using 
the tangential derivatives on $H^4$ \cite{Sperling:2018xrm}
\begin{align}
  \eth^a \phi \coloneqq \frac{1}{r^2 R^2}x_b \{\theta^{ab} ,\phi\}, \qquad 
\phi\in\cC \; ,
\label{eth-def}
\end{align}
which satisfy
\begin{align}
 \{x^a,\cdot\} &= \theta^{ab}\eth_b \,    \qquad 
 x^a\eth_a = 0,  \nn\\
  \eth^a x^b &= P^{ab} = \eta^{ab} + \frac{1}{R^2} x^a x^b, \nn\\
 \eth^a \theta^{cd} &= \frac{1}{R^2}(-\theta^{ac}x^d + \theta^{ad}x^c) . 
 \label{eth-id}
\end{align}
It is then straightforward to show (cf. \cite{Sperling:2018xrm})
\begin{align}
 \phi_{a_2\ldots a_{s}} &= \{x^{a_1},\phi_{a_1a_2\ldots a_s}\} 
  = \theta^{a_1b_1}\eth_{b_1}\phi_{a_1a_2\ldots a_{s}} \qquad \in \cC^{1} \nn\\
  &\ \vdots \nn\\
 \phi^{(s)}
  &= \theta^{a_1b_1}...\theta^{a_{s}b_{s}}\eth_{b_{s}}...\eth_{b_1}\phi_{a_1a_2\ldots a_{s}}  
  =: \theta^{a_1b_1}...\theta^{a_{s}b_{s}} \cF_{a_1...a_d;b_1...b_s}
 \end{align}
 noting that $\theta^{a_1b_1}\theta^{a_2b_2} = r^2 R^2 P^{b_1b_2}$. 
Here
\begin{align}
 \cF_{b;a} &= \eth_a\phi_b - \eth_b\phi_a   \nn\\   
 & \ \vdots \nn\\
  \cF_{a_1...a_d;b_1...b_s} &= \eth_{[b_{s}}...\eth_{b_1}\phi_{a_1a_2\ldots a_{s}]} \ .
\end{align}
The last term is a generalization of the curvature or field strength
tensor, which has the symmetry of the Young tableau
${\tiny \Yvcentermath1 \young(aaa,bbb) }$.  This provides a link with Vasiliev's higher
spin theory \cite{Vasiliev:1990en,Didenko:2014dwa}; 
see also \cite{Sperling:2018xrm} for further related discussion.
However, the realization \eq{Cs-explicit} is more transparent.

\subsection{Admissible tensor fields and positivity}
\label{sec:admissible}

This section discusses integrability and positivity aspects, and can be skipped at first reading.

In order to have well-defined kinetic energy and similar quantities, we need 
some refinements of the integrability condition \eq{inner-basic-def}.
Consider for example
\begin{align}
0 \leq \int \{x_{a},\phi\}\{x^{a},\phi\}
  = \int \phi\Box_H\phi \ .
\end{align}
The lhs is positive
since  $\{x^{a},\phi\}$ is tangential to $H^4$, due to $x_a \{x^a,.\} = 0$. Therefore
\begin{align}
 \Box_H > 0 
  \label{BoxH-estimate}
\end{align}
must be positive definite. This argument 
carries over to $End(\cH_n)$ (for Hilbert-Schmidt-operators) using $\cQ$ \eq{quantization-map}.
However, we will need a slightly stronger bound, which can be obtained from group theory.
A heuristic argument for such an improved bound is as follows:
consider the $SO(4,1)$ invariant expression
\begin{align}
 -\int \{M_{ab},\phi\}\{M^{ab},\phi\}
  =  \int \phi  \{M^{ab}, \{M_{ab},\phi\}\}
  = -2\int \phi C^2[\mso(4,1)] \phi
\end{align}
for $a,b=0,...,4$. At the reference point $\xi = (R,0,0,0,0) \in H^4$,
the sum on the lhs separates as
\begin{align}
 - \{M_{ab},\phi\}\{M^{ab},\phi\}
  \stackrel{\xi}{=}  
  2\sum_a\{M^{a0},\phi\} \{M^{a0},\phi\}\ - \sum_{a,b=1}^4\{M^{ab},\phi\} \{M^{ab},\phi\} \ .
\end{align}
The first term is manifestly positive, while the second term  is negative and
involves the 
local stabilizer $SO(4)$ acting on $\phi$. 
Hence the second term measures the spin, and we expect heuristically 
that it contributes $-2s(s+1)$, if we forget about curvature corrections for the moment.
This would give the estimate $-C^2[\mso(4,1)] \geq -s(s+1)$ for integrable modes.

The precise statements required are obtained from representation theory for 
{\em principal series} of unitary representations. They
describe the normalizable fluctuation modes in the present context, corresponding to a 
continuous basis for square-integrable wavefunctions on the hyperboloids, 
analogous to plane waves in the flat case. 
The (bosonic) principal series  of unitary representations $\Pi_{\nu,s}$ of $SO(4,1)$ 
are determined by the spin $s\in\N_0$ and the real (''kinetic``) parameter $\nu\in\R$.
They can be identified with spin $s$ wavefunctions on $H^4$. 
For these representations, the quadratic Casimir
satisfies the following bound  \cite{dixmier1961representations}
\begin{align}
-C^2[\mso(4,1)] = 9/4 + \nu^2 - s(s + 1) > 9/4  - s(s + 1) 
\label{admissible-so41}
\end{align}
assuming\footnote{Note that $\nu$ will {\em not} play the role of a mass in the present context.
The case $\nu=0$ would correspond to some extreme IR case and is ignored here.} $\nu\neq 0$.
This is clearly a refined version of the above heuristic argument, and it
entails via \eq{Spin-casimir} the following bound for $\Box_H$
\begin{align}
 r^{-2} \Box_H \phi^{(s)}  &=  2s(s+1) - C^2[\mso(4,1)] > s^2+s+9/4  
   \label{BoxH-estimate-onshell-principal}
\end{align}
which is slightly stronger than \eq{BoxH-estimate}.
This will imply that the higher spin modes in the present framework
are square-integrable over $H^4$ and form a Hilbert space, 
as discussed below.
We will denote modes which satisfy the condition
\eq{admissible-so41}, i.e. which consist of unitary principal series   
of $SO(4,1)$,
as {\bf admissible modes\footnote{It is interesting to observe using \eq{Spin-casimir}
that the admissible modes are 
precisely those with $C^2[\mso(4,2)] > 9/2$, and it
is  plausible that those are precisely
the principal series irreps of $SO(4,2)$ in $End(\cH_n)$. However, this will not be investigated here. 
There are of course functions (e.g. polynomial functions) which violate these bounds,
but they are not normalizable and not considered here.}}. 
This condition is preserved by $D^\pm$ due to \eq{admissible-D}.

It is interesting to compare this with the (bosonic) principal series unitary representations  
$\Pi_{p,s}$  of $SO(3,1)$, which are determined by the spin $s\in\N_0$  
and a kinetic parameter $p\in\R$. 
They can be identified with spin $s$ wavefunctions on $H^3$,
and satisfy the  bound \cite{naimark2014linear,browne1976irreducible} 
\begin{align}
 -C^2[\mso(3,1)] = p^2 -s^2 + 1 > -s^2+1 \ .
 \label{admissible-so31}
\end{align}
Even though the conditions \eq{admissible-so41} and \eq{admissible-so31} are a priori independent, 
they are closely related for on-shell modes here, i.e. modes satisfying
the on-shell condition \eq{on-shell-all}
\begin{align}
 0 = \Box = C^2[\mso(4,1)] - C^2[\mso(3,1)] \ .
\end{align}
Then the 3-dimensional condition \eq{admissible-so31}
is  slightly stronger than the 4-dimensional condition \eq{admissible-so41} except for $s=1$.
This means that on-shell wave-functions which are square-integrable over some time-slice $H^3$
are automatically integrable over the entire space-time, which is quite remarkable and helpful
 for a theory with time evolution.

In particular, it follows that for admissible modes $\phi \in\cC^s$, 
the  tensor field  $\tilde\phi_{a_1a_2\ldots a_{s}}$ defined in \eq{tensorfields-Am-formula}
are square-integrable with positive-definite inner product
and form a Hilbert space, since
\begin{align}
 \int \tilde\phi^{a_1a_2\ldots a_{s}} \tilde\phi_{a_1a_2\ldots a_{s}}   
 &= \int (-1)^s \phi \{x^{a_1}, ...\{x^{a_s} \tilde\phi_{a_1a_2\ldots a_{s}}\}...\} \
  = \int \phi \cO(\Box_H) \phi
  \label{inner-tensorfield-BoxH}
\end{align}
and $\cO(\Box_H)$ \eq{O(BoxH)-id}  is positive as shown above.
For the Minkowski case see Corollary \ref{cor-pos}.

%

%

%
\section{Matrix model and higher-spin gauge theory}
\label{sec:fluctuations}

Now we return to the  noncommutative setting, and 
define a dynamical model for the fuzzy $\cM^{3,1}$ space-time under consideration.
Consider a Yang-Mills matrix model  with  mass term,
\begin{align}
 S[Y] &= \frac 1{g^2}\Tr \Big([Y^\mu,Y^\nu][Y_{\mu},Y_{\nu}] \, 
  +\frac{6}{R^2} Y^\mu Y_\mu  \Big) \ . 
 \label{bosonic-action}
\end{align}
All indices will be raised and lowered with $\eta^{\mu\nu}$ in the following sections.
This includes in particular the  IKKT or IIB matrix model \cite{Ishibashi:1996xs} with mass term,
which is best suited for quantization because maximal supersymmetry protects from UV/IR mixing \cite{Minwalla:1999px}.
As observed in \cite{Sperling:2019xar}, $\cM^{3,1}$ is indeed a solution 
of this model\footnote{This  ''momentum'' embedding via $T^\mu$ has some similarity with 
the ideas in \cite{Hanada:2005vr} but avoids excessive dof and the associated ghost issues, 
cf. \cite{Sakai:2019cmj}.
The positive mass parameter in \eqref{bosonic-action} simply sets the scale of the background.
For negative mass parameter, $X^\mu$ would be a solution \cite{Steinacker:2017bhb}, 
but the fluctuation analysis would be less clear.}, through
\begin{align}
 Y^\mu = T^\mu \ .
\end{align}
Now consider  tangential
deformations of the above background solution, i.e.
\begin{align}
 Y^\mu = T^\mu  + \cA^\mu \ , 
\end{align}
where $\cA^\mu \in \End(\cH_n) \otimes \R^4$ is an arbitrary Hermitian fluctuation.
The  Yang-Mills action \eqref{bosonic-action}  can be expanded around the solution as
\begin{align}
 S[Y] = S[T]  +  S_2[\cA] + O(\cA^3) \ ,  
 \end{align}
 and the quadratic fluctuations are  governed by  
 \begin{align}
S_2[\cA] = -\frac{2}{g^2} \,\Tr \left( \cA_\mu 
\Big(\cD^2 -\frac{3}{R^2}\Big) \cA^\mu + \cG\left(\cA\right)^2 \right) .
\label{eff-S-expand}
\end{align}
This involves the vector d'Alembertian on $\cM^{3,1}$
\begin{align}
\cD^2 \cA =  \left(\Box  - 2\cI \right)\cA  
\label{vector-Laplacian}
\end{align}
(cf. \eq{Box-def})
which is an $SO(3,1)$ intertwiner, as well as 
\begin{align}
 \cI (\cA)^\mu := - [[ Y^\mu, Y^\nu],\cA_\nu] =  \frac{\im}{r^2 R^2} 
[\Theta^{\mu\nu},\cA_\nu] 
 \eqqcolon -\frac{1}{r^2 R^2}\tilde\cI (\cA)^\mu \ 
 \label{tilde-I-NC}
\end{align}
using \eqref{XT-CR}.
As usual in Yang-Mills theories, $\cA$ transforms under gauge transformations as
\begin{align}
 \d_\L\cA = -i[T^\mu  + \cA^\mu,\L] \sim \{t^\mu,\L\}  + \{\cA^\mu,\L\}
\end{align}
for any $\L\in\cC$,
and the scalar ghost mode
 \begin{align}
\cG(\cA) = -\im [T^\mu,\cA_\mu] \sim \{t^\mu,\cA_\mu\}   
 \label{gaugefix-intertwiner}
 \end{align}
should be removed to get a meaningful theory.
This is achieved by adding a gauge-fixing term $-\cG(\cA)^2$ to the action
as well as the corresponding Faddeev-Popov (or BRST) ghost. Then the quadratic 
action becomes 
\begin{align}
 S_2[\cA] + S_{g.f} + S_{ghost} &= -\frac{2}{g^2}\Tr\, 
\left( \cA_\mu \Big(\cD^2  -\frac{3}{R^2} \Big) \cA^\mu + 2 \obar{c} \Box 
c \right) \ 
\label{eff-S-gaugefixed}
\end{align}
where $c$ denotes the  BRST ghost; see e.g.\ \cite{Blaschke:2011qu} 
for more details.
%

%
\section{Fluctuation modes}
\label{sec:modes}

We should expand the vector modes into higher spin modes according to 
\eqref{EndH-Cs-decomposition}, \eqref{Cs-explicit}
\begin{align}
 \cA^\mu &= A^{\mu}(x) + A^{\mu}_\a(x)\, t^\a +   A^{\mu}_{\a\b}(x)\, t^\a t^\b + \ldots 
  \ \in \ \cC^0   \oplus  \cC^1  \oplus  \cC^2  \oplus \ \ldots
 \label{A-M31-spins}
\end{align}
However these are neither irreducible nor eigenmodes of $\cD^2$, and the goal of this section 
is to find explicitly all eigenmodes of $\cD^2$. This will be achieved using the $\mso(4,2)$
structure and suitable intertwiners.

\paragraph{Intertwiners.}

We recall
the $SO(3,1)$ intertwiners \eq{D-properties-2}
\begin{align}
 D^\pm: \quad \cC^{(n)}\otimes\R^4 &\to \cC^{(n\pm 1)}\otimes\R^4 \nn\\
   \cA_\mu &\mapsto  D^\pm \cA_\mu
\end{align}
It is easy to show using \eq{Box-D-relation} that they satisfy
the following intertwiner property  for $\cD^2$  \cite{Steinacker:2019dii}
\begin{align}
 \cD^2 D^+\cA^{(s)} &=  D^+(\cD^2+\frac{2s+2}{R^2}) \cA^{(s)},  \nn\\
 \cD^2 D^-\cA^{(s)} &=  D^-(\cD^2-\frac{2s}{R^2}) \cA^{(s)},  \qquad  \cA^{(s)} \in \cC^{(s)} \ .
  \label{D2-D+-relation}
\end{align}
In particular,
\begin{align}
 [\cD^2, D^+D^-] &= 0 \ .
\end{align}
We also recall the $SO(3,1)$  intertwiner \eq{tilde-I-NC}
\begin{align}
\tilde\cI: \quad \cC^{s}\otimes\R^4 &\to \cC^{s}\otimes\R^4 \nn\\
   \cA^\mu &\mapsto  \{\theta^{\mu\nu},\cA_\nu\}
\end{align}
which satisfies 
\begin{align}
 [\tilde\cI,D^\pm] = 0 \ .
 \label{D-I-comm}
\end{align}
In analogy to the $SO(4,1)$ case discussed in \cite{Sperling:2018xrm}, this is related to the 
total $\mso(3,1)$ Casimir of the vector fields via
\begin{align} 
 C^2[\mso(3,1)]^{(4)\otimes (ad)} = C^2[\mso(3,1)]^{(4)} + C^2[\mso(3,1)]^{(ad)}
  - \frac 2{r^2} \tilde\cI \ 
  \label{cI-Casimir-rel}
\end{align}
where $(ad)$ indicates the adjoint action \eq{Casimirs-adjoint}.
Hence  $\tilde\cI$ describes some kind of  ''spin-orbit`` mixing.

\subsection{Diagonalization of $\cD^2$}

In \cite{Sperling:2019xar}, three series of eigenmodes $\cA_\mu$ of $\cD^2$ were found, of the form
\begin{align}
\label{A2-mode-ansatz}
 \cA_\mu^{(g)}[\phi^{(s)}] &= \{t_\mu,\phi^{(s)}\}  \quad \in \cC^{s}\,,
\\
 \cA_\mu^{(+)}[\phi^{(s)}] &= \{x_\mu,\phi^{(s)}\}\big|_{s+1} \ \equiv  
\{x_\mu,\phi^{(s)}\}_+ \quad \in \cC^{s+1} \,, \\
 \cA_\mu^{(-)}[\phi^{(s)}] &= \{x_\mu,\phi^{(s)}\}\big|_{s-1} \  \equiv  
\{x_\mu,\phi^{(s)}\}_- \quad \in \cC^{s-1} \,
\end{align}
for any $\phi^{(s)}\in\cC^s$. However there should be another series, and to find it 
we re-derive the previous results 
in a more systematic way.
We start with the easy observation \cite{Sperling:2019xar} 
\begin{align}
 \cD^2 \cA_\mu^{(g)}[\phi] &= 
\cA_\mu^{(g)}\Big[\big(\Box+\frac{3}{R^2}\big) \phi\Big] \, .
 \label{puregauge-D2}  
\end{align}
This means that  $\cA_\mu^{(g)}[\phi]$ is an eigenmode of $\cD^2$ if $\Box\phi = \l \phi$. 
Using the intertwiner properties \eq{D2-D+-relation}, we  obtain new eigenmodes 
by acting with $D^\pm$. To organize this, observe using the Jacobi identity
\begin{align}
 D^+\cA_\mu^{(g)}[\phi^{(s)}] &= \cA_\mu^{(g)}[D^+\phi^{(s)}]  + \frac{1}{R} \cA_\mu^{(+)}[\phi^{(s)}] 
    \nn\\
 D^+D^+\cA_\mu^{(g)}[\phi^{(s)}] &= \cA_\mu^{(g)}[D^+D^+\phi^{(s)}]   + \frac{2}{R} \cA_\mu^{(+)}[D^+\phi^{(s)}]  
 \label{D+D+-relations-1}
 \end{align}
 etc., and similarly
 \begin{align}
 D^-\cA_\mu^{(g)}[\phi^{(s)}] &= \cA_\mu^{(g)}[D^-\phi^{(s)}]  + \frac{1}{R} \cA_\mu^{(-)}[\phi^{(s)}] 
    \nn\\
 D^-D^-\cA_\mu^{(g)}[\phi^{(s)}] &= \cA_\mu^{(g)}[D^-D^-\phi^{(s)}] + \frac{2}{R} \cA_\mu^{(-)}[D^-\phi^{(s)}] \ .
\end{align}
Using also the intertwiner properties \eq{Box-D-relation} between $\Box$ and $D^\pm$
we recover 
\begin{align}
 \cD^2 \cA_\mu^{(+)}[\phi^{(s)}]  
    &= \cA_\mu^{(+)}\Big[\big(\Box + \frac{2s+5}{R^2} 
\big)\phi^{(s)}\Big]
    \label{D2-A2p-eigenvalues} \,, \\
 \cD^2 \cA_\mu^{(-)}[\phi^{(s)}] 
   &= \cA_\mu^{(-)}\Big[\big(\Box + 
\frac{-2s+3}{R^2}\big)\phi^{(s)}\Big] \, .
   \label{D2-A2m-eigenvalues}
\end{align}
 $D^+\cA^{(+)}$ does not give a new mode due to \eq{D+D+-relations-1},
however $D^+\cA_\mu^{(-)}[\phi^{(s)}]$ or $D^-\cA_\mu^{(+)}[\phi^{(s)}]$ do.  
These two modes are linearly dependent modulo $\cA^{(+-g)}$ due to  the Jacobi identity 
\begin{align}
D^+\cA_\mu^{(-)}[\phi^{(s)}] + D^-\cA_\mu^{(+)}[\phi^{(s)}] 
  &=  [D(\{x_\mu,\phi^{(s)}\})]_s 
 = r^2 R \{t_\mu,\phi^{(s)}\} + [\{x_\mu,D\phi^{(s)}\}]_s  \nn\\
  &= r^2 R \cA^{(g)}[\phi^{(s)}] + \cA^{(-)}[D^+\phi^{(s)}] +  \cA^{(+)}[D^-\phi^{(s)}] \ .
\label{DApm-relation}
\end{align}
Hence  either one can be used to represent the new mode (if it is independent).
We choose
\begin{align}
\boxed{
 \cA_\mu^{(n)}[\phi^{(s)}] := D^+\cA_\mu^{(-)}[\phi^{(s)}] \qquad \in \cC^{s} \ . \ 
 }
\end{align}
This provides the following list of  eigenmodes of $\cD^2$ in $\cC^{s}\otimes \R^4$
\begin{align}
 \{\cA_\mu^{(g)}[\phi^{(s)}], \ \cA_\mu^{(+)}[\phi^{(s-1)}], \ \cA_\mu^{(-)}[\phi^{(s+1)}]
 , \ \cA_\mu^{(n)}[\phi^{(s)}]\}
 \label{eigenmodes-noshift}
\end{align}
with eigenvalues 
\begin{align}
 \cD^2 \cA_\mu^{(+)}[\phi^{(s-1)}]  
    &= \cA_\mu^{(+)}\Big[\big(\Box + \frac{2s+3}{R^2} \big)\phi^{(s-1)}\Big]
    \label{D2-A2p-eigenvalues-2} \,, \\
 \cD^2 \cA_\mu^{(-)}[\phi^{(s+1)}] 
   &= \cA_\mu^{(-)}\Big[\big(\Box + \frac{-2s+1}{R^2}\big)\phi^{(s+1)}\Big] \, .
   \label{D2-A2m-eigenvalues-2}  \\
 \cD^2 \cA_\mu^{(g)}[\phi^{(s)}] &= 
\cA_\mu^{(g)}\Big[\big(\Box+\frac{3}{R^2}\big) \phi^{(s)}\Big]  \label{D2-Ag-eigenvalues} \,\\
\cD^2 \cA_\mu^{(n)}[\phi^{(s)}]
 &=  \ \cA_\mu^{(n)}\Big[\big(\Box +\frac{3}{R^2}\big)\phi^{(s)}\Big] \ .
\end{align}
The eigenvalues can be made to coincide upon inserting $D^\pm$ using \eq{Box-D-relation}, 
and
for any eigenmode of $\Box\phi^{(s)} = m^2 \phi^{(s)}$
we obtain 4-tuples of {\em ''regular`` eigenmodes} $\tilde\cA_\mu^{(i)}[\phi^{(s)}] \in \cC^{s}\otimes \R^4$  of $\cD^2$ 
\begin{align}
 \tilde\cA^{(i)}[\phi] = \begin{pmatrix}
      \cA^{(+)}[D^-\phi] \\ \cA^{(-)}[D^+\phi] \\ \cA^{(n)}[\phi] \\ r^2 R \cA^{(g)}[\phi]
    \end{pmatrix} , \qquad i,j\in\{+,-,n,g\} \ 
    \label{A-tilde-def}
\end{align}
for $\phi = \phi^{(s)}$ dropping the index $\mu$,
with the same eigenvalue 
\begin{align}
 \boxed{ \
\begin{aligned}
 \cD^2 \tilde\cA^{(+)}[\phi] &= \big(m^2 + \frac{3}{R^2}\big) \tilde\cA^{(+)}[\phi]    \\
 \cD^2 \tilde\cA^{(-)}[\phi] &= \big(m^2 + \frac{3}{R^2}\big) \,\tilde\cA^{(-)}[\phi] \\
 \cD^2 \tilde\cA^{(g)}[\phi] &= \big(m^2 + \frac{3}{R^2}\big) \, \tilde\cA^{(g)}[\phi]  \\
 \cD^2 \tilde\cA^{(n)}[\phi] &= \big(m^2 + \frac{3}{R^2}\big) \, \tilde\cA^{(n)}[\phi] \ . \
\end{aligned}
 }
  \label{Apmg-degeneracy}
\end{align}
There is  one ''special`` mode in \eq{eigenmodes-noshift} which is not covered by the regular $\tilde\cA^{(i)}$, namely 
$\cA^{(-)}[\phi^{(s,0)}]$ with
\begin{align}
 \cD^2 \cA^{(-)}[\phi^{(s,0)}] 
 = \cA^{(-)}[\big(\Box + \frac{-2s+3}{R^2}\big)\phi^{(s,0)}] \ . 
\label{special-s0}
\end{align}
We will see that it is orthogonal to all regular modes, and altogether these modes are complete.
Hence diagonalizing $\cD^2$ is reduced  to diagonalizing $\Box$ on $\cC^s$. 
In particular,
we obtain the following on-shell modes $\big(\cD^2  -\frac{3}{R^2} \big) \cA = 0$ 
\begin{align}
\{\tilde\cA^{(+)}[\phi^{(s)}], \tilde\cA^{(-)}[\phi^{(s)}], 
  \tilde\cA^{(g)}[\phi^{(s)}],\tilde\cA^{(n)}[\phi^{(s)}]\}  \qquad  &\text{for } 
       \quad   \ \Box \phi^{(s)} = 0  \ \nn\\
\cA^{(-)}[\phi^{(s,0)}]  \qquad  &\text{for } 
       \quad  \ \big(\Box - \frac{2s}{R^2} \big)\phi^{(s,0)} = 0 \ .
  \label{on-shell-all}
\end{align}
The propagation of all these modes is governed by the effective metric $G_{\mu\nu}$
\eq{eff-metric-G} encoded in $\Box$.
In particular, we note  that the on-shell relation $\Box\phi = 0$ 
determines $C^2[SO(4,1)]$ via \eq{dAlembert-fuzzy-def} for any given $SO(3,1)$ mode, corresponding some irreducible 
tensor field on the space-like $H^3$.
To put it differently, the state at any given time-slice $H^3$ 
completely determines the time evolution, up to forward or backward propagation. 
This is non-trivial in the NC case, and the time evolution is completely captured by $SO(4,1)$ group theory, even though
$\cM^{3,1}$ admits only space-like  $SO(3,1)$ isometries. 
Hence we will obtain the standard picture of time evolution even though time does not commute.
This would be hard to see in formulations based on higher-derivative star products.

In  section \ref{sec:inner}, we will establish independence and completeness of 
these modes  after dropping $\tilde\cA^{(n)}[\phi^{(s,s)}]$ (which is not independent)
and $\tilde\cA^{(+)}[\phi^{(s,0)}]\equiv 0$, 
leading to a ghost-free action and a Hilbert space upon gauge-fixing.

\subsection{Diagonalization of $\cI$ and  eigenmodes}

To establish independence of the above modes, we need to distinguish them using 
some extra observable.
Since $\tilde\cI$ is related to the total $SO(3,1)$ Casimir \eq{cI-Casimir-rel} and commutes 
with both $\Box$ and $\cD^2$, we look for 
a basis of common eigenvectors of $\cD^2$ and $\tilde\cI$ in
$\cC \otimes \R^4$. Using the above results, it suffices to diagonalize $\tilde\cI$
on the tuples \eq{A-tilde-def}, \eq{special-s0} of eigenmodes. We can use the relations \eq{tilde-I-Apm}
 \begin{align}
\begin{aligned}
\tilde\cI(\cA^{(+)}[\phi^{(s)}]) &= r^2 (s + 3) \cA^{(+)}[\phi^{(s)}] +  r^2 R 
\cA^{(g)}[D^+\phi^{(s)}] \\
\tilde\cI(\cA^{(-)}[\phi^{(s)}]) &=  r^2 (-s + 2) \cA^{(-)}[\phi^{(s)}] + r^2 R 
\cA^{(g)}[D^-\phi^{(s)}]  \ 
\end{aligned}
\end{align}
and \eq{tildeI-Ag}
\begin{align}
 R\tilde \cI(\cA^{(g)}[\phi])
  &=  (s +3)r^2 R\cA^{(g)}[\phi] + (2s +3)\cA^{(-)}[D^+\phi^{(s)}]
  + 2\cA^{(+)}[D^-\phi^{(s)}] -(2s +1) \cA^{(n)}[\phi] \nn
\end{align}
which gives 
\begin{align}
 \tilde \cI(\cA^{(n)}[\phi]) &= D^+(\tilde \cI(\cA_\mu^{(-)}[\phi^{(s)}])  \nn\\
  &= r^2 (-s + 2) \cA^{(n)}[\phi^{(s)}] 
  + r^2 \cA^{(+)}[D^-\phi^{(s)}] 
  + r^2 R \cA^{(g)}[D^+D^-\phi^{(s)}]  
\end{align}
using $[\tilde\cI,D^\pm] = 0$.
%
In terms of the $\tilde\cA^{(i)}$ \eq{A-tilde-def},
this can be summarized in matrix form as follows
\begin{align}
 \tilde\cI \begin{pmatrix}
     \tilde\cA^{(+)}[\phi] \\ \tilde\cA^{(-)}[\phi] \\ \tilde \cA^{(n)}[\phi] \\ \tilde\cA^{(g)}[\phi]
    \end{pmatrix}
  &= r^2\underbrace{\begin{pmatrix}
      s + 2 & 0 & 0 &  d_{+-} \\
     0 & -s+1 & 0 &  d_{-+} \\
     1 & 0 & -s + 2 &  d_{+-} \\
     2 & 2s +3 & -(2s +1) & s +3
    \end{pmatrix}}_{=:I}
 \begin{pmatrix}
      \tilde\cA^{(+)}[\phi] \\ \tilde\cA^{(-)}[\phi] \\ \tilde\cA^{(n)}[\phi] \\ \tilde\cA^{(g)}[\phi]
    \end{pmatrix}  
    \label{I-matrix-modes}
\end{align}
for $\phi = \phi^{(s,k)}$. 
Here we introduce the notation $D^+D^-\phi = r^2 d_{+-}\phi$ and $D^-D^+\phi = r^2 d_{-+}\phi$ 
assuming that they are diagonalized on $\phi^{(s,k)}$, which is always possible because 
$D^+D^-, D^+D^-, \Box$ are mutually commuting.
To find the eigenvalues of $I$, we compute
{
\begin{align}
 \det(I-x\one) &= d_{-+} (2 s+3) (s-x+2) (s+x-2) \nn\\
  &\quad +(s-x+3) (s+x-1) \left(-2 d_{+-} s+d_{+-}+s^2-(x-2)^2\right) \ .
 \label{char-I}
\end{align}
}
Using the commutation relations \eq{D-D+-CR-general} of $D^\pm$ on $\cC^{(s,k)}$, this 
factorizes  as 
\begin{align}
 \det(I-x\one) 
  &=  \Big((k-s)^2-(x-2)^2\Big)
 \left(-\cK -(x-2)^2 \right) \ .
  \label{char-2}
\end{align}
Here we introduce the useful quantity  
\begin{align}
 -\cK &:= s^2 + \frac{4s^2-1}{k (2s-k)}d_{+-} \ 
 = (s+1)^2 + \frac{(2s+1)(2s+3)}{(k+1) (2s-k+1)}d_{-+} \ 
 \label{chi-def}
\end{align}
(for $k=0$, the second form must be used), which is a measure for the 
kinetic energy of $\phi^{(s,k)}$ on the time slices $H^3$.
This quantity satisfies the important positivity property
\begin{lem}
 For all admissible modes $\phi$, the following estimate 
 holds\footnote{Recall that $\cK$ commutes with $\Box_H$. This  operator inequality 
 is hence a statement for $\cK$ acting on some 
 eigenspace or spectral interval of $\Box_H$.} 
 \begin{align}
 \cK\big|_{\phi} > 0  \ .
 \label{chi-positivity}
\end{align}
\label{chi-pos-lemma}
\end{lem}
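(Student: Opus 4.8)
The plan is to express $\cK$ directly in terms of the $SO(4,1)$ and $SO(3,1)$ Casimirs (equivalently, in terms of $\Box_H$ and the spin $\cS^2=2s(s+1)$), and then invoke the principal-series bound \eqref{admissible-so41} for admissible modes. First I would use the commutation relations \eqref{D-D+-CR-general} of $D^\pm$ on $\cC^{(s,k)}$ to rewrite the combination $s^2 + \frac{4s^2-1}{k(2s-k)}d_{+-}$ appearing in \eqref{chi-def}; the point is that $D^+D^-$ and $D^-D^+$ differ by a term linear in the "level" operator, and the eigenvalue $d_{+-}$ on $\cC^{(s,k)}$ should, after using these relations, combine with $s^2$ into something that depends only on $\Box_H$ (or $C^2[\mso(4,1)]$) and $s$, not separately on $k$. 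In other words, I expect $\cK$ to be $k$-independent once written invariantly — this is consistent with the remark in the footnote that $\cK$ commutes with $\Box_H$ and is really a statement about spectral intervals of $\Box_H$.

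Once $\cK$ is in the form $\cK = f(\Box_H, s)$ for some explicit affine function $f$, the second step is purely a matter of plugging in the bound $r^{-2}\Box_H > s^2+s+9/4$ from \eqref{BoxH-estimate-onshell-principal} (which holds for admissible modes, i.e. those lying in unitary principal series of $SO(4,1)$). Concretely I anticipate that $\cK$ will take the shape $\cK = r^{-2}\Box_H - (s^2+s) - c$ for some constant $c \le 9/4$ — most likely $c = 9/4$ exactly, given how cleanly the earlier estimates were tuned — so that $\cK > 0$ is immediate, perhaps even with a bit of room to spare. I would double-check the normalization of $d_{+-}$ (the paper sets $D^+D^-\phi = r^2 d_{+-}\phi$) so that the powers of $r$ match, and verify the $k=0$ edge case using the second expression in \eqref{chi-def} with $d_{-+}$ in place of $d_{+-}$; the two forms must agree, which is itself a consistency check on the Casimir identity.

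The main obstacle I expect is the bookkeeping in the first step: one must carefully track how $d_{+-}$ on $\cC^{(s,k)}$ relates to the $SO(4,1)$ Casimir. This requires knowing the precise action of $D^+D^-$ on each level $\cC^{(s,k)}$, which comes from composing the shifted Laplacian relations \eqref{Box-D-relation} with \eqref{D2-D+-relation}-type intertwiner identities, and then solving the resulting recursion in $k$. The algebra is routine but error-prone, and the factors $\frac{4s^2-1}{k(2s-k)}$ and $\frac{(2s+1)(2s+3)}{(k+1)(2s-k+1)}$ are exactly the kind of thing that must cancel against the $k$-dependence of $d_{+-}$; getting the coefficients right so that the $k$-dependence genuinely drops out is the crux. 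If instead $\cK$ does retain some $k$-dependence, the fallback is to bound $\cK$ below uniformly over the admissible range $0 \le k \le s$ using monotonicity in $k$, but I expect the cleaner $k$-independent outcome, since the quantity is advertised as "a measure for the kinetic energy of $\phi^{(s,k)}$ on the time slices $H^3$" — i.e. it should essentially be (minus) the $SO(3,1)$ Casimir $-C^2[\mso(3,1)]$ plus a shift, and positivity then also follows directly from \eqref{admissible-so31}.
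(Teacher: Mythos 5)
Your primary strategy rests on a false premise. The quantity $\cK$ is \emph{not} a function of $\Box_H$ and $s$ alone: the identity \eq{BoxH-explicit-CR} reads $r^{-2}\Box_H = -R^2\Box + \cK + (s+1)^2 + k(2s-k)$ on $\cC^{(s,k)}$, so $\cK$ differs from $r^{-2}\Box_H$ by the independent operator $R^2\Box$ (equivalently, by $C^2[\mso(3,1)]$), which is unconstrained off-shell. Commuting with $\Box_H$ does not make $\cK$ a function of $\Box_H$; it is constant along $D^+$-orbits by \eq{chi-comm}, but on the primal sector it is genuinely the space-like kinetic operator, $\cK = \rho^2\Delta^{(3)} - (s+1)$ with $\rho = R\cosh(\eta)$ (this is what \eq{D+D--phi1} yields). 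Consequently the admissibility bound \eq{admissible-so41} on $\Box_H$, i.e.\ on $C^2[\mso(4,1)]$, cannot by itself deliver $\cK>0$, and the hoped-for reduction $\cK = r^{-2}\Box_H - s(s+1) - c$ does not exist. The lemma is used precisely to control the \emph{off-shell} inner product matrix, so one cannot trade it for an on-shell relation between the two Casimirs either.

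Your fallback is in the right spirit and is in fact the paper's route, but it begs the question at the decisive step. The paper first reduces to $k=0$ using $[\cK,D^\pm]=0$ (you correctly anticipate this), then computes $D^-D^+$ on $\cC^{(s,0)}$ explicitly in terms of $\Delta^{(3)}$ acting on the divergence-free symmetric rank-$s$ tensor field on $H^3$, picking up curvature terms from the commutators $[\nabla^{(3)},\nabla^{(3)}]$. The resulting statement $\cK>0$ is equivalent to $\rho^2\Delta^{(3)} > s+1$ on square-integrable rank-$s$ symmetric tensors on $H^3$ — equivalently, to the assertion that only $SO(3,1)$ \emph{principal} series occur. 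This does not follow "directly from \eq{admissible-so31}", because admissibility is defined via the $SO(4,1)$ principal series condition \eq{admissible-so41}, and the paper explicitly notes that the two conditions are a priori independent off-shell. The paper closes this gap by importing the spectral results of Delay and Lee on the (essential) spectrum of the Laplacian on symmetric tensors over hyperbolic space, together with a Fubini-type argument for square-integrability on the slices $H^3$. Your proposal has no substitute for this analytic input, which is the actual content of the lemma.
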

which is proved in appendix \ref{sec:positivity-chi}.
We can now read off two ''regular`` (integer) eigenvalues of $\tilde\cI$ 
\begin{align}
 x_\pm = 2 \pm (k-s) \ , 
 \label{eigenvalies-I-simple}
\end{align}
which essentially measures the spin.
The corresponding left eigenvectors 
$v_\pm \cdot I = x_\pm v_\pm $
are 
\begin{align}
 v_- &= \left(
\begin{array}{cccc}
 -\frac{2 k-2 s+1}{k (k-2 s)} & -\frac{2 s+3}{k-2 s-1} & -\frac{2 s+1}{2 s-k} & 1 
\end{array}\right)\nn\\
v_+ &= \left(
\begin{array}{cccc}
 -\frac{-2 k+2 s+1}{k (k-2 s)} & -\frac{-2 s-3}{k+1} & -\frac{2 s+1}{k} & 1 
\end{array}\right) \ .
\label{vpm-def}
\end{align}
The remaining factor in \eq{char-2} leads to two
extra eigenvalues 
\begin{align}
 x'_\pm &= 2\pm\sqrt{-\cK} 
\end{align}
where $\sqrt{-\cK}$ is purely imaginary due to \eq{chi-positivity}. 
The corresponding eigenvectors for $k\neq 0$ are 
\begin{align}
 v'_\pm 
 &= \Big(\frac{2(s \mp\sqrt{-\cK})-2s+1}{s^2+\cK}\ ,
  \frac{3 + 2 s}{s + 1\pm\sqrt{-\cK}}, -\frac{1 + 2 s}{s \pm\sqrt{-\cK}}, 1\Big) \ .
  \label{vprime-def}
\end{align}
Their complexified form is somewhat misleading, and 
one can replace them by the two real modes 
\begin{align}
 v'_{1} &= \frac 12(v'_+ + v'_-) 
   = \Big(\frac{1}{s^2+\cK},\frac{(s+1)(2s+3)}{(s+1)^2+\cK},-\frac{s(2s+1)}{s^2+\cK},1\Big)     \nn\\ 
 v'_2 &= \frac{1}{2\sqrt{-\cK}}(v'_+ - v'_-)
  = \Big(-\frac{2}{s^2+\cK},-\frac{(2s+3)}{(s+1)^2+\cK},\frac{(2s+1)}{s^2+\cK},0\Big)
  \label{vprime12}
\end{align}
which span the 2-dimensional negative eigenspace of $(\cI - 2)^2$.
More precisely, they satisfy
\begin{align}
 (\cI-2)v'_{1} 
  &= -\cK\, v'_2  \ ,  \nn\\
 (\cI-2)v'_{2} 
  &=  v'_1 \ .
\end{align}
We will see in section \ref{sec:inner}
that all the $v_\pm$ and $v_\pm'$ modes are mutually orthogonal w.r.t. the invariant 
but indefinite
inner product, as they must be, and $v_\pm$ have positive norm at least on-shell.

\paragraph{Linear independence and degeneracies.}

Generically, the 4 vectors above have different eigenvalues of $\tilde\cI$, and
are therefore linearly independent.
Linear dependence can only occur if some of these eigenvalues coincide. 
Inspecting the above eigenvalues, we have to investigate the following
special cases:



\begin{itemize}

\item
$x_+=x_-$, which happens  if $k=s$.
This case will be discussed  below.

\item
 $x'_+ = x'_-$, which can only happen for $\cK=0$.
 This is ruled out by \eq{chi-positivity}.

\item
$x_\pm$  coincide with $x'_\pm$ if
 $\pm (k-s) = \sqrt{-\cK}$.
Again this cannot happen since $\cK>0$ \eq{chi-positivity}.

\item Finally for $k=0$ and $s=0$ some of the modes disappear, as discussed below.

\end{itemize}

Hence except possibly for these  special cases, the 4
regular modes  $\tilde\cA^{(i)}$ are  linearly independent.
This strongly suggests that they provide a complete set of modes,
which will be proved in section \ref{sec:completeness}.

\subsubsection{The primal sector $k=0$} 

In this case, we cannot use the above results since 
$\cA^{(+)}[D^-\phi]\equiv 0$, so that there are only 3-tuples of regular modes,
supplemented by the special mode $\cA^{(-)}[\phi]$.
For the 3-tuples, we then have 
\begin{align}
 \tilde\cI \begin{pmatrix}
     \tilde\cA^{(-)}[\phi] \\ \tilde\cA^{(n)}[\phi] \\ \tilde\cA^{(g)}[\phi]
    \end{pmatrix}
  &= r^2\underbrace{\begin{pmatrix}
      -s+1 & 0 &  d_{-+} \\
     0 & -s + 2 &  0 \\
      2s +3 & -(2s +1) & s +3
    \end{pmatrix}}_{=:I}
 \begin{pmatrix}
     \tilde\cA^{(-)}[\phi] \\ \tilde\cA^{(n)}[\phi] \\  \tilde\cA^{(g)}[\phi]
    \end{pmatrix}  
    \label{I-matrix-modes-k0}
\end{align}
for $\phi = \phi^{(s,0)}$. 
To find the eigenvalues of $I$, we compute
\begin{align}
 \det(I-x\one) 
  &= (s+x-2) \big(-\cK - (x-2)^2\big)
\end{align}
where the 2nd form of $\cK$ in \eq{chi-def} must be used.
This has one ''regular`` root 
\begin{align}
 x_0 = -s+2
\end{align}
with eigenvector
\begin{align}
 v_0 = (0,1,0)
 \label{v0-modes-k0}
\end{align}
corresponding to $\cA^{(n)}$. The two other eigenvectors corresponding to the roots
\begin{align}
x_{\pm} &= 2\pm\sqrt{-\cK}
\end{align}
are given by
\begin{align}
 v'_\pm 
  &=\left(\frac{2s+3}{s+1 \pm \sqrt{-\cK}} \ ,\frac{-2 s-1}{s\pm\sqrt{-\cK}} \ ,1\right) \ .
 \label{vpm-modes-kpm}
\end{align}
It can be checked explicitly using the results of section \ref{sec:inner} that these three modes 
are mutually orthogonal. 
Again, we can replace the complex modes 
$v'_\pm$ by 2 real modes  
\begin{align}
 v'_{1} &= \frac 12(v'_+ + v'_-) 
   = \left(\frac{(2s+3)(s+1)}{(s+1)^2+\cK} , -\frac{(2s+1)s}{s^2+\cK} ,1   \right)  \nn\\ 
 v'_2 &= \frac{1}{2\sqrt{-\cK}}(v'_+ - v'_-)  
  = \left(-\frac{(2s+3)}{(s+1)^2+\cK} , \frac{2s+1}{s^2+\cK} , 0 \right) \ 
  \label{vprime-def-k0}
\end{align}
which are linearly independent.
In addition to the above three modes, there is an extra mode:

\paragraph{Special massless spin $s$ mode $\cA_\mu^{(-)}[\phi^{(s,0)}]$.}

For $k=0$ consider the extra  mode 
\begin{align}
 v_0^- := \cA_\mu^{(-)}[\phi^{(s,0)}] \ .
 \label{v0m-special}
\end{align}
This is not contained in the previous modes $v_{\pm}, v_0$ because $\phi^{(s,0)}$ 
cannot be written as $D^+ \phi'$. Hence it complements the 3 regular modes, so that  each $\phi^{(s,0)}$
determines again 4 independent modes. 
The on-shell condition  $\big(\cD^2  -\frac{3}{R^2} \big) \cA^{(-)}[\phi^{(s,0)}] = 0$ 
 takes the slightly different form $(\Box -  \frac{2s}{R^2})\phi^{(s,0)} = 0$, 
 due to \eq{D2-A2m-eigenvalues}.
This mode satisfies
\begin{align}
 x^\mu \cA_\mu^{(-)}[\phi^{(s,0)}] = 0 \ 
 \label{A-s0-spacelike}
\end{align}
due to \eq{Apmg-time} i.e. it is space-like, since
$x^\mu$ defines the time-like direction (e.g. at a reference point
$\xi = (\xi_0,0,0,0)$ on $\cM^{3,1}$).
Positivity of the inner product then follows immediately, in agreement with
the direct computation in section \ref{sec:inner}.
Moreover \eq{A2-gaugefix} implies that this mode is  physical, and we will see that for $s=2$
it provides the 2 standard degrees of freedom 
of the physical graviton \cite{Sperling:2019xar}.

\paragraph{The case $s=0$.}

In this case (which implies $k=0$), not only the $\cA_\mu^{(+)}$ mode vanishes but also  $\cA_\mu^{(n)}=0$,
because $\cA^{(-)}[\phi^{(0)}] = 0$. The above special mode 
$v_0^-$  \eq{v0m-special} also disappears, and only the $v'_{\pm}$ survive 
among the above modes, or equivalently $\cA_\mu^{(-)}[D^+\phi]$ 
and $\cA_\mu^{(g)}[\phi]$.
We will see below that their inner products are non-degenerate, and these 2 modes are complete 
for $s=0$. This is consistent with the case of $H^4_n$
studied in \cite{Sperling:2018xrm} and the case of $S^4_N$ in \cite{Sperling:2017gmy}, 
where also two tangential modes were obtained for $s=0$, and 4 modes for $s\geq 1$.

\subsubsection{The scalar sector $k=s \neq 0$} 
\label{sec:s-s-tau}

For $k=s \neq 0$,  $\phi^{(s,s)} = (D^+)^s \phi^{(0)}$ is the $s$-fold space-like divergence of a 
scalar mode\footnote{Recall that $(D^+)^s\phi^{(0)}$ are {\em space-like} scalar modes 
in the sense that the 3-dimensional $SO(3,1)$ spin on $H^3$ vanishes, since $D$
commute with $SO(3,1)$.}.
Then the eigenvalues $x_\pm$  of $\tilde\cI$ and
in fact also the corresponding modes $v_\pm$ \eq{vpm-def} coincide,
\begin{align}
 v_+ = v_- = \Big(
 \frac{1}{s^2}, \frac{2s+3}{s+1}, -\frac{2s+1}{s}, 1 \Big) \ .
\label{v-null-1}
\end{align}
However, 
we will see in section \ref{sec:v-null-vanish}  that in fact $v_+ = v_- = 0$ vanishes identically.
A substitute  can be found by formally taking the limit 
\begin{align}
 v_{\rm extra} &:= \lim\limits_{k\to s}\frac {v_+-v_- }{k-s} \
 = \lim\limits_{k\to s}\frac 1{k-s}
 \left(\begin{array}{cccc}
 \frac{2 k-2 s-1}{k (k-2 s)} +\frac{2 k-2 s+1}{k (k-2 s)},
 & \frac{2 s+3}{k+1} +\frac{2 s+3}{k-2 s-1} ,
 & -\frac{2 s+1}{k} +\frac{2 s+1}{2 s-k} ,
 & 0
\end{array}\right) \nn\\[1ex]
&=  -2\Big(\frac{2}{s^2},\frac{2s+3}{(s+1)^2} , -\frac{2s+1}{s^2} , 0 \Big) \ .
\label{extra-mode-k=l}
\end{align}
We will see in section \ref{sec:k=s-inner} that $v_{\rm extra}$ has positive norm and is orthogonal to $v'_\pm$,
and  there are no further scalar modes.

\section{Inner product matrix}
\label{sec:inner}

Now that we have identified the eigenmodes of $\cD^2$, we  can 
compute the inner product matrix with respect to \eq{inner-basic-def}. 
This will confirm and complete the results of the previous section, and allow to
determine the signature of the inner product matrix for all admissible modes.
We can then establish a no ghost theorem providing a 
Hilbert space of physical modes. Moreover, the off-shell results provide all the information needed to 
obtain the full propagator.

For $\phi,\phi'\in\cC^{s}$  we define the inner product matrix\footnote{It is important to observe that the modes are 
integrable over the entire $\cM^{3,1}$, rather than just the space-like $H^3$. The reason is that 
we consider the principal series unitary irreps of $SO(4,2)$ in $End(\cH_n)$, which 
correspond to square-integrable tensor fields on $H^4$. 
This allows to use invariance relations such as $(D^-)^\dagger = - D^+$.
Although semi-classically one could define an inner product based on $H^3$, this 
would not  make sense in the fully NC case.}
\begin{align}
 \cG^{(i,j)} = \left\langle \tilde\cA_\mu^{(i)}[\phi'],\tilde\cA^{\mu (j)}[\phi] \right\rangle
 , \qquad i,j\in\{+,-,n,g\}
  \label{Gij-matrix}
\end{align} 
with the $\tilde\cA^{(i)}$  defined in \eq{A-tilde-def}.
The matrix elements are computed explicitly in appendix \ref{sec:inner-prod-calc}.
They can be evaluated easily e.g. in Mathematica, since 
the entries $D_{\pm}$ and $\Box$ mutually commute, and can be simultaneously diagonalized 
for any fixed mode $\phi^{(s,k)}$.
Then the space of modes boils down to 4-dimensional blocks 
$\tilde\cA^{(i)}$ which are mutually orthogonal. The metric in the blocks is non-degenerate but indefinite since $\eta_{\mu\nu}$ has
Minkowski signature, and 
one can verify explicitly using the commutation relations \eq{D+D--CR-general} that 
$\tilde\cI$ is hermitian, i.e.
\begin{align}
 (\cG\, I^T)^{(i,j)} = (I\, \cG)^{(i,j)} \, 
 \label{I-G-consistency}
\end{align}
where $I$ is the matrix defined in \eq{I-matrix-modes}. This provides a highly non-trivial consistency check.

Let us discuss the results in detail,
assuming first  $s\neq k\neq 0$.
One can then check explicitly that all modes $\{v_\pm,v'_\pm\}$ \eq{vpm-def}, \eq{vprime-def} 
are mutually orthogonal, as they must be.
The norm of the vectors $v_\pm$  is obtained (e.g. using Mathematica) as follows
\begin{align}
  \langle v_+ , v_+\rangle 
 &= r^4\frac{2 (k-s)^2 \left(\cK+(s-k)^2 \right)}
 {k^2 (k+1) (2 s-k)}
 \left(\cK - R^2 \Box +s^2 + k -1 \right) \ .
 \label{inner-v+}
 \end{align}
Here $\Box$ is understood to act\footnote{recall that $\Box$ commutes with $\cK$ and $\Box_H$.} 
on $\phi$ resp. $\phi'$, 
as resulting from the inner product formulas in section \ref{sec:inner-prod-calc}.
The factor $(\cK+(s-k)^2)$ is positive since $\cK>0$, due to lemma \ref{chi-pos-lemma}.
The factor 
\begin{align}
 (\cK - R^2 \Box +s^2 + k -1) = r^{-2}\Box_H  -k(2s-k-1) -2s -2
 \label{K-Box-est}
\end{align}
(using \eq{BoxH-explicit-CR}) is positive using the estimate
\begin{align}
  r^{-2}\Box_H &\geq k (2s - k-1) +2s+2 \ ,
\end{align}
which follows from the admissibility condition  \eq{admissible} for $\Box_H$
\begin{align}
 r^{-2}\Box_H  > s^2+s+2  &\geq k (2s - k-1) +2s+2  
\end{align}
which reduces to $(s-k)(s-k-1) \geq 0 $.
Therefore $\langle v_+ , v_+\rangle  > 0$, and
similarly
\begin{align}
 \langle v_- , v_-\rangle 
 &= r^4\frac{2 (k-s)^2 \left(\cK + (s-k)^2\right)}
  {k (2 s-k+1) (2 s-k)^2}
 \left(\cK - R^2\Box +s^2 -k +2 s-1\right) \ > 0 \ .
 \label{inner-v-}
\end{align}
Now consider the $v'_\pm$ modes. Since they are complexified, 
we refrain from computing their 
scalar product. The overall signature of $\cG^{(i,j)}$ can be determined more easily from the 
determinant of the full inner product matrix \eq{inner-product-det},
which is found to be 
\begin{align}
 \det(\cG^{(i,j)}) 
 &= r^{16}\, \frac{ d_{+-}  
 (k+1) (2s-k+1) (k-s)^2}  {\left(4 s^2-1\right) (4 s (s+2)+3)^2} \cK
 \left(\cK+(s+1)^2\right)\cdot \nn\\
 &\quad \cdot \left((-R^2\Box +\cK +s^2 +s-1)^2 - (s-k)^2\right) 
 \left((R^2\Box +k^2-2 k s+1-s)^2 +\cK\right) \ . 
 \label{inner-product-det}
\end{align}
The first factor in the second line arises from the $v_\pm$ modes, and is 
positive as shown above.
The last factor arises from the $v'_\pm$ modes and is also positive.
Using $d_{+-}<0$ and $\cK >0$ we obtain 

\begin{lem}
\label{lemma-orthmodes-generic}
In any 4-dimensional space of modes $\tilde\cA^{(i)}[\phi], \  i\in\{+,-,n,g\}$
for admissible $\phi\in\cC^{(s,k)}$ with $s\neq k$ and $k\neq 0$,
the metric $\cG^{(i,j)}$ is non-degenerate with signature  $(+++-)$. 
\end{lem}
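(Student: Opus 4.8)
The plan is to determine the signature of $\cG^{(i,j)}$ by working in the eigenbasis of $\tilde\cI$ established in the previous section, rather than by diagonalizing $\cG$ directly. Since $\tilde\cI$ commutes with both $\Box$ and $\cD^2$ and is hermitian with respect to the inner product \eq{inner-basic-def} (as recorded in \eq{I-G-consistency}), eigenvectors of $\tilde\cI$ with distinct eigenvalues are automatically $\cG$-orthogonal. First I would invoke the analysis of section \ref{sec:inner} showing that for admissible $\phi\in\cC^{(s,k)}$ with $s\neq k$ and $k\neq 0$ the four eigenvalues $x_\pm = 2\pm(k-s)$ and $x'_\pm = 2\pm\sqrt{-\cK}$ are pairwise distinct --- this uses Lemma \ref{chi-pos-lemma} ($\cK>0$) to rule out $x'_+=x'_-$ and to separate the real pair $\{x_\pm\}$ from the genuinely complex pair $\{x'_\pm\}$, and it uses $\cK>0$ again to prevent any coincidence $\pm(k-s)=\sqrt{-\cK}$. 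Hence the four regular modes $\tilde\cA^{(i)}[\phi]$ span a four-dimensional space on which $\cG$ is block-diagonal: a $2\times2$ block on $\mathrm{span}\{v_+,v_-\}$ and a $2\times2$ block on $\mathrm{span}\{v'_1,v'_2\}$ (the real form \eq{vprime12} of the complex pair).

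Next I would pin down each block. On the $\{v_+,v_-\}$ block, the modes $v_\pm$ are themselves $\cG$-orthogonal (distinct $\tilde\cI$-eigenvalues), so the block is diagonal, and by \eq{inner-v+} and \eq{inner-v-} both diagonal entries $\langle v_+,v_+\rangle$ and $\langle v_-,v_-\rangle$ are strictly positive: the prefactors $2(k-s)^2(\cK+(s-k)^2)$ are positive since $\cK>0$, the denominators $k^2(k+1)(2s-k)$ and $k(2s-k+1)(2s-k)^2$ are positive in the admissible range, and the last factors $(\cK-R^2\Box+s^2+k-1)$ and $(\cK-R^2\Box+s^2-k+2s-1)$ are positive by the $\Box_H$-admissibility estimate $r^{-2}\Box_H > s^2+s+2 \geq k(2s-k-1)+2s+2$, which reduces to $(s-k)(s-k-1)\geq 0$ and is automatic for integer $s,k$. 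So this block contributes signature $(++)$. It then remains only to show the complementary $\{v'_1,v'_2\}$ block has signature $(+-)$; for this I would compute $\det(\cG^{(i,j)})$ over the full four-dimensional space using the explicit formula \eq{inner-product-det}. Every factor there is positive --- $(k+1)$, $(2s-k+1)$, $(k-s)^2$, $\cK$, $(\cK+(s+1)^2)$, $((-R^2\Box+\cK+s^2+s-1)^2-(s-k)^2)$ (positive because $-R^2\Box+\cK+s^2+s-1 = r^{-2}\Box_H - k(2s-k-1)-2s > s^2+s+2-k(2s-k-1)-2s \geq |s-k|$ in the admissible range, using \eq{K-Box-est}), and $((R^2\Box+k^2-2ks+1-s)^2+\cK)$ --- except for $d_{+-}<0$, so $\det(\cG^{(i,j)})<0$. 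A negative determinant for a real symmetric $4\times4$ matrix forces an odd number of negative eigenvalues; combined with the $(++)$ already identified on the $v_\pm$ block, this leaves exactly one negative eigenvalue, i.e.\ signature $(+++-)$.

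The main obstacle I anticipate is not the sign bookkeeping in the determinant but verifying the positivity of the ``last factor'' in \eq{inner-product-det} arising from the $v'_\pm$ block, namely $(-R^2\Box+\cK+s^2+s-1)^2-(s-k)^2 > 0$: this is an operator inequality on a spectral subspace of $\Box_H$, so it has to be argued via \eq{K-Box-est} and the admissibility bound \eq{admissible-so41}–\eq{BoxH-estimate-onshell-principal} rather than by a naive pointwise estimate, and one must be careful that $-R^2\Box+\cK+s^2+s-1 = r^{-2}\Box_H - k(2s-k-1) - 2s$ is not merely positive but exceeds $|s-k|$. Here the inequality $r^{-2}\Box_H > s^2+s+9/4$ from \eq{BoxH-estimate-onshell-principal} does the work, since $s^2+s+9/4 - k(2s-k-1) - 2s = (s-k)^2 + (s-k) + 9/4 - 2(s-k) +\ldots$; tracking this algebra carefully (and separately checking the edge behaviour as $k\to 0$ or $k\to s$, which are excluded from this lemma but border it) is the only genuinely delicate point. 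Everything else is a direct application of Lemma \ref{chi-pos-lemma}, the hermiticity relation \eq{I-G-consistency}, and the explicit matrix elements computed in appendix \ref{sec:inner-prod-calc}.
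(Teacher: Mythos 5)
Your proposal is correct and follows essentially the same route as the paper: positivity of $\langle v_\pm,v_\pm\rangle$ via Lemma \ref{chi-pos-lemma} and the admissibility bound \eq{K-Box-est}, then the sign of the full determinant \eq{inner-product-det} (negative because of $d_{+-}<0$) to pin down a single negative direction in the $v'_{1,2}$ block. One immaterial slip: the factor $\big((-R^2\Box+\cK+s^2+s-1)^2-(s-k)^2\big)$ is the one arising from the $v_\pm$ modes (it is just the product of the two norm factors you already showed positive), while the manifestly positive factor $\big((R^2\Box+k^2-2ks+1-s)^2+\cK\big)$ is the one coming from the $v'_\pm$ block, so the "delicate point" you flag is in fact already settled by your first step.
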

 This is the core of the no-ghost theorem, as discussed below.
 The special cases $k=0$ and $s=k$ will be discussed separately below.
 Off-shell, the signature $(+++-)$ should be important e.g. in the context of loop computations
 and  to establish perturbative unitarity and causality statements.

\paragraph{Explicit inner product for $v'_{12}$ modes.}

We can compute the inner product for the $v'_{1,2}$ modes defined in \eq{vprime12}.
Their inner product is given by 
\begin{align}
 \langle v'_i,v'_j\rangle 
  &= r^4\frac{(2s+1)(-\cK) \left(\cK+s^2 -k (k-2 s)\right)}
  {(\cK+s^2)^2 (\cK +(s+1)^2)}\,
  \begin{pmatrix}
   a & b \\
   b & -\frac{a}{\cK}
  \end{pmatrix},  \nn\\[1ex]
 a &= -\frac{\cK+s^2}{2 s+1} \left(R^2\Box +k^2-2 k s+s+2\right)
       +s \left(R^2\Box +k^2-2 k s+1\right) \ ,  \nn\\
 b &= -R^2 \Box -1 - k^2  + 2 k s + \frac{\cK + s^2}{2 s + 1}
\label{vprime-inner}
\end{align}
noting that 
\begin{align}
 k (2s-k)(\cK+s^2) = -(4s^2-1)d_{+-} \ .
\end{align}
Here the $2\times 2$ matrix  has negative determinant 
\begin{align}
 \det \begin{pmatrix}
   a & b \\
   b & -\frac{a}{\cK}
  \end{pmatrix} 
  = -\frac{1}{\cK (2 s+1)^2}
  (\cK +s^2) (\cK +(s+1)^2) \left(\cK + (R^2\Box +k^2-2 k s-s+1)^2\right) <0 \ ,
  \label{det-inner-vp12}
\end{align}
and we recognize the last factor from  \eq{inner-product-det}. 
One could now select a canonical basis of two null vectors if desired.

\subsection{The primal sector $k=0$}
\label{sec:k=0}

For primal modes $k=0$, 
the $ \tilde \cA^{(+)}[\phi^{(s,0)}] = \cA^{(+)}[D^-\phi^{(s,0)}]=0$ mode vanishes, and the inner product matrix simplifies 
accordingly.
One can check again that the 3 eigenmodes $v'_{\pm},v_0$ in 
\eq{v0-modes-k0}, \eq{vpm-modes-kpm}
are mutually orthogonal, with 
\begin{align}
 \langle v_0,v_0\rangle 
   &=\frac{r^4 s}{(2 s+1)^2} \big(\cK + s^2 \big) 
  \left(-R^2\Box + \cK +s^2 -1\right) \ .
\end{align}
This is again positive for admissible on-shell modes using $\cK > 0$.
The 
determinant of the inner product matrix for these 3 modes is
\begin{align}
 \det(\cG^{(i,j)}) 
 &= r^{12}\frac{d_{-+} s }{(2 s+1)^2 (2 s+3)}
 \cK\left(\cK - R^2\Box  +s^2 -1\right) 
 \left(\cK + (1 + R^2\Box  - s)^2\right) \ . 
 \label{det-Gij-k0}
\end{align}
Since $d_{-+}<0$,
it follows as in \eq{K-Box-est} ff. that the determinant is 
negative for all admissible modes.

Now recall the extra special mode  $\cA_\mu^{(-)}[\phi^{(s,0)}]$ \eq{v0m-special}.
It is easy to see from the explicit formulas for the inner products
in section \ref{sec:inner-prod-calc} that this mode is orthogonal to all other modes,
and its inner product is positive as already observed in \cite{Sperling:2019xar}.
Therefore we have 
\begin{lem}
\label{lemma-orthmodes-k0}
In any 3-dimensional space of modes $\tilde\cA^{(i)}[\phi], \  i\in\{-,n,g\}$
for admissible $\phi\in\cC^{(s,0)}$ with $s\neq 0$,
the metric $\cG^{(i,j)}$ is non-degenerate with signature  $(++-)$. 
These 3 modes are orthogonal to $\cA^{(-)}[\phi^{(s,0)}]$,
which has positive norm.
\end{lem}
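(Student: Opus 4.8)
The plan is to repeat the generic argument of Lemma \ref{lemma-orthmodes-generic}, now specialized to the primal case $k=0$. First I would observe that $D^-\phi^{(s,0)}=0$ by definition of a primal field, so $\tilde\cA^{(+)}[\phi^{(s,0)}] = \cA^{(+)}[D^-\phi^{(s,0)}]$ vanishes and the $4$-dimensional block of $\cD^2$-eigenmodes collapses to the $3$-tuple $\{\tilde\cA^{(-)},\tilde\cA^{(n)},\tilde\cA^{(g)}\}$, on which $\tilde\cI$ acts via the matrix $I$ of \eq{I-matrix-modes-k0}. Since $\tilde\cI$ is hermitian for the invariant inner product (the $k=0$ analogue of the consistency relation \eq{I-G-consistency}) and commutes with both $\cD^2$ and $\Box$, its eigenvectors $v_0$ and $v'_\pm$ of \eq{v0-modes-k0}, \eq{vpm-modes-kpm} are automatically $\cG$-orthogonal: their eigenvalues $-s+2$ and $2\pm\sqrt{-\cK}$ are pairwise distinct, distinctness relying on $\cK>0$ from Lemma \ref{chi-pos-lemma} (which likewise makes $\sqrt{-\cK}$ imaginary, so that $v'_+\neq v'_-$).

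Next I would evaluate the needed $\cG$-entries using the explicit formulas of appendix \ref{sec:inner-prod-calc}; this is routine because $D_\pm$ and $\Box$ are simultaneously diagonal on $\phi^{(s,0)}$. One finds $\langle v_0,v_0\rangle = \frac{r^4 s}{(2s+1)^2}\,(\cK+s^2)\,(-R^2\Box+\cK+s^2-1)$ together with the $3\times3$ determinant \eq{det-Gij-k0}. The only quantities whose sign must be pinned down are the scalar factors $(-R^2\Box+\cK+s^2-1)$ and $(\cK+(1+R^2\Box-s)^2)$, alongside the fixed signs $\cK>0$, $s\geq1$, $d_{-+}<0$. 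Rewriting the first via the $k=0$ specialisation of the $\Box_H$-identity used in \eq{K-Box-est}, namely $\cK-R^2\Box+s^2-1 = r^{-2}\Box_H-2s-2$, it is positive by the admissibility bound $r^{-2}\Box_H>s^2+s+2\ge 2s+2$, the last step being $s(s-1)\ge0$ which holds since $s\ge1$. The second factor is positive because $\cK>0$. Hence $\langle v_0,v_0\rangle>0$ and $\det(\cG^{(i,j)})<0$.

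I would then read off the signature by elementary linear algebra: a non-degenerate real symmetric $3\times3$ form with negative determinant has an odd number of negative eigenvalues, hence signature $(++-)$ or $(---)$; the positive diagonal entry $\langle v_0,v_0\rangle$ rules out $(---)$, leaving $(++-)$. If desired I would also record the explicit $2\times2$ block $\langle v'_i,v'_j\rangle$ (the $k=0$ analogue of \eq{vprime-inner}), whose determinant is manifestly negative, to exhibit that the single negative direction lies inside the $v'$-plane.

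Finally, for the extra mode $\cA_\mu^{(-)}[\phi^{(s,0)}]$: it lies in $\cC^{s-1}\otimes\R^4$, whereas the three modes $\tilde\cA^{(i)}[\phi^{(s,0)}]$, $i\in\{-,n,g\}$, all lie in $\cC^{s}\otimes\R^4$. Since $\cC^{s-1}$ and $\cC^{s}$ are distinct eigenspaces of the self-adjoint operator $\cS^2$, they are orthogonal under \eq{inner-basic-def}, so the extra mode is orthogonal to the whole $3$-dimensional block. Its norm is positive because it is space-like, $x^\mu\cA_\mu^{(-)}[\phi^{(s,0)}]=0$ by \eq{A-s0-spacelike}: pointwise the integrand $\eta_{\mu\nu}\overline{\cA^\mu}\cA^\nu$ is then a sum over directions orthogonal to the time-like $x^\mu$, hence non-negative, and strictly positive after integration for a nonzero mode, matching the direct computation in section \ref{sec:inner}. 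I do not anticipate a genuine obstacle here; the only real work is the sign bookkeeping against the admissibility bound, which in this sector reduces to the harmless inequality $s(s-1)\ge0$.
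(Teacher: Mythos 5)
Your proposal is correct and follows essentially the same route as the paper: identify the collapsed $3$-tuple, use $\langle v_0,v_0\rangle>0$ and the negative determinant \eq{det-Gij-k0} (with the signs fixed by $\cK>0$, $d_{-+}<0$ and the admissibility bound reducing to $s(s-1)\ge 0$) to read off signature $(++-)$, then treat the special mode via space-likeness. Your two small variations — deducing orthogonality of $\cA^{(-)}[\phi^{(s,0)}]$ to the tuple from the $\cS^2$-grading ($\cC^{s-1}$ vs.\ $\cC^{s}$) rather than from the explicit appendix formulas, and spelling out the parity-of-negative-eigenvalues step — are both valid and, if anything, tidier than the paper's phrasing.
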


\paragraph{Explicit inner product for $v'_{12}$ modes.}

We can compute the inner product for the $v'_{1,2}$ modes defined in \eq{vprime-def-k0}.
Their inner product is given by 
\begin{align}
 \langle v'_i,v'_j\rangle 
  &= \frac {r^4 (2s+1) (-\cK)}{(\cK + (s+1)^2)  (\cK + s^2)} \,
  \begin{pmatrix}
   a & b \\
   b & -\frac{a}{\cK}
  \end{pmatrix},  \nn\\[1ex]
  a &=  -\frac{\cK + (s+1)^2}{2s+1} \left(R^2\Box +s+2\right) + (s+1)\left( R^2\Box+2\right)   \nn\\
  b &= -R^2\Box -2 + \frac{\cK + (s+1)^2}{2s+1} 
\end{align}
noting that 
\begin{align}
 -\cK &= (s+1)^2 + (2s+3)d_{-+} \ .
\end{align}
The $2\times 2$ matrix  again has negative determinant,
\begin{align}
 \det \begin{pmatrix}
   a & b \\
   b & -\frac{a}{\cK}
  \end{pmatrix} 
  &= -\frac 1{\cK} (\cK+(s+1)^2) (\cK + s^2) 
   \left(\cK + (1 + R^2\Box - s)^2\right) <0 \ ,
     \label{det-inner-vp12-k0}
\end{align}
and we recognize the last factor from  \eq{det-Gij-k0}. 
A  basis of two null vectors can be found if desired.

\paragraph{The $s = 0$ sector.}

As discussed above there are only two modes
$v'_{\pm}$ in this case, 
since $\cA_\mu^{(n)}  \equiv 0$ vanishes identically.
The considerations of the $v'_{1,2}$ modes defined in \eq{vprime-def-k0}
goes through, and the determinant of the $2\times 2$ inner product matrix is 
still given by \eq{det-inner-vp12-k0} evaluated at $s=0$,
\begin{align}
 \det \begin{pmatrix}
   a & b \\
   b & -\frac{a}{\cK}
  \end{pmatrix} 
  &= -(\cK+1)  \left(\cK + (1 + R^2\Box)^2\right) <0 \ .
     \label{det-inner-vp12-k00}
\end{align}
Hence the signature is $(+-)$, and we obtain 
\begin{lem}
In any 2-dimensional space of modes $\tilde\cA^{(i)}[\phi], \  i\in\{-,g\}$
for admissible $\phi\in\cC^{(0,0)}$,
the metric $\cG^{(i,j)}$ is non-degenerate with signature  $(+-)$. 
\label{lemma-orthmodes-k00}
\end{lem}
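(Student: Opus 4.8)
The plan is to obtain Lemma~\ref{lemma-orthmodes-k00} as the $s=0$ degeneration of the primal‑sector analysis already carried out for Lemma~\ref{lemma-orthmodes-k0}, reading off the signature from the determinant of the surviving $2\times 2$ Gram matrix. First I would record which of the four candidate modes $\tilde\cA^{(i)}[\phi]$ of \eq{A-tilde-def} remain for $\phi=\phi^{(0,0)}\in\cC^0$. Since $D^-\phi\in\cC^{-1}=\{0\}$, the mode $\tilde\cA^{(+)}[\phi]=\cA^{(+)}[D^-\phi]$ vanishes; and since $\cA^{(-)}[\phi^{(0)}]=\{x_\mu,\phi\}_{-1}=0$, the mode $\tilde\cA^{(n)}[\phi]=D^+\cA^{(-)}[\phi]$ also vanishes identically. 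Hence only $\tilde\cA^{(-)}[\phi]=\cA^{(-)}[D^+\phi]$ and $\tilde\cA^{(g)}[\phi]=r^2R\,\cA^{(g)}[\phi]$ survive, and the real pair $v'_1,v'_2$ of \eq{vprime-def-k0} — whose component along the (now vanishing) $\tilde\cA^{(n)}$ direction is simply dropped — is a candidate basis of this at most $2$‑dimensional space.

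Next I would use the inner‑product formulas of section~\ref{sec:inner-prod-calc}, which give \eq{vprime-inner}, and specialize them to $s=0$; the resulting $2\times 2$ block has determinant \eq{det-inner-vp12-k0} at $s=0$, namely \eq{det-inner-vp12-k00},
\begin{align}
 \det \begin{pmatrix} a & b \\ b & -\frac{a}{\cK}\end{pmatrix}
 = -(\cK+1)\left(\cK+(1+R^2\Box)^2\right) ,
\end{align}
up to a strictly positive prefactor. Positivity of $\cK$ on admissible modes is Lemma~\ref{chi-pos-lemma}, so $\cK+1>0$; moreover $R^2\Box$ is a self‑adjoint operator with respect to \eq{inner-basic-def} and hence has real eigenvalues, so $\cK+(1+R^2\Box)^2>0$ as well. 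Therefore the determinant is strictly negative. A real symmetric $2\times 2$ matrix with negative determinant is non‑degenerate with one positive and one negative eigenvalue, i.e. signature $(+-)$; in particular the two modes are linearly independent and span a genuine $2$‑plane, which is exactly the assertion of Lemma~\ref{lemma-orthmodes-k00}.

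There is no serious obstacle here: the statement is essentially a corollary of Lemma~\ref{lemma-orthmodes-k0}, obtained by deleting one more row and column from the Gram matrix. The only points that need a word of justification are (i) that no additional mode survives at $s=0$, which is handled by the two vanishing identities above, and (ii) the reality of $R^2\Box$ on admissible modes, which follows from self‑adjointness with respect to the invariant inner product. Beyond that the computation is the same routine specialization of the explicit formulas already established for the generic and $k=0$ sectors, so I would keep it brief.
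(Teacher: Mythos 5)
Your proof is correct and takes essentially the same route as the paper: identify that only $\tilde\cA^{(-)}$ and $\tilde\cA^{(g)}$ survive at $s=0$, specialize the $v'_{1,2}$ determinant formula \eq{det-inner-vp12-k0} to $s=0$, and read off signature $(+-)$ from negativity of the determinant using $\cK>0$ (Lemma~\ref{chi-pos-lemma}) and reality of the $\Box$ spectrum. The paper states this more tersely, but the substance and the key invocations are identical.
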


\subsection{The scalar sector $k=s\neq 0$}
\label{sec:k=s-inner}

In this case, \eq{inner-product-det} gives $ \det(\cG^{(i,j)}) =0$.
This means that there is a null mode, 
which is of course precisely the mode found in \eq{v-null-1}.
In fact we  show in section \ref{sec:v-null-vanish} that it vanishes identically,
\begin{align}
v_{\pm} = (\frac 1{s^2}, \frac{2s+3}{1 + s}, - \frac {2s+1}s, 1) \ = v_{\rm null} = 0 \ .
\label{v-null-vanish}
\end{align}
One can check that the extra mode \eq{extra-mode-k=l} 
\begin{align}
 v_{\rm extra} = 
-2\left(\frac{2}{s^2},\frac{2s+3}{(s+1)^2} , -\frac{2s+1}{s^2} , 0 \right)
\end{align}
is orthogonal to both $v'_{\pm}$, and its inner product 
is positive,
\begin{align}
 \langle v_{\rm extra},v_{\rm extra}\rangle
  &=  \frac {r^4}{s^3 (1 + s)}  \cK(\cK  - \Box R^2 + s^2 + s-1) > 0 \ .
\end{align}
However, we will see that
$v_{\rm extra}$ is not physical.
This extra mode also explains why there is only one factor $(k-s)^2$ in $\det(\cG^{(i,j)})$ \eq{inner-product-det},
which arises from the inner products of either $v_\pm$ \eq{inner-v+}.
The $v'_{\pm}$ modes can again be replaced by $v'_{1,2}$ \eq{vprime12}, and 
the inner product in the space spanned by  $v'_{1,2}$ has signature $(+-)$, which 
can be inferred from 
\begin{align}
\det\langle v'_i,v'_j\rangle 
 &= - \frac{r^8 \cK^3 
 \left(\cK + (R^2\Box-s (s+1)+1)^2\right)} {(\cK+s^2)^3 \left(\cK+ (s+1)^2 \right)} <0
 \label{inner v12-ks}
\end{align}
as before, using  $s^2 (\cK+s^2) = -(4s^2-1)d_{+-}$.
This means that there are 3 linearly independent modes 
$\{v'_{1,2},v_{\rm extra}\}$  whose metric has signature $(-++)$, and we have established
\begin{lem}
In any 3-dimensional space of modes $\tilde\cA^{(i)}[\phi], \  i\in\{+,-,g\}$
for admissible $\phi\in\cC^{(s,s)}$ with $s\neq 0$,
the metric $\cG^{(i,j)}$ is non-degenerate with signature  $(++-)$. 
\label{lemma-orthmodes-ks}
\end{lem}

These modes are equivalently spanned by $v_{\rm extra},v'_1,v'_2$, while 
the  $\tilde\cA^{(n)}[\phi]$ mode
is a linear combination of these modes via \eq{v-null-vanish}.

To summarize, we have identified the following scalar modes: 

\paragraph{\und{$\cA\in\cC^0$}:}

The scalar modes $\cA\in\cC^0$ are given by 
$\tilde\cA^{(-)}[\phi]$ and $\cA^{(g)}[\phi]$ for $\phi\in\cC^0$, with non-degenerate metric with signature $(+-)$.
The mode $\cA^{(n)}[\phi]$ vanishes identically. 

\paragraph{\und{$\cA\in\cC^s$:}}

The scalar modes  $\cA\in\cC^s$ for $s\neq 0$ are given by 
$\tilde\cA^{(-)}[\phi]$ and $\tilde\cA^{(+)}[\phi]$ and $\tilde\cA^{(g)}[\phi]$ 
for $\phi = \phi^{(s,s)} = (D^+)^s \phi^{(0)}$, 
with non-degenerate metric with signature $(++-)$.
We will see that for $s=1$, the only physical mode in this sector  leads to scalar metric perturbations, 
and in particular to the linearized Schwarzschild solution \cite{Steinacker:2019dii}.

\subsection{Completeness}
\label{sec:completeness}

Now we want to understand whether the above modes are complete, i.e. 
if they span the space of all fluctuations $\cA$. 
This will be addressed  by counting the number of degrees of freedom
(dof), i.e. real scalar fields on $\cM^{3,1}$, 
at each sector $\cA_\mu \in \cC^s \otimes \R^4$.

\paragraph{\und{$\cA_\mu \in \cC^0 \otimes \R^4$}.}

This sector clearly contains 4 dof.
Among the above modes, only the spin 1 mode $\cA^{(-)}[\phi^{(1)}]$ and 
the spin 0 modes $\cA^{(g)}[\phi^{(0)}]$ are in $\cC^0 \otimes \R^4$, 
while  $\cA^{(n)}[\phi^{(0)}]$ vanishes.
It follows from the previous considerations that all these 
modes are independent.
Now $\cA_\mu^{(-)}[\phi^{(1)}]$ i.e. $\phi^{(1)}$ encodes the most general space-like vector field on 
$\cM^{3,1}$, cf. \eq{A-s0-spacelike}, which amounts to 
3 degrees of freedom. Together with the spin 0 mode  
$\cA^{(g)}[\phi^{(0)}]$ we obtain 4 dof, which is precisely the content of 
$\cC^0 \otimes \R^4$. It follows that the above list of modes is complete.
 These modes are elaborated explicitly in section \ref{sec:C0-modes-explicit}.

\paragraph{\und{$\cA_\mu \in \cC^s \otimes \R^4$}.}

This sector  contains $4(2s+1)$ dof.
It is convenient to ignore the $(s,k)$ substructure of the $\cC^s$ here.
Among the above modes, $\cA^{(-)}[\phi^{(s+1)}]$,
$\cA^{(n)}[\phi^{(s)}], \cA^{(g)}[\phi^{(s)}]$  and  $\cA^{(+)}[\phi^{(s-1)}]$
are in $\cC^s \otimes \R^4$. If they were all independent, this would 
provide all the $(2s+3)+2(2s+1)+(2s-1) = 4(2s+1)$ dof.
The above results show that these modes are linearly independent
{\em except} for the scalar sector discussed in section \ref{sec:k=s-inner}, which provides only 
3 rather than 4 modes due to the relation \eq{v-null-vanish}. 
Therefore there must be {\em one exceptional scalar dof} for each $s\geq 1$,
\begin{align}
 \cA^{(ex,s)}\in \cC^s \otimes \R^4 , \qquad s \geq 1 \ .
\end{align}
Since none of the regular scalar modes $\tilde\cA^{(i)}$ is null,
we can choose  $\cA^{(ex,s)}$  to be orthogonal to all  
$\tilde\cA^{(i)}$. Due to the explicit form of the $\cA^{(i)}$, this implies that 
the $\cA^{(ex,s)}$ can be chosen as follows
\begin{align}
\{t^\mu,\cA^{(ex,s)}_\mu\} = 0 = \{x^\mu,\cA^{(ex,s)}_\mu\} \ , \qquad 
 \cA^{(ex,s)} = (D^+)^{s-1}\cA^{(ex,1)} \ .
\label{exceptional-modes-properties}
\end{align}
Orthogonality  implies that
this sector is  respected by $\cD^2$, and the  physical constraint 
is satisfied.
Further details are discussed in appendix \ref{sec:except-modes},
however the explicit form of $\cA^{(ex,s)}$ is not known.

Taking these exceptional modes into account, 
we have recovered all  $4(2s+1)$ dof in $\cC^s \otimes \R^4$,
 so that the list of modes is complete.
Together with the above lemmas, we have shown
\begin{thm}
The  $\tilde\cA^{(i)}[\phi^{(s)}]$ modes \eq{A-tilde-def} along with the $\cA^{(-)}[\phi^{(s,0)}]$ for all $s\geq 0$ and
the exceptional modes $\cA^{(ex,s)}$ for $s\geq 1$
span the space of all fluctuations $\cA$.
A basis is  obtained by dropping $\tilde\cA^{(n)}[\phi^{(s,s)}]$ and  $\tilde\cA^{(+)}[\phi^{(s,0)}]$.
\label{thm:basis}
\end{thm}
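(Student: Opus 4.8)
The plan is to recast the assertion as a sector-wise surjectivity statement and close it using the inner-product data already assembled. First I would note that, by the list \eq{eigenmodes-noshift}, the constructed modes landing in $\cC^s\otimes\R^4$ are exactly $\cA^{(g)}[\cC^s]$, $\cA^{(n)}[\cC^s]$, $\cA^{(+)}[\cC^{s-1}]$ and $\cA^{(-)}[\cC^{s+1}]$, so the theorem is equivalent to
\begin{align}
 \cC^s\otimes\R^4 = \cA^{(g)}[\cC^s] + \cA^{(n)}[\cC^s] + \cA^{(+)}[\cC^{s-1}] + \cA^{(-)}[\cC^{s+1}]
 \qquad \text{for all } s\geq 0 \ .
 \nonumber
\end{align}
The repackaging into the $\tilde\cA^{(i)}$ of \eq{A-tilde-def} together with the special modes $\cA^{(-)}[\phi^{(s,0)}]$ is a harmless change of variables: since $\cC^s=\bigoplus_{k=0}^s\cC^{(s,k)}$ with $\cC^{(s,k)}=(D^+)^k\cC^{(s-k,0)}$, and $D^\pm$ are injective on admissible modes (because $D^+D^-=r^2 d_{+-}$ and $D^-D^+=r^2 d_{-+}$ are non-zero scalars on each $\cC^{(s,k)}$, by \eq{chi-def} and Lemma \ref{chi-pos-lemma}), the regular tuples plus the special modes span the same subspace, while $\tilde\cA^{(+)}[\phi^{(s,0)}]=\cA^{(+)}[D^-\phi^{(s,0)}]\equiv0$ and $\tilde\cA^{(n)}[\phi^{(s,s)}]$ (dependent by \eq{v-null-vanish}) are correctly discarded.

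Second I would prove the displayed identity by an orthogonality-plus-counting argument. Let $W\subseteq\cC\otimes\R^4$ be the span of all constructed modes. It is invariant under the mutually commuting operators $\cD^2$, $\tilde\cI$, $\cS^2$ and $D^+D^-$, and by Lemmas \ref{lemma-orthmodes-generic}, \ref{lemma-orthmodes-k0}, \ref{lemma-orthmodes-k00} and \ref{lemma-orthmodes-ks} the invariant inner product \eq{inner-basic-def} is non-degenerate on each joint eigenspace inside $W$ (with the signatures stated there), hence non-degenerate on $W$. Therefore $\cC\otimes\R^4=W\oplus W^\perp$ orthogonally, with $W^\perp$ again invariant under $\cD^2$, $\tilde\cI$, $\cS^2$ and $D^+D^-$. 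It then suffices to show $W^\perp=0$, which I would do by comparing multiplicities in each joint eigenspace labelled by $(s,\Box_H,x,k)$: the constructed modes already realise $4$ linearly independent vectors there for $0<k<s$, and $3$ resp.\ $2$ in the boundary cases $k\in\{0,s\}$ resp.\ $s=k=0$, plus the extra $\cA^{(-)}[\phi^{(s,0)}]$ when $k=0$; these numbers are to be matched against the multiplicity of the relevant $SO(4,1)$ principal-series irrep inside $\cC^s\otimes\R^4$. For the scalar part $\cC^0\otimes\R^4$ this is exactly the two-mode count established in \cite{Sperling:2018xrm} (and \cite{Sperling:2017gmy} for $S^4_N$), and the higher-spin sectors reduce to it by peeling off $D^+$'s with the intertwiner relations of section \ref{sec:modes}; so no room is left and $W^\perp=0$.

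As an independent check and guide I would also verify the identity constructively: given $\cA_\mu$, first subtract an appropriate $\cA^{(g)}[\phi^{(s)}]$ to reach the gauge-fixed locus $\cG(\cA)=0$ (possible because $\cG(\cA^{(g)}[\phi^{(s)}])=-\Box\phi^{(s)}$, which is invertible on the relevant off-shell modes, with the on-shell slice \eq{on-shell-all} treated separately), and then the gauge-fixed remainder is generated by $\cA^{(\pm)}$ and $\cA^{(n)}$, with the primal pieces $\cA^{(-)}[\cC^{(s,0)}]$ supplied precisely by the special modes. The main obstacle, and the only genuinely new input beyond the linear-independence and inner-product computations already in hand, is certifying these eigenspace multiplicities --- i.e.\ that no ``fifth'' covariant mode has been missed in any sector; the fact that all modes are assembled from the finite list of covariant building blocks $x^\mu$, $t^\mu$, $\theta^{\mu\nu}$, $\{\cdot,\cdot\}$, $D^\pm$ bounds the possibilities, but making the count fully watertight beyond the semi-classical limit is where care will be required.
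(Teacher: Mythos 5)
Your proposal follows essentially the same strategy as the paper: establish linear independence of the constructed modes from the inner-product Lemmas \ref{lemma-orthmodes-generic}--\ref{lemma-orthmodes-ks}, dispose of $\tilde\cA^{(n)}[\phi^{(s,s)}]$ via \eq{v-null-vanish} and of $\tilde\cA^{(+)}[\phi^{(s,0)}]\equiv 0$ trivially, and then argue completeness by a counting argument. The difference lies in how the count is closed. You organize it as an orthogonal-complement argument ($W\oplus W^\perp$ with $W^\perp$ invariant under the commuting observables) and propose to match irrep multiplicities in each joint eigenspace, but you explicitly leave that multiplicity certification open (``the main obstacle\dots is certifying these eigenspace multiplicities''). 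The paper closes exactly this gap with a much more elementary count: $\cC^s\otimes\R^4$ is, via \eq{Cs-explicit}, a free module of rank $4(2s+1)$ over $\cC^0$ (i.e.\ $4(2s+1)$ real component functions), while the constructed modes landing in $\cC^s\otimes\R^4$ are parametrized by $\phi^{(s+1)}$, two copies of $\phi^{(s)}$, and $\phi^{(s-1)}$, contributing $(2s+3)+2(2s+1)+(2s-1)=4(2s+1)$ degrees of freedom; for $s=0$ the spacelike vector field $\phi^{(1)}$ gives $3$ dof and $\phi^{(0)}$ gives $1$, again matching. Combined with independence this forces surjectivity, with no representation-theoretic multiplicity input needed. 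So your skeleton is sound and your identification of the one genuinely missing step is accurate, but the proof is not complete as written; the dof count by module rank is the simple device that finishes it (at the semi-classical level, which is all the theorem claims).
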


From a representation theory point of view, we have essentially
decomposed the tensor product
\begin{align}
 \cC \otimes \R^4 = \oplus (...)
\end{align}
into $SO(3,1)$ irreps. It is  natural to expect that
that each irrep in $\cC$ arises  with multiplicity 4 on the rhs, and we have seen
that this holds indeed for the regular modes.
However for non-compact Lie groups, the appearance of extra modes $\cA^{(ex,s)}$ in the tensor product is not too surprising.

\subsection{Physical constraint, Hilbert space and no ghost}
\label{sec:physical}

We first observe that an (admissible, i.e. integrable) fluctuation mode $\cA$
satisfies the gauge-fixing condition 
$\{t^\mu,\cA_\mu\} = 0$ if and only if it is orthogonal to all pure gauge modes,
\begin{align}
 \langle \cA^{(g)},\cA\rangle = 0 \ .
 \label{gaugefix-inner}
\end{align}
Now consider an on-shell mode $\cA\in\cC^s$ in some 4-dimensional mode space $\tilde \cA^{(i)}[\phi], \ i\in\{+-ng\}$ determined by some $\phi\in\cC^{(s,k)}$
with $\Box\phi = 0$ and  $s>k>0$. 
Since that 4-dimensional space of modes
has signature $(+++-)$ due to Lemma \ref{lemma-orthmodes-generic} and $\cA^{(g)}$ is null, 
the gauge-fixing constraint \eq{gaugefix-inner} leads to a 3-dimensional subspace 
with signature $(++0)$, which contains $\cA^{(g)}$.
Then the usual definition 
\begin{align}
 \cH_{\rm phys} = \{\mbox{gauge-fixed on-shell modes}\}/_{\{\mbox{pure gauge modes}\}  }
 \label{H-phys}
\end{align}
leads  to 2 modes with positive norm. 
This establishes the generic part of 
\begin{thm}
 The space  $\cH_{\rm phys}$ \eq{H-phys} of admissible solutions of 
 $\big(\cD^2-\frac{3}{R^2}\big)\cA = 0$ which are gauge-fixed $\{t^\mu,\cA_\mu\} = 0$ 
 modulo pure gauge modes inherits a positive-definite inner product, and forms a Hilbert space.
\label{thm:no-ghost}
\end{thm}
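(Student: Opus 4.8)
# Proof Proposal for Theorem \ref{thm:no-ghost}

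The plan is to assemble the theorem from the case-by-case analysis of the inner product matrix already carried out, organizing the fluctuation space by the $SO(3,1)$-irreducible blocks $\tilde\cA^{(i)}[\phi^{(s,k)}]$ labelled by $(s,k)$, and checking that in each block the physical subspace (gauge-fixed on-shell modes modulo pure gauge) is positive-definite. By Theorem \ref{thm:basis} these blocks, together with the special modes $\cA^{(-)}[\phi^{(s,0)}]$, span all of $\cC\otimes\R^4$ and are mutually orthogonal, so it suffices to treat one block at a time. Within each block, $\cD^2$ acts as a scalar (eq.\ \eqref{Apmg-degeneracy}), so the on-shell condition $\big(\cD^2-\tfrac{3}{R^2}\big)\cA=0$ either retains the whole block (when $\Box\phi=0$) or kills it, with the single exception of $\cA^{(-)}[\phi^{(s,0)}]$, whose on-shell condition is $\big(\Box-\tfrac{2s}{R^2}\big)\phi^{(s,0)}=0$.

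First I would dispose of the special modes: by Lemma \ref{lemma-orthmodes-k0} the mode $\cA^{(-)}[\phi^{(s,0)}]$ is orthogonal to all regular modes and has positive norm; by \eqref{A-s0-spacelike} it is space-like, hence automatically orthogonal to $\cA^{(g)}$ (which involves $\{t^\mu,\cdot\}$, a momentum-type contraction — more directly, \eqref{A2-gaugefix} says it is gauge-fixed) and it is not pure gauge, so it contributes positive-norm physical modes. Next I would handle the generic block $s>k>0$, which is the statement written out in the excerpt preceding the theorem: Lemma \ref{lemma-orthmodes-generic} gives signature $(+++-)$ on the $4$-dimensional block, $\cA^{(g)}$ is a null vector in it, and the gauge-fixing constraint \eqref{gaugefix-inner} — equivalent to orthogonality to $\cA^{(g)}$ — cuts out a $3$-dimensional subspace of signature $(++0)$ with radical spanned by $\cA^{(g)}$; quotienting by the pure gauge mode leaves a $2$-dimensional space with signature $(++)$, hence positive-definite. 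The only genuinely new work is to confirm that the linear-algebra step "a $(+++-)$ form restricted to the orthogonal complement of a null vector has signature $(++0)$ with the null vector in the radical" is valid, which is elementary Sylvester/Witt-index bookkeeping over an indefinite form.

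Then I would run the remaining special cases through the same machine. For $k=0$, $s\neq0$: Lemma \ref{lemma-orthmodes-k0} gives a $3$-dimensional regular block of signature $(++-)$ containing the null vector $\cA^{(g)}$; the same cut-and-quotient yields $1$ positive-norm physical mode, and the orthogonal special mode $\cA^{(-)}[\phi^{(s,0)}]$ adds another, matching the expected $2$ helicity states. For $k=s\neq0$: Lemma \ref{lemma-orthmodes-ks} gives a $3$-dimensional block (spanned by $v_{\rm extra},v'_1,v'_2$) of signature $(++-)$; here one must check that $v_{\rm extra}$ is \emph{not} gauge-fixed (indeed it has vanishing last, $\cA^{(g)}$-, component, so it is orthogonal to $\cA^{(g)}$, so it \emph{is} gauge-fixed by \eqref{gaugefix-inner} — so the statement in section \ref{sec:k=s-inner} that $v_{\rm extra}$ "is not physical" must mean it is pure gauge or fails some other physical condition, and this is exactly the subtlety I flag below), while the remaining directions reduce to $1$ physical scalar mode (the linearized Schwarzschild mode for $s=1$). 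For $s=k=0$: Lemma \ref{lemma-orthmodes-k00} gives signature $(+-)$, and after gauge-fixing and quotienting there is $0$ physical mode, consistently.

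The main obstacle will be the $k=s\neq0$ scalar sector and, relatedly, pinning down precisely which gauge-fixed modes are to be regarded as "pure gauge" in the quotient \eqref{H-phys}. In the generic block $\cA^{(g)}$ is literally the pure gauge mode and also the radical of the restricted form, so positivity of the quotient is automatic; but when a \emph{second} null direction appears (as it can when eigenvalues of $\tilde\cI$ collide, or in the $k=s$ block where $\tilde\cA^{(n)}=0$ collapses the block and $v_{\rm extra}$ appears as a substitute), one must verify that every on-shell null vector orthogonal to $\cA^{(g)}$ is in fact pure gauge — i.e.\ of the form $\cA^{(g)}[\L]$ for admissible $\L$ — so that the quotient kills it and the residual form is strictly positive. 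This requires identifying the full pure-gauge subspace $\{\cA^{(g)}[\L]:\L\in\cC\}\cap(\text{on-shell})$ inside each block and showing it exhausts the radical of the gauge-fixed on-shell form; I would do this by writing $\delta_\L\cA=\cA^{(g)}[\L]+\{\cA^\mu,\L\}$, noting the second term is higher order, and matching $\cA^{(g)}[\L^{(s,k)}]$ against the mode list via \eqref{D+D+-relations-1} — ff. Once the radical is shown to coincide with the pure-gauge directions in every block, positivity on the quotient follows from the signature computations already in hand, and the $SO(4,2)$-invariance of the inner product \eqref{inner-basic-def} guarantees the resulting positive-definite form is preserved by the symmetry, completing the identification of $\cH_{\rm phys}$ as a Hilbert space.
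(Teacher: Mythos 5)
Your proposal follows the paper's own proof in all essentials: the same decomposition into mutually orthogonal $(s,k)$-blocks, the same signature Lemmas \ref{lemma-orthmodes-generic}, \ref{lemma-orthmodes-k0}, \ref{lemma-orthmodes-ks}, \ref{lemma-orthmodes-k00}, the observation that $\cA^{(g)}$ is null on-shell, and the cut-by-$\cA^{(g)\perp}$-then-quotient argument, with completeness supplied by Theorem \ref{thm:basis} and square-integrability by admissibility. Two remarks. First, the parenthetical claim that $v_{\rm extra}$ is gauge-fixed because its $\tilde\cA^{(g)}$-coefficient vanishes is wrong: gauge-fixing means $\langle \cA^{(g)},\cA\rangle=0$ with respect to the \emph{indefinite} inner product $\cG^{(i,j)}$, whose off-diagonal entries $\cG^{(g,\pm)}\propto \int \phi'\,D^\pm D^\mp\phi$ and $\cG^{(g,n)}$ are nonzero, so a vector with no $\tilde\cA^{(g)}$-component is generically \emph{not} orthogonal to $\cA^{(g)}$. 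This is precisely why the paper states that none of the eigenvectors $v_\pm, v'_\pm, v_{\rm extra}$ of $\tilde\cI$ are physical and that nontrivial linear combinations are required; your error is localized and does not propagate, since your main argument never uses it. Second, the ``main obstacle'' you flag is not actually an obstacle: in each block the form is non-degenerate (for $k=s$ this means the form on the 3-dimensional span of the independent modes, per Lemma \ref{lemma-orthmodes-ks}) with exactly one negative direction, so the radical of its restriction to $\cA^{(g)\perp}$ is exactly the line $\langle\cA^{(g)}\rangle$, and a form of signature $(+\cdots+\,0)$ has no null vectors outside its radical. Hence every gauge-fixed on-shell null vector is automatically proportional to the pure gauge mode, and no separate identification of the pure-gauge subspace inside each block is needed — the Witt-index bookkeeping you correctly describe for the generic block already settles all cases.
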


\begin{proof}
The same argument works for the on-shell modes $\tilde \cA^{(i)}[\phi]\in\cC^s$ with 
primal $\phi\in\cC^{(s,0)}$.
For $s\neq 0$ there are 2 physical modes. One is given by a linear combination of the 
 $\tilde \cA^{(i)}[\phi], \ i\in\{-ng\}$ which has signature $(++-)$ before gauge fixing.
 In addition there is an extra on-shell physical mode 
 $\cA^{(-)}[\phi^{(s,0)}] \in \cC^{s-1}$ for
 $\big(\Box - \frac{2s}{R^2} \big)\phi^{(s,0)} = 0 $ \eq{on-shell-all}.
 
For $s=0$,  no physical mode arises from the $\tilde \cA^{(i)}\in\cC^0$ with $i\in\{-g\}$ 
which has signature $(+-)$ before gauge fixing, due to Lemma  \ref{lemma-orthmodes-k00}.
For the scalar on-shell modes 
$\tilde \cA^{(i)}[\phi]\in\cC^s, \ \ i\in\{+-g\}$ with 
$\phi\in\cC^{(s,s)}$, there is one physical linear combination 
according to Lemma \ref{lemma-orthmodes-ks}.
Finally, the exceptional modes $\cA^{(ex,s)}$ \eq{exceptional-modes-properties}
are physical, and their norm is positive because the $2s+1$ dof 
in $\cC^s\otimes \R^4$ with negative norm are
already accounted for by the regular modes as shown above.

The admissibility condition implies
square-integrability as discussed in 
\eq{inner-tensorfield-BoxH}. Together with the completess 
theorem \ref{thm:basis}, the statement follows. 

\end{proof}

Observe that the inner product \eq{Gij-matrix} for vector modes is precisely 
realized in the quadratic action \eq{eff-S-gaugefixed}. Hence the above theorem
is tantamount to the statement that the quadratic action is free of ghosts, i.e. physical modes with negative norm.
Although the  result is established only at the semi-classical (Poisson) level, 
most of the steps  would  go through in the non-commutative case using the $\mso(4,2)$-covariant
quantization map $\cQ$ \eq{quantization-map}, with minor adaptions due to the cutoff. 
Hence we expect  that the theorem holds also in the  non-commutative case.

There is no obstacle to determine $\cH_{\rm phys}$ explicitly.
 It turns out that none of the modes $v_\pm$  and $v'_\pm$ satisfy the physical constraint, 
 hence non-trivial combinations are required, and we can just as well us the $\tilde\cA^{(i)}$ modes. 
A simplification  arises for low spin, since the $\cA^{(-)}[\phi^{(2,*)}]\in\cC^1$ modes
are all physical due to \eq{A2-gaugefix}. 
This leads to the following sectors of  $\cH_{\rm phys}$:

\paragraph{The physical modes $\cA_\mu \in \cC^0$.}

As explained above, the off-shell modes $\cA_\mu \in \cC^0$ comprise 
the spin 1 mode $\cA^{(-)}[\phi^{(1)}]$ and 
the spin 0 modes $\cA^{(g)}[\phi^{(0)}]$ are in $\cC^0 \otimes \R^4$.
These modes are elaborated explicitly in section \ref{sec:C0-modes-explicit}.
Among these, only the spin 1 modes $\cA^{(-)}[\phi^{(1,0)}]$ are physical, and 
\begin{align}
  \cH_{\rm phys} \cap \cC^0 = \{\cA^{(-)}[\phi] \ \mbox{for} \  \phi\in\cC^{(1,0)},\ 
   \big(\Box - \frac{2}{R^2} \big)\phi = 0 \} \ .
\end{align}
These modes satisfy $\del^\mu\cA_\mu = 0 = x^\mu \cA_\mu$,
and describe a spin 1 Yang-Mills (or Maxwell) field.

\paragraph{The physical modes $\cA_\mu\in \cC^1$.}

They arise from the 12 
off-shell modes $\cA^{(-)}[\phi^{(2)}]$, $\cA^{(n)}[\phi^{(1)}]$, $\cA^{(g)}[\phi^{(1)}]$ and $\cA^{(ex,1)}$ modulo 
the relation \eq{vnull-s1-vanish}. 
Among these, all  $\cA^{(-)}[\phi^{2)}]$ are physical due to \eq{A2-gaugefix}, 
and so is the exceptional scalar mode $\cA^{(ex,1)}$, whose  on-shell condition is not known explicitly.
We claim that there are no further physical states in this sector, so that
\begin{align}
  \cH_{\rm phys} \cap \cC^1 = \{\cA^{(-)}[\phi]\ \mbox{for} \  \phi\in\cC^{(2,*)}, \ 
   \big(\Box - \frac{4}{R^2} \big)\phi = 0 \}\
   \cup\ \{\cA^{(ex,1)}; \ (\cD^2-\frac{3}{R^2}) \cA^{(ex,1)}=0  \}\ .
   \label{Hphys-C1}
\end{align}
They satisfy $\{t^\mu,\cA_\mu\} = 0$, and $x^\mu \cA_\mu[\phi^{(2,0)}]=0$.
To see this, 
note that  $\cA^{(n)}[\phi^{(1,0)}]$ is in the same tuple of primal spin 1 modes as 
$\tilde\cA^{(-)}[\phi^{(1,0)}]$ and $\tilde\cA^{(g)}[\phi^{(1,0)}]$ which contains only one physical mode 
due to Lemma \ref{lemma-orthmodes-k0},
 given by $\cA^{(-)}[D^+\phi^{(1,0)}]=\tilde\cA^{(-)}[\phi^{(1,0)}]$. 
 Note that the on-shell condition in \eq{Hphys-C1}
for $\phi = D^+\phi^{(1,*)}$
is equivalent to $\Box\phi^{(1,*)} = 0$ due to \eq{Box-D-relation}.
 Similarly,
$\cA^{(+)}[\phi^{(0)}] \sim \tilde\cA^{(+)}[D^+\phi^{(0)}]$ is in the same tuple of scalar modes as 
$\tilde\cA^{(-)}[D^+\phi^{(0)}]$ 
and $\tilde\cA^{(g)}[D^+\phi^{(0)}]$, and due to 
Lemma \ref{lemma-orthmodes-ks} only $\tilde\cA^{(-)}[D^+\phi^{(0)}] = \cA^{(-)}[D^+D^+\phi^{(0)}]$ is physical.
Again the on-shell condition in \eq{Hphys-C1}
for $\phi = D^+D^+\phi^{(0)}$
is equivalent to $\Box D^+\phi^{(0)} = 0$.

The modes in \eq{Hphys-C1} govern the linearized gravity sector,
as discussed below.

\paragraph{The physical modes $\cA_\mu\in \cC^s$ with $s\geq 2$.}

For the regular modes, the physical constraint $\{t^\mu,\cA_\mu\} = 0$  
must be solved directly. 
 To determine $\cH_{\rm phys}$, we can drop any contribution from $\cA^{(g)}$. 
  The simplest case is the  mode \eq{on-shell-all} which is always physical 
  due to \eq{A2-gaugefix-1},
\begin{align}
  \{\cA^{(-)}[\phi^{(s+1,0)}]\ \mbox{for}  \ 
   \big(\Box - \frac{2(s+1)}{R^2} \big)\phi^{(s+1,0)} = 0 \} \ \subset \ \cH_{\rm phys} \cap \cC^s  \ .
   \label{Hphys-Cs-except}
\end{align}
All other modes 
are contained in some regular $\tilde\cA^{(i)}$ tuple, and we need to work a bit harder.
The gauge fixing constraint for the $\cA^{(\pm)}$  modes is given in \eq{A2-gaugefix-1}, and
for the $\cA^{(n)}$  mode it is
\begin{align}
 \{t^\mu,\cA_\mu^{(n)}[\phi^{(s)}]\} &= \{t^\mu,D^+\cA_\mu^{(-)}[\phi^{(s)}]\}
  = D^+\{t^\mu,\cA_\mu^{(-)}[\phi^{(s)}]\} - \frac 1R \{x^\mu,\cA_\mu^{(-)}[\phi^{(s)}]\}_+   \nn\\
  &= \frac{-s+2}{R} D^+D^-\phi^{(s)} - \frac 1R \{x^\mu,\cA_\mu^{(-)}[\phi^{(s)}]\}_+ \nn\\
  &= \frac{1}{R}\Big( (-s+3)D^+D^- + \a_s (\Box_H - 2 r^2(s+1))\Big) \phi^{(s)}  
  \label{An-phys-1}
 \end{align} 
 using \eq{magic-formula-xmu}, consistent with \eq{An-Ag-inner}. 
 We should hence determine all on-shell linear combinations 
\begin{align}
 \cA_\mu^{(\rm phys)}[\phi] &=  c_+ \tilde\cA^{(+)}[\phi] + c_-\tilde\cA^{(-)}[\phi] 
     + c_n\cA^{(n)}[\phi] , \qquad \Box\phi = 0
\end{align}
 for $\phi\in\cC^s$ which satisfy the gauge-fixing constraint
\begin{align}
 0 &= R\{t^\mu,\cA^{(\rm phys)}_\mu[\phi]\} \nn\\
 &= \Big(c_+ (s+2) D^+ D^-
 + c_- (-s+1) D^- D^+
  + c_n \big(\a_s (\Box_H - 2 r^2(s+1)) + (-s+3) D^+ D^-\big)\Big) \phi \ .
  \label{physical-constraint-c}
\end{align} 
Replacing $\Box_H$ on-shell using \eq{BoxH-explicit-CR} allows to recast this
 into a 3-dimensional constraint on $H^3$,
 but does not lead to a simple expression.  The first two terms are  non-trivial since $s\geq 2$.
%
%
%
%

Consider first
the primal tuple $\tilde\cA^{(i)}[\phi^{(s,0)}]$ for $i=-,n,g$. This contains one physical mode 
due to Lemma \ref{lemma-orthmodes-k0}, which we can choose to be a linear combination with $c_n=1$, 
\begin{align}
 \{c_-\tilde\cA^{(-)}[\phi^{(s,0)}] + \tilde\cA^{(n)}[\phi^{(s,0)}] \ \ \mbox{for}  \ 
   \Box\phi^{(s,0)} = 0 \ \mbox{and} \ \eq{physical-constraint-c} \} \ \subset \ \cH_{\rm phys} \cap \cC^s  
   \label{Hphys-Cs-primal}
\end{align}
where $c_-$ is determined by solving the above constraint.
Next, the scalar tuple  $\tilde\cA^{(i)}[\phi^{(s,s)}]$ for $i=+,-,g$ contains also one physical mode 
due to Lemma \ref{lemma-orthmodes-ks}, which we can choose to be 
\begin{align}
 \{\tilde\cA^{(-)}[\phi^{(s,s)}] + c_+\tilde\cA^{(+)}[\phi^{(s,s)}] \ \ \mbox{for}  \ 
   \Box\phi^{(s,s)} = 0 \ \mbox{and} \ \eq{physical-constraint-c} \} \ \subset \ \cH_{\rm phys} \cap \cC^s  \ .
   \label{Hphys-Cs-scalar}
\end{align}
Next, the generic tuple $\tilde\cA^{(i)}[\phi^{(s,k)}]$ for $i=+,-,n,g$ and  $s\neq k\neq 0 $ contains two physical modes
due to Lemma \ref{lemma-orthmodes-generic}, which we can choose to be 
\begin{align}
 \{\tilde\cA^{(-)}[\phi^{(s,k)}] + c_+\tilde\cA^{(+)}[\phi^{(s,k)}] \ \ \mbox{for}  \ 
   \Box\phi^{(s,k)} = 0 \ \mbox{and} \ \eq{physical-constraint-c} \} \ &\subset \ \cH_{\rm phys} \cap \cC^s  \nn\\
    \{c_-\tilde\cA^{(-)}[\phi^{(s,k)}] + \tilde\cA^{(n)}[\phi^{(s,k)}] \ \ \mbox{for}  \ 
   \Box\phi^{(s,k)} = 0 \ \mbox{and} \ \eq{physical-constraint-c} \} \ &\subset \ \cH_{\rm phys} \cap \cC^s  
   \label{Hphys-Cs-generic}
\end{align}
the first of which was found in \cite{Sperling:2019xar}. 
Finally, the exceptional scalar modes $\cA^{(ex,s)}$ are always physical,
upon imposing the on-shell condition
\begin{align}
 \{\cA^{(ex,s)}; \ (\cD^2-\frac{3}{R^2}) \cA^{(ex,s)}=0  \}\ \ &\subset \ \cH_{\rm phys} \cap \cC^s  
\end{align}
This completes the list of physical modes.

\paragraph{Discussion.}

To summarize, 
the model contains generically 2 physical modes parametrized by $\phi^{(s)} \in\cC^s$ with $\Box \phi^{(s)} = 0$ for each spin $s\geq 2$,
up to the exceptional cases discussed above. 
The $\phi^{(s)}$  are ''would-be massive`` spin $s$ modes, i.e.
they contain the $2s+1$ dof of massive spin $s$ multiplets with vanishing mass parameter, and  they decompose
further into a series of irreducible massless spin $s-k$ modes 
(in radiation gauge) 
as discussed in section \ref{sec:higher-spin}. 
These modes mix under the higher-spin gauge transformations.
It is hence plausible that some of these modes  become massive
in the interacting theory, which remains to be clarified.

Furthermore, we recall that at least for the regular modes,
the above Hilbert space is determined uniquely 
be the wavefunction on any space-like slide $H^3$.
More precisely, the 4-dimensional Casimir  $C^2[\mso(4,1)]$
is determined on-shell by the  space-like Casimir $C^2[\mso(3,1)]$. 
These statements  apply also in the fully noncommutative case, resulting in a 
picture which is  quite close to the usual setup in field theory.

%
%
%
%



%

\section{Metric fluctuation modes}
\label{sec:graviton}

To illustrate the physical relevance of the above results, we 
briefly discuss how  metric fluctuations arise from the above  
modes, elaborating on \cite{Sperling:2019xar}.
The effective metric for functions of $\cM^{3,1}$ on a perturbed background $Y = T + \cA$ can be extracted from the kinetic 
term  in \eqref{scalar-action-metric}, which defines the 
bi-derivation
\begin{align}
\begin{aligned}
 \g:\quad \cC\times \cC  \ &\to  \quad \cC  \\
  (\phi,\phi') &\mapsto \{Y^\a,\phi\}\{Y_\a,\phi'\} \ .
  \label{metric-full}
  \end{aligned}
\end{align}
Specializing to $\phi=x^\mu, \phi' = x^\nu$ we obtain  the coordinate form
\begin{align}
\g^{\mu\nu} &=  \obar\g^{\mu\nu} + \d_\cA \g^{\mu\nu} + [\{\cA^\a,x^\mu\}\{\cA_\a,x^\nu\}]_0
\label{gamma-nonlinear}
\end{align}
 whose linearized contribution in $\cA$ is given by
\begin{align}
\begin{aligned}
 \d_\cA \g^{\mu\nu} 
  = \sinh(\eta) \{\cA^\mu,x^\nu\}_0 + (\mu \leftrightarrow \nu) \ .
  \label{gravitons-H1}
  \end{aligned}
\end{align}
The projection on $\cC^0$ ensures that this is the metric for functions on $\cM^{3,1}$.
Clearly only $\cA \in \cC^1$ can contribute to $\d_\cA \g^{\mu\nu}$, which we assume henceforth.
To evaluate this explicitly,
it is convenient to consider the following rescaled graviton mode: 
\begin{align}
  h^{\mu\nu}[\cA] &\coloneqq   \{\cA^\mu,x^\nu\}_0 + (\mu \leftrightarrow \nu) ,
 \qquad h[\cA] = 2\{\cA^\mu,x_\mu\}_0 \ .
 \label{tilde-H-def}
\end{align}
Including the conformal factor in \eq{eff-metric-G},
this leads to the effective metric fluctuation \cite{Steinacker:2019dii}
\begin{align}
 \d G^{\mu\nu} &= \b^2\big(h^{\mu\nu} -\frac{1}{2} \eta^{\mu\nu}\,h\big) \ .
\end{align}
Let us discuss the mode content of $h^{\mu\nu}[\cA]$.
Recall that the 12 off-shell dof in $\cA_\mu = \cA_{\mu;\a} t^\a \in\cC^1$
are realized by $\cA^{(-)}[\phi^{(2)}]$,
$\cA^{(n)}[\phi^{(1)}], \cA^{(g)}[\phi^{(1)}]$, 
$\cA^{(+)}[\phi^{(0)}]$ and $\cA^{(ex,1)}$. Hence the 10 dof of the 
most general off-shell metric fluctuations are provided by 
$\cA^{(-)}[\phi^{(2)}]$, 
$\cA^{(g)}[\phi^{(1)}]$, and the scalar modes $\cA^{(+)}[\phi^{(0)}]$ and $\cA^{(ex,1)}$. The 6 physical metric fluctuations\footnote{In particular, even though  
$\cA^{(n)}[\phi^{(1)}]$ encodes off-shell dof of a  space-like 2-form in \eq{A-M31-spins},
it is not physical. However the 2-form may be determined by 
the metric modes arising from $\cA^{(-)}[\phi^{(2,1)}]$, which are  
physical. } 
arise from $\cA^{(-)}[\phi^{(2)}]$ and $\cA^{(ex,1)}$.
According to the results of section \ref{sec:physical},
the 5 physical  
would-be massive modes $\cA^{(-)}[\phi^{(2)}]$ decompose into the
massless graviton $\cA^{(-)}[\phi^{(2,0)}]$, 
one massless vector mode $\cA^{(-)}[\phi^{(2,1)}]$, and one scalar mode
$\cA^{(-)}[\phi^{(2,2)}]$.
The vector field can be extracted by
\begin{align}
 \{t_\mu,h^{\mu\nu}\} 
  &=  \{t_\mu, \{\cA^\mu,x^\nu\}_-\} + \{t_\mu, \{\cA^\nu,x^\mu\}_-\} \nn\\
  &=  \{\{t_\mu, \cA^\mu\},x^\nu\}_-  - \frac{2}{R} D^- \cA^\nu \nn\\
  &\stackrel{phys}{=} - \frac{2}{R} D^- \cA^\nu 
  \label{metric-div-general}
\end{align}
using the Jacobi identity and \eq{A2-gaugefix}, which
vanishes  for the $\cA^{(-)}[\phi^{(2,0)}]$ mode. 
Together with
\begin{align}
 \{t_\nu,\{t_\mu,h^{\mu\nu}\}\} 
  &=  \{t_\nu,\{\{t_\mu, \cA^\mu\},x^\nu\}_-\} 
     - \frac{2}{R}  \{t_\nu,D^- \cA^\nu\}  
 \ \stackrel{phys}{=} - \frac 1{R^2} h
  \label{metric-div-double}
\end{align}
we obtain
\begin{align}
 \{t_\nu,\{t_\mu,h^{\mu\nu}\}\} + \frac{1}{R^2}h = 0 
   \qquad \mbox{for physical} \ \cA \ .
  \label{metric-div-double-2}
\end{align}
This constraint is satisfied by the 
physical scalar metric mode arising from $\cA^{(-)}[D^+D^+\phi]$, which underlies
the linearized Schwarzschild solution \cite{Steinacker:2019dii}.

\section{Conclusions and outlook}

The results of this paper demonstrate that the model under 
consideration defines a consistent and ghost-free 
higher spin gauge theory in 3+1 dimensions, at least at the linearized level.
It leads to  truncated towers of higher-spin modes, which include spin 2
fluctuation modes of the effective metric 
leading  to Ricci-flat  metric perturbations as shown in \cite{Sperling:2019xar}, 
and  the linearized Schwarzschild solution as shown in  \cite{Steinacker:2019dii}.
Since it  is defined in terms of a maximally supersymmetric Yang-Mills  matrix model,
it is  plausible that this defines in fact a consistent quantum theory 
which includes gravity.
The crucial feature in contrast to standard Yang-Mills theories is that space-time is not put in 
by hand, but emerges in the semi-classical limit 
from the background solution given in terms of 3+1 large (in fact infinite) matrices.

Let us briefly discuss briefly the quantization of the model.
Even though the noncommutative space has only finitely many degrees of freedom per volume,
it is not automatic that the theory is finite and approximately local, because of UV/IR mixing \cite{Minwalla:1999px}.
It is well-known that in  NC field  theories, the UV degrees of freedom are dominated by  string-like modes,
which have both IR and UV properties and violate the Wilsonian paradigm. 
In order to have a good locality {\em and} UV behavior, their contributions in loops must cancel. 
It is also known that in 4 dimensions, sufficient cancellations occur basically only in the 
maximally supersymmetric $\cN=4$ case \cite{Matusis:2000jf,Jack:2001cr,Hanada:2014ima}. 
But this is precisely what happens in the IKKT matrix model.
In fact, one can view the present model as noncommutative $\cN=4$ SYM \cite{Aoki:1999vr}  with 
$\hs$ - valued gauge fields, where $\hs$ is the {\em finite} higher-spin-like
''algebra`` generated by $\theta^{\mu\nu}$ or $t^\mu$. 
This suggests that the theory should be UV finite at all loops, and 
it is manifest from the generic formulas in \cite{Ishibashi:1996xs,Blaschke:2011qu} 
that the one-loop effective action is indeed
finite, cf. \cite{Steinacker:2016vgf}.
Of course the argument is not fully justified since $\hs$ is not a standard 
Lie algebra but includes some $x$-dependence; nevertheless the similarity with $\cN=4$ SYM suggests that 
the present model might provide a UV-finite quantum theory including spin 2. This is certainly intriguing, and 
vindicates more detailed investigations.

Although the model is not yet sufficiently developed, it is tempting to compare and relate it with 
other approaches to quantum gravity.
Conformal or quadratic gravity (cf. \cite{Salvio:2018crh} and references therein)
 is reminiscent of Yang-Mills theory and is 
renormalizable \cite{stelle1977renormalization}, but contains ghosts. A similar issue may be expected 
in asymptotic safety scenarios \cite{Reuter:2012id}.
In contrast, we have seen that
the  present model does {\em not} contain ghosts, as the 
fundamental degrees of freedom are different and arise from matrix fluctuations.
String theory in its conventional formulation 
can claim to provide 9+1-dimensional (quantum) gravity, 
however compactification to 3+1 dimensions leads to a lack of predictivity known as
the landscape problem.
This is avoided in the IKKT model, which can be viewed as different, constructive approach 
to string theory.
Hence the present matrix model and the type of background under consideration 
may provide the basis for a consistent and useful 3+1-dimensional 
quantum theory including gravity, however it remains to be seen whether the resulting physics
is viable.

\paragraph{Acknowledgements.}

 I would like to thank C-S Chu, C. Iazeolla, H. Kawai, J. Nishimura and  E. Skvortsov
 for useful discussions, and I am grateful to
 E. Delay and W. Schlag for pointing me to the appropriate mathematical literature.
  This work was supported by the Austrian Science Fund (FWF) grant P32086-N27.

\section{Appendix}

\subsection{Useful identities for the vector modes $\cA$}

We recall the following gauge-fixing identities for the vector modes $\cA^\mu$ 
\begin{align}
\{t^{\mu},\cA_\mu^{(+)}[\phi^{(s)}]\} &= \frac{s+3}{R} D^+\phi^{(s)} \,,\nn\\
\{t^{\mu},\cA_\mu^{(-)}[\phi^{(s)}]\} &= \frac{-s + 2}{R} D^-\phi^{(s)} \, 
 \label{A2-gaugefix}
\end{align}
for $\phi^{(s)} \in\cC^s$, which follow  from (A.34) in \cite{Sperling:2019xar} 
\begin{align}
 R \{t^{\mu},\{x_\mu,\phi^{(s)}\}\} &= \frac 12\Big(\frac 12 \cS^2 - s(s+1) + 4\Big) D\phi^{(s)} 
 = (s+3) D^+\phi^{(s)} + (-s + 2) D^-\phi^{(s)} \nn\\
 R \{x_\mu,\{t^{\mu},\phi^{(s)}\}\}
 &= (s-1) D^+\phi^{(s)} - (s +2) D^-\phi^{(s)} \ .
 \label{t-x-CR-end}
\end{align}
In particular, we note
\begin{align}
 \{t^{\mu},\cA_\mu^{(+)}[D^-\phi^{(s)}]\} &= \frac{s+2}{R} D^+D^-\phi^{(s)} \,,\nn\\
\{t^{\mu},\cA_\mu^{(-)}[D^+\phi^{(s)}]\} &= \frac{-s + 1}{R} D^-D^+\phi^{(s)} \, .
 \label{A2-gaugefix-1}
\end{align}
The time component of $\cA^{(\pm)}$ along the vector field $\t$ \eq{tau-def} can be obtained 
using \eq{D-properties} as
\begin{align}
 x^\mu \cA_\mu^{(\pm)}[\phi^{(s)}] &= -x_4 D^\pm \phi^{(s)} \ .
 \label{Apmg-time}
\end{align}

\subsubsection*{Intertwiner relations for $\Box$ and $\cA$}

The following relations were shown in \cite{Sperling:2019xar}
\begin{align}
  \Box D^-\phi^{(s)} &= D^-\Big(\Box - \frac {2s}{R^2}\Big)\phi^{(s)}  \nn\\
  \Box D^+\phi^{(s)} &= D^+\Big(\Box + \frac {2s+2}{R^2}\Big)\phi^{(s)}  \nn\\
  \Box D^+D^- \phi^{(s)} &= D^+D^-\Box  \phi^{(s)}
    \label{Box-D-relation}
\end{align}
as well as
 \begin{align}
 \label{tilde-I-Apm}
\begin{aligned}
\tilde\cI(\cA_\mu^{(+)}[\phi^{(s)}]) &= r^2 (s + 3) \cA_\mu^{(+)}[\phi^{(s)}] +  r^2 R 
\{t_{\mu},D^+\phi^{(s)}\} \\
\tilde\cI(\cA_\mu^{(-)}[\phi^{(s)}]) &=  r^2 (-s + 2) \cA_\mu^{(-)}[\phi^{(s)}] + r^2 R 
\{t_{\mu},D^-\phi^{(s)}\}  \ 
\end{aligned}
\end{align}
and 
\begin{align}
 \cD^2 \cA_\mu^{(+)}[\phi^{(s)}] 
 &= \cA_\mu^{(+)}\Big[\Big(\Box + \frac{2s+5}{R^2} \Big)\phi^{(s)}\Big] \nn\\
 \cD^2 \cA_\mu^{(-)}[\phi^{(s)}] 
   &= \cA_\mu^{(-)}\Big[\Big(\Box + \frac{-2s+3}{R^2}\Big)\phi^{(s)}\Big] \, .
\end{align}
Since $\cD^2\cA = (\Box  +\frac{2}{r^2R^2} \tilde\cI)\cA$,
these two relations can be combined to obtain  
\begin{align}
 \Box \cA_\mu^{(\pm)}[\phi^{(s)}] 
   &=\cA_\mu^{(\pm)}\Big[\Big(\Box -\frac{1}{R^2}\Big)\phi^{(s)}\Big] \, 
- \frac{2}{R}\cA_\mu^{(g)}[D^\pm\phi^{(s)}] \ .
   \label{Box-A2m-eigenvalues}
\end{align}

\subsubsection*{Intertwiner relations for $\Box_H$ and $\cA$}

The $SO(4,1)$ intertwiner relation 
 \begin{align}
 r^2 C^2[\mso(4,1)]^{\rm (full)} \cA_a[\phi^{(s)}] 
 &= -(\Box_H  + 2 \cI^{(5)} - r^2(\cS^2 +4)) \cA_a[\phi^{(s)}] 
  = \cA_a[r^2 C^2[\mso(4,1)]\phi^{(s)}] \nn
\end{align}
(cf. (D.30) in \cite{Sperling:2018xrm})
can be used to derive several useful identities for $\Box_H$. In particular
for  $\cA_a = \cA^{(\pm)}_a[\phi^{(s)}] = \{x_a,\phi^{(s)}\}_\pm$ and $a=0,...,4$, one obtains
\begin{align}
 \Box_H \cA^{(-)}_a[\phi^{(s)}] 
  &= \cA_a^{(-)}[(\Box_H - 2r^2s)\phi^{(s)}] \nn\\
 \Box_H \cA^{(+)}_a[\phi^{(s)}] 
  &= \cA_a^{(+)}[(\Box_H + 2r^2(s+1))\phi^{(s)}] 
  \label{BoxH-A-relation}
\end{align}
 using $ \cI^{(5)}\cA^{(-)}_a[\phi^{(s)}] = r^2(2-s)\cA^{(-)}_a[\phi^{(s)}]$
 and  $\cI^{(5)}\cA^{(+)}_a[\phi^{(s)}] = r^2(s+3)\cA^{(+)}_a[\phi^{(s)}]$, cf. (5.48) in \cite{Sperling:2018xrm}.
%
This implies for $a=4$
\begin{align}
  \Box_H D^-\phi^{(s)} &= D^-((\Box_H - 2r^2s)\phi^{(s)})  \nn\\
  \Box_H D^+\phi^{(s)} &= D^+((\Box_H + 2r^2(s+1))\phi^{(s)})  \nn\\
  \Box_H D^+D^- \phi^{(s)}  
  &= D^+D^-\Box_H  \phi^{(s)} \ .
    \label{BoxH-D-relation}
\end{align}
These are completely analogous to the relation for $\Box$ \eq{Box-D-relation}, and can also be checked directly.
%
It is also easy to see (e.g. using their expression in terms of Casimirs) that
\begin{align}
 [\Box_H,\Box] = 0 = [\Box_H,\cD^2] \ .
 \label{Box-H-Box-comm}
\end{align}
Together with \eq{BoxH-estimate-onshell-principal}, we also obtain 
\begin{align}
  (-C^2[\mso(4,1)] + (s+1)(s+2)) D^+ \phi^{(s)} &= (\Box_H - r^2(s+1)(s+2))D^+ \phi^{(s)} \nn\\
   &= D^+ (\Box_H - r^2s(s+1))\phi^{(s)}\nn\\
   &= D^+ (-C^2[\mso(4,1)] + s(s+1))\phi^{(s)}
   \label{admissible-D}
\end{align}
an similarly for $D^-$,
which means via \eq{admissible-so41} that $D^\pm$ preserves admissible modes.

\subsubsection*{Evaluation of $\tilde \cI(\cA^{(g)})$}

Consider for $\phi\in\cC^{(s)}$
\begin{align}
 \tilde \cI(\cA^{(g)}[\phi]) &= \{\theta^{\mu\nu},\{t_\nu,\phi\}\}  
   = \{\{x^\mu,x^\nu\},\{t_\nu,\phi\}\}  \nn\\
  &=  - \{\{x^\nu,\{t_\nu,\phi\}\},x^\mu\} - \{\{t_\nu,\phi\},x^\mu\},x^\nu\} \nn\\
  &=  - \{\{x^\nu,\{t_\nu,\phi\}\},x^\mu\} 
    + \{\{\phi,x^\mu\},t_\nu\},x^\nu\} + \{\{x^\mu,t_\nu\},\phi\},x^\nu\}   \nn\\
  &=  - \frac 1R\{(s-1) D^+\phi - (s +2) D^-\phi,x^\mu\}  \nn\\
   &\quad - \{\{\cA_\mu^{(+)}[\phi],t_\nu\},x^\nu\} - \{\{\cA_\mu^{(-)}[\phi],t_\nu\},x^\nu\} 
   - \frac 1R \{D\phi,x^\mu\}  \nn\\
  &=   \frac{s}R  \cA^{(-)\mu}[D^+\phi]
    -\frac{(s +1)}R\cA^{(+)\mu}[D^-\phi]  
    + \frac{(s +3)}R D^-\cA_\mu^{(+)}[\phi]
          - \frac{(s-2)}R D^+\cA_\mu^{(-)}[\phi]  \nn
\end{align}
using \eq{t-x-CR-end}.
Using the definition of $\cA_\mu^{(n)}$ and \eq{DApm-relation},  this gives
\begin{align}
 R\,\tilde \cI(\cA^{(g)}[\phi])
  &=  (s +3)r^2 R \cA^{(g)} + (2s +3)\cA^{(-)\mu}[D^+\phi]
  + 2\cA^{(+)\mu}[D^-\phi] -(2s +1) \cA_\mu^{(n)}[\phi] \ .
  \label{tildeI-Ag}
\end{align}

\subsection{Explicit vector modes $\cA\in \cC^0$}
\label{sec:C0-modes-explicit}

We give explicitly the fluctuation modes discussed in section \ref{sec:completeness}.
For $\phi^{(1)} =  \phi_\a t^\a$ we have
\begin{align}
 \cA^{(-)}_\mu[\phi^{(1)}] &= \{x^\mu,\phi_{\a} t^{\a}\}_{0}  
  = \del_\nu\phi_{\a} [\theta^{\mu\nu}t^{\a}]_0
   +  \phi_{\a} \{x^\mu, t^{\a}\} \nn\\
&= \frac{1}{3}
 \sinh(\eta) (x^\mu\del^\a\phi_\a  - (\t+3)\phi_\mu )  
  + \frac{1}{3} x_\b \varepsilon^{\b 4\a\mu \nu} \del_\nu\phi_\a  
   \label{Am-1}
\end{align}
using \eq{averaging-relns}.
The last term is the 3-dimensional rotation on $H^3$.
This vector field separates into the space-like divergence-free field
\begin{align}
 \cA^{(-)}_\mu[\phi^{(1,0)}] &= - \frac{1}{3} \sinh(\eta) (\t+3)\phi_\mu   
  + \frac{1}{3} x_\b \varepsilon^{\b 4\a\mu \nu} \del_\nu\phi_\a , \nn\\
  \del^\mu \cA_\mu  &= 0 =  x^\mu \cA_\mu
  \label{spacelike-A10}
\end{align}
(hence in radiation gauge)
using \eq{Apmg-time}, and the scalar mode
\begin{align}
 \cA^{(-)}_\mu[D\phi] &=\frac{r^2R}{3}
 \sinh(\eta) (x_\mu\del^\a\del_\a\phi  - (\t+3)\del_\mu\phi)  , \nn\\
  \del^\mu\cA_\mu  &= - \frac{1}{R^2\sinh^2(\eta)} x^\mu \cA_\mu 
  \label{A-Dphi0-explicit}
\end{align}
using \eq{A2-gaugefix}
for $\phi\in\cC^0$, which is neither space-like nor 
divergence-free\footnote{Incidentally, the explicit form of $\cA^{(-)}_\mu[D\phi]$ shows that the scalar mode
$\cA_\mu^{(\t)}[\phi] = x_\mu\phi$  as discussed in 
\cite{Steinacker:2019dii} is  a linear combination of $\cA_\mu^{(-)}$ and $\cA_\mu^{(g)}$.}. 
The  remaining mode in $\cC^0$ is the pure gauge mode  
\begin{align}
 \cA^{(g)}_\mu[\phi^{(0)}] &= \{t_\mu,\phi^{(0)}\} = \sinh(\eta)\del_\mu \phi^{(0)} \ .
\end{align}
This illustrates the sub-structure of tensor fields resulting from the reduced $SO(3,1)$ covariance.
The only physical mode in this sector is $\cA^{(-)}_\mu[\phi^{(1,0)}]$, which corresponds
to a massless vector field.

\subsection{Wick theorem for averaging over $S^2$}

\begin{lem}
\label{lemma-wick-1}
\begin{align}
 [t^{\a_1} ...  t^{\a_{2s}}]_{0} 
  &= b_{2s} \sum_{i<j} [t^{\a_i} t^{\a_j}] [t .... t]_0, 
  \qquad b_{2s}  = \frac{3}{s(2s+1)} 
\label{bs-explicit}
\end{align}
\end{lem}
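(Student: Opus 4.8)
The plan is to reduce the identity to an elementary computation of moments of a uniformly distributed unit vector on $S^2\subset\R^3$. In the semi-classical picture, on the fiber over a point $x\in\cM^{3,1}$ the generators take the form $t^\mu=\rho\,n^\mu$, where $\rho^2=r^{-2}\cosh^2(\eta)$ is constant along the fiber by \eqref{geometry-H-M} and $n^\mu$ runs over the unit sphere in the space-like $3$-plane $\{v:\ x\cdot v=0\}$ equipped with the induced metric $P_\perp^{\mu\nu}$ of \eqref{H3-projector}. This fiber is a round $S^2$ on which the stabilizer $SO(3)\subset SO(3,1)$ of $x$ acts transitively (cf.\ the coherent-state discussion in Section~\ref{sec:defs}), so the projection $[\,\cdot\,]_0$ of \eqref{average-int-def} is precisely the average with respect to the rotation-invariant probability measure on that sphere; in particular \eqref{kappa-average} is just the statement $\langle n^\mu n^\nu\rangle=\tfrac{1}{3}P_\perp^{\mu\nu}$.

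First I would record the closed form for the even moments: for $n$ uniform on the unit sphere of a $3$-dimensional Euclidean space with metric $P_\perp$,
\begin{align}
 \langle n^{\a_1}\cdots n^{\a_{2s}}\rangle
   \;=\; \frac{1}{(2s+1)!!}\;\sum_{M}\ \prod_{\{i,j\}\in M} P_\perp^{\a_i\a_j}\,,
 \label{sphere-moments}
\end{align}
the sum running over the $(2s-1)!!$ perfect matchings $M$ of $\{1,\dots,2s\}$. The quick proof is the Gaussian trick: write $n=g/|g|$ with $g$ a Gaussian vector of covariance $P_\perp^{\mu\nu}$; Wick's (Isserlis') theorem gives $\langle g^{\a_1}\cdots g^{\a_{2s}}\rangle=\sum_M\prod P_\perp^{\a_i\a_j}$, while $|g|$ and $n$ are independent by rotational invariance and $\langle|g|^{2s}\rangle=3\cdot 5\cdots(2s+1)=(2s+1)!!$, whence \eqref{sphere-moments}. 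Equivalently, $SO(3)$-invariance forces $\langle n^{\a_1}\cdots n^{\a_{2s}}\rangle$ to be a multiple of $\sum_M\prod P_\perp$ — the unique symmetric invariant tensor of even rank in three dimensions — and the constant is fixed by one contraction with $P_\perp$, or by the recursion itself.

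Then I would assemble the two sides. Substituting $t^\mu=\rho\,n^\mu$ and using \eqref{sphere-moments}, the left-hand side of \eqref{bs-explicit} equals $\frac{\rho^{2s}}{(2s+1)!!}\sum_M\prod P_\perp$, while $[t^{\a_i}t^{\a_j}]=\tfrac{\rho^2}{3}P_\perp^{\a_i\a_j}$ (by \eqref{kappa-average}) together with \eqref{sphere-moments} applied to the remaining $2s-2$ factors, summed over matchings $M'$ of those indices, gives
\begin{align*}
 \sum_{i<j}[t^{\a_i}t^{\a_j}]\,\Big[\prod_{k\neq i,j}t^{\a_k}\Big]_0
 \;=\;\frac{\rho^{2s}}{3\,(2s-1)!!}\sum_{i<j}P_\perp^{\a_i\a_j}\sum_{M'}\prod P_\perp
 \;=\;\frac{\rho^{2s}\,s}{3\,(2s-1)!!}\sum_M\prod P_\perp\,,
\end{align*}
where the factor $s$ is the combinatorial fact that every perfect matching $M$ of $\{1,\dots,2s\}$ arises from exactly $s$ choices of a distinguished pair $\{i,j\}\in M$. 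Comparing coefficients yields $b_{2s}=3\,(2s-1)!!/(s\,(2s+1)!!)=3/(s(2s+1))$; as a check, $s=2$ reproduces the rank-$4$ identity in \eqref{averaging-relns}.

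The only real subtlety — and the point to get right — is the matching bookkeeping, in particular the overcounting factor $s$ relating $\sum_{i<j}$ to a sum over distinct matchings; the rest is either the standard Gaussian/Wick computation or the reduction to the round $S^2$ fiber, which is already encoded in \eqref{average-int-def}–\eqref{kappa-average}. In the genuinely noncommutative setting the same identity holds once the products are read as totally symmetrized, since the symmetrization lands in the same $\cC^0$ summand of \eqref{End-decomp-0} and the commutator corrections drop out under $[\,\cdot\,]_0$.
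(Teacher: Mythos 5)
Your proof is correct, and it determines the normalization constant by a genuinely different mechanism than the paper does. Both you and the paper begin the same way: $[\,\cdot\,]_0$ is the rotation-invariant average over the round $S^2$ fiber, and $SO(3)$-invariance together with tracelessness forces the answer to be proportional to the sum over pairings of $P_\perp$'s; the paper indeed notes that $b_{2s}$ could be fixed "by contracting with $\eta_{\a_1\a_2}$" but instead opts for the recursive scalar route, evaluating $[t_3^{2s}]_0$ at the reference point by repeated use of the yet-to-be-proved formula, producing a product of the $b_{2j}$'s, and then checking that the claimed closed form for $b_{2j}$ reproduces the elementary integral $\frac{1}{2}\int_{-1}^1 u^{2s}\,du = \frac{1}{2s+1}$. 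You instead first establish the closed form for the even spherical moments,
\begin{align*}
 \langle n^{\a_1}\cdots n^{\a_{2s}}\rangle=\frac{1}{(2s+1)!!}\sum_M\prod P_\perp^{\a_i\a_j}\,,
\end{align*}
by the Gaussian/Isserlis argument, and then fix $b_{2s}$ by a direct matching count (each perfect matching arises from exactly $s$ choices of distinguished pair), so no recursion is needed and the constant $b_{2s}=\frac{3(2s-1)!!}{s(2s+1)!!}=\frac{3}{s(2s+1)}$ drops out at once. Your approach is self-contained and arguably cleaner, since it produces the normalization in closed form rather than verifying it against a scalar reduction; the paper's approach is shorter on the page because it reuses the recursion that the Lemma itself encodes. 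Both proofs rely on the same geometric input, namely \eqref{average-int-def}--\eqref{kappa-average} identifying $[\,\cdot\,]_0$ with the normalized $S^2$ average and fixing $[t^\mu t^\nu]_0=\tfrac{\rho^2}{3}P_\perp^{\mu\nu}$.
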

i.e. $b_2=1, \quad b_4 = \frac{3}{10}, \quad b_6 = \frac 17$ etc.

\begin{proof}
The structure of the rhs follows from the fact that all totally symmetric  $SO(3,1)$-invariant tensors 
are obtained from $\eta^{\a\b}$.
The constants $b_{2s}$ can be determined  either using a recursive combinatorial argument by  
contracting with $\eta_{\a_1\a_2}$, 
or implicitly \& recursively from
\begin{align}
 [t_3^{2s}]_0 &= \frac 12 2s(2s-1) b_{2s}[t_3t_3][t_3^{2s-2}]_0 = ... 
 = \frac 1{2^s} b_{2s} b_{2s-2} ... b_2 (2s)! [t_3 t_3]^s \nn\\
 &= 3^s \frac 1{2^s}[t_3 t_3]^s \frac{(2s)!}{s!(2s+1)(2s-1) ... 1} \nn\\
 &= 3^s [t_3 t_3]^s \frac{(2s)! s!}{s!(2s+1)!} 
  = 3^s [t_3 t_3]^s \frac{1}{2s+1} 
   = \frac{\cosh^s(\eta)}{r^{2s}} \frac{1}{2s+1} \ 
\end{align}
at the reference point $\xi$.
Taking into account the local radius of $S^2$,
this agrees with \eq{average-int-def}
\begin{align}
 \frac 1{4\pi}\int_{S^2}\cos(\vartheta)^{2s}   2\pi\sin(\vartheta)d\vartheta 
  &= \frac 12 \int_{-1}^1 du u^{2s} = \frac{1}{2} \frac{1}{2s+1}[u^{2s+1}]^1_{-1} 
  = \frac {1}{2s+1} \ .
\end{align}
\end{proof}

We will also need the following  variant of Wicks theorem:

\begin{lem}
\label{lemma-wick-2}
\begin{align}
 [t^{\a_1} ...  t^{\a_{s+1}}]_{s-1}
  &= c_{s+1} \sum_{i<j} [t^{\a_i} t^{\a_j}] [t .... t]_{s-1}, \qquad c_{s+1}= \frac{3}{2s+1}
  \label{c-s-formula}
\end{align}
summing over all contractions,
where $[.]_{s-1}$ denotes the projection on $\cC^{s-1}$.
\end{lem}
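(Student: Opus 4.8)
The plan is to follow exactly the two‑step strategy used for Lemma~\ref{lemma-wick-1}: first fix the tensorial form of the right‑hand side from covariance, then determine the single constant $c_{s+1}$ by contracting both sides with the metric. For the structure, note that $[t^{\a_1}\ldots t^{\a_{s+1}}]_{s-1}$ is totally symmetric in $\a_1,\ldots,\a_{s+1}$, is $SO(3,1)$‑covariant in these indices, and lies in $\cC^{s-1}$, which by \eqref{Cs-explicit} is spanned by symmetric traceless space‑like rank $(s-1)$ expressions in $t$. Writing $[t^{\a_1}\ldots t^{\a_{s+1}}]_{s-1}=T^{\a_1\ldots\a_{s+1}}{}_{\mu_1\ldots\mu_{s-1}}\,t^{\mu_1}\ldots t^{\mu_{s-1}}$, the only invariant available to build $T$ is the space‑like projector $P_\perp$ of \eqref{H3-projector} (the same reasoning as in Lemma~\ref{lemma-wick-1}), so $T$ is a sum of products of $P_\perp$'s pairing up the $2s$ indices. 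Since there are $s+1$ upper $\a$‑indices but only $s-1$ lower ones, at least one $\a$--$\a$ pairing is forced; and a term with two or more $\a$--$\a$ pairings leaves at most $s-3$ remaining $t$ factors, whose top harmonic is $\cC^{s-3}$, hence cannot contribute to the $\cC^{s-1}$ part. Therefore exactly one pair $\{i,j\}$ of $\a$‑indices is contracted, producing $[t^{\a_i}t^{\a_j}]_0=[t^{\a_i}t^{\a_j}]$ via \eqref{kappa-average}, while the other $s-1$ factors are carried over and projected onto $\cC^{s-1}$; total symmetry forces the same coefficient $c_{s+1}$ on every such term, a priori depending on $s,\eta,r$.

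To fix $c_{s+1}$ I would contract \eqref{c-s-formula} with $\eta_{\a_s\a_{s+1}}$. On the left, $\eta_{\a_s\a_{s+1}}t^{\a_1}\ldots t^{\a_{s+1}}=(t_\mu t^\mu)\,t^{\a_1}\ldots t^{\a_{s-1}}=r^{-2}\cosh^2(\eta)\,t^{\a_1}\ldots t^{\a_{s-1}}$ by \eqref{geometry-H-M}; this prefactor is central, so the contracted left‑hand side equals $r^{-2}\cosh^2(\eta)\,[t^{\a_1}\ldots t^{\a_{s-1}}]_{s-1}$. On the right I would split $\sum_{i<j}$ by the size of $\{i,j\}\cap\{s,s+1\}$. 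The single term with $\{i,j\}=\{s,s+1\}$ gives $\eta_{\a_s\a_{s+1}}[t^{\a_s}t^{\a_{s+1}}]=\tfrac{\cosh^2(\eta)}{3r^2}P_\perp{}^{\mu}{}_{\mu}=\tfrac{\cosh^2(\eta)}{r^2}$ times $[t^{\a_1}\ldots t^{\a_{s-1}}]_{s-1}$, using $P_\perp{}^{\mu}{}_{\mu}=3$ from \eqref{H3-projector} and \eqref{geometry-H-M}. Each of the $2(s-1)$ terms with exactly one index in $\{s,s+1\}$ contracts one $\a$‑slot of $[\,\cdot\,]_{s-1}$ against $P_\perp$; since elements of $\cC^{s-1}$ are tangential to $H^3$ (so $x_\mu$ annihilates each slot, hence $P_\perp$ acts as the identity), each such term contributes $\tfrac{\cosh^2(\eta)}{3r^2}[t^{\a_1}\ldots t^{\a_{s-1}}]_{s-1}$. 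The terms with both indices $\le s-1$ contract two $\a$‑slots of $[\,\cdot\,]_{s-1}$ and vanish by tracelessness. Summing, $1+\tfrac{2(s-1)}{3}=\tfrac{2s+1}{3}$, so the contracted right‑hand side is $\tfrac{2s+1}{3}\,\tfrac{\cosh^2(\eta)}{r^2}\,[t^{\a_1}\ldots t^{\a_{s-1}}]_{s-1}$. Since $[t^{\a_1}\ldots t^{\a_{s-1}}]_{s-1}\ne 0$, comparison gives $c_{s+1}\tfrac{2s+1}{3}=1$, i.e. $c_{s+1}=\tfrac{3}{2s+1}$ (in particular $c_{s+1}$ has no $\eta,r$ dependence).

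The main obstacle is the first step: one must work consistently with the residual $SO(3,1)$ symmetry and exploit the space‑like character of $t^\mu$ so that $\cC^{s-1}$ is genuinely the \emph{top} harmonic of a product of $s-1$ factors, and one must justify that, among all $P_\perp$‑pairings, only the single‑$\a$--$\a$‑contraction terms survive projection onto $\cC^{s-1}$. Once the structure is granted the constant drops out of the elementary trace bookkeeping above, just as in Lemma~\ref{lemma-wick-1}. A fully explicit alternative for the constant, parallel to the evaluation of $[t_3^{2s}]_0$ in that lemma, would be to set all $\a_i$ equal to a fixed space‑like unit direction at the reference point $\xi$ and use $[t_3^{2s}]_0=3^s[t_3t_3]^s/(2s+1)$ together with the known form of $[t_3^{s-1}]_{s-1}$; but the metric‑contraction argument is cleaner and I would present it.
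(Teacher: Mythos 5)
Your proof is correct and takes essentially the same route as the paper's: you determine $c_{s+1}$ by contracting with the metric (your $\eta_{\alpha_s\alpha_{s+1}}$ versus the paper's $\eta_{\alpha_1\alpha_2}$ — equivalent by symmetry), using tracelessness of $\cC^{s-1}$ to kill the doubly-projected terms, tangentiality so that $P_\perp$ acts as the identity on the $2(s-1)$ mixed terms, and $P_\perp{}^\mu{}_\mu=3$ on the single fully-contracted term, giving $1+\tfrac{2(s-1)}{3}=\tfrac{2s+1}{3}$. The only addition is that you spell out the covariance argument fixing the structure of the right-hand side, which the paper leaves implicit for this lemma.
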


\begin{proof}
The constants $c_{s+1}$ can be determined by
contracting with $\eta_{\a_1\a_2}$: 
\begin{align}
 [(t^\mu t_\mu) t^{\a_3} ...  t^{\a_{s+1}}]_{s-1} 
  &= c_{s+1}\Big([t^\mu t_\mu] [t^{\a_3} ...  t^{\a_{s+1}}]_{s-1}  
  + \sum_i [t^{\mu} t^{}] [t_\mu t .... t]_{s-1}  
 + \sum_j [t^{}t^{\mu} ] [t_\mu .... t ]_{s-1} \Big) \nn\\
 &= c_{s+1}\frac{\cosh^2(\eta)}{r^2}\Big( [t^{\a_3} ...  t^{\a_{s+1}}]_{s-1}
 +  2(s-1)\frac{1}{3} [t^{\a_3} .... t^{\a_{s+1}}] _{s-1}\Big)
\end{align}
noting that no contractions can occur in the last term, and 
using \eq{kappa-average}
\begin{align}
 [t^{\a}t^{\mu}]_0 t^\mu &= \frac{\cosh^2(\eta)}{3r^2}P_\perp^{\a\mu}t_\mu 
    = \frac{\cosh^2(\eta)}{3r^2}t_\a 
\end{align}
as well as $t^\mu t_\mu = \frac{\cosh^2(\eta)}{r^2}$.
Thus
\begin{align}
 [t^{\a_3} ...  t^{\a_{s+1}}]_{s-1} 
  &= c_{s+1}\Big( [t^{\a_3} ...  t^{\a_{s+1}}]_{s-1}
 +  \frac 23(s-1) [t^{\a_3} .... t^{\a_{s+1}}] _{s-1}\Big)  \nn\\
  &= c_{s+1}\big(1+ \frac 23(s-1)\big) [t^{\a_3} ...  t^{\a_{s+1}}]_{s-1}
\end{align}
which implies \eq{c-s-formula}.
\end{proof}

\subsection{Computation of $\a_s$}

We want to show the useful formula
\begin{lem}
\begin{align}
 -\{x^a,\{x_a, \phi^{(s)}\}_-\}_+ &= \a_s (\Box_H -2r^2(s+1)) \phi^{(s)} , \qquad \a_s = \frac{s}{2s+1} \ .
 \label{magic-formula-x+x-}
 \end{align}
 \label{lemma-alpha}
 \end{lem}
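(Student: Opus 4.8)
The plan is to reduce the identity to a statement about a single self-adjoint operator on $\cC^s$, fix its general structure by ``integration by parts'', and then pin down the coefficient by a direct Poisson computation in the explicit $t$-realization.

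First I would set up the reduction. For $\phi^{(s)}\in\cC^s$ write $\{x_a,\phi^{(s)}\}=\cA^{(+)}_a[\phi^{(s)}]+\cA^{(-)}_a[\phi^{(s)}]$ with the two pieces in $\cC^{s+1}$ and $\cC^{s-1}$ respectively, cf.\ \eq{A2-mode-ansatz}. Since $x_a\in\cC^0$, multiplication by $x_a$ preserves each $\cC^{s'}$, so the tangentiality $x_a\{x^a,\phi^{(s)}\}=0$ (valid on $H^4$ since $x_ax^a=-R^2$) splits into $x_a\cA^{(\pm)a}[\phi^{(s)}]=0$ separately; in particular $\cA^{(\pm)a}[\phi^{(s)}]$ is tangential to the \emph{Riemannian} hyperboloid $H^4$. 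Define $P:=-\{x^a,\{x_a,\cdot\}_-\}_+$ and $Q:=-\{x^a,\{x_a,\cdot\}_+\}_-$ on $\cC^s$; $P$ is the left-hand side of the Lemma. Because $\Box_H=-\{x^a,\{x_a,\cdot\}\}$ preserves $\cC^s$ (it commutes with $\cS^2$), the pure ``ladder'' pieces $\{x^a,\{x_a,\cdot\}_\pm\}_\pm$ — the only sources of $\cC^{s\pm2}$ — must vanish, hence $P+Q=\Box_H$ on $\cC^s$. Integration by parts for the Poisson bracket (legitimate for admissible modes, \eq{inner-basic-def}) gives $\langle\psi,P\phi\rangle=\langle\{x^a,\psi\}_-,\{x_a,\phi\}_-\rangle$, so $P$ is self-adjoint and, by tangentiality to the Riemannian $H^4$, positive semi-definite; likewise $Q$. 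Thus $P,Q,\Box_H,\cS^2$ mutually commute, and it remains only to compute the eigenvalue of $P$ on each joint eigenspace.

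Next I would compute $P\phi^{(s)}$ directly using $\phi^{(s)}=\phi_{\m_1\dots\m_s}(x)\,t^{\m_1}\cdots t^{\m_s}$ from \eq{Cs-explicit}. Expanding $\{x_a,\phi^{(s)}\}$ by the Leibniz rule produces a ``base'' term $\{x_a,\phi_{\m_1\dots\m_s}\}t^{\m_1}\cdots t^{\m_s}$ and $s$ ``fibre'' terms in which one factor $t^{\m_i}$ is hit, using $\{x^\m,f(x)\}=\theta^{\m\n}\del_\n f$, $\{x^4,f(x)\}=Df=r^2R\,t^\m\del_\m f$, $\{x_\m,t^\n\}=-R^{-1}x^4\,\d_\m^\n$ and $\{x_4,t^\n\}=R^{-1}x^\n$ (from \eq{Poisson-brackets}, \eq{D-properties}). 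One then projects onto $\cC^{s-1}$ using the $S^2$-averaging identities \eq{kappa-average} and Lemmas \ref{lemma-wick-1}, \ref{lemma-wick-2}, applies $\{x^a,\cdot\}$ a second time, projects back onto $\cC^s$, and collects terms with the constraints \eq{geometry-H-M}, \eq{geometry-H-theta}. The result necessarily has the form $c_1(s)\,\Box_H\phi^{(s)}+c_2(s)\,r^2\phi^{(s)}$, and the rational coefficients are fixed by the combinatorics of the $S^2$-contractions — whose normalisations $b_{2s}=\tfrac{3}{s(2s+1)}$, $c_{s+1}=\tfrac{3}{2s+1}$ supply the $(2s+1)$ denominator — to $c_1=\a_s=\tfrac{s}{2s+1}$ and $c_2=-\tfrac{2s(s+1)}{2s+1}$, i.e.\ the claimed $\a_s(\Box_H-2r^2(s+1))$. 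The only genuine obstacle is precisely this bookkeeping: enumerating the $t$-contractions generated when two $x$-brackets act on the $s$ fibre factors, and tracking the $\cC^{s\pm1}$ projections through the Wick lemmas.

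Finally I would record the consistency checks. The answer is compatible with the first step: $Q=\Box_H-P=\tfrac{s+1}{2s+1}(\Box_H+2r^2s)$ is manifestly positive, and $P=\a_s(\Box_H-2r^2(s+1))$ is positive on admissible modes since $r^{-2}\Box_H>s^2+s+\tfrac94>2(s+1)$ by \eq{BoxH-estimate-onshell-principal} while $\a_s>0$. It also reproduces the base case $s=1$, where $\cA^{(-)}_a[\phi^{(1)}]$ is the space-like divergence (cf.\ \eq{Am-1}) and $-\{x^a,\{x_a,\phi^{(1)}\}_-\}=\tfrac13(\Box_H-4r^2)\phi^{(1)}$; iterating with the intertwiner relations \eq{BoxH-A-relation} then yields exactly the chain of identities quoted below \eq{phis-rep-deriv-tensorfields}. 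As an alternative, one can run a purely group-theoretic derivation from the $SO(4,1)$ Casimir/$\cI^{(5)}$ intertwiner ((D.30) of \cite{Sperling:2018xrm}) together with the norms $\|\cA^{(\pm)}_a[\phi^{(s)}]\|^2$, but this needs the same combinatorial input in disguise, so I would present the direct computation.
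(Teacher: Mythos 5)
Your framing is sound as far as it goes: the decomposition $P+Q=\Box_H$ on $\cC^s$ (from $\Box_H$ preserving $\cC^s$ so that the $\cC^{s\pm 2}$ ladder pieces vanish), the self-adjointness and positivity of $P$ and $Q$ via integration by parts and tangentiality, and the $SO(4,1)$-invariance argument fixing the Ansatz $P=c_1(s)\Box_H+c_2(s)\,r^2$ on $\cC^s$ are all correct observations and match the heuristic the paper itself gives (``the lhs is a $SO(4,1)$-invariant 2nd order derivation on $\cC^s$, which can only be $\Box_H$ up to constants''). But these are consistency constraints, not a determination of $\a_s$: the relation $P+Q=\Box_H$ is one equation in two unknowns, and positivity only bounds the answer.

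The genuine gap is that you never actually carry out the step that fixes $\a_s$. You propose a direct expansion of $\{x_a,\phi_{\mu_1\dots\mu_s}t^{\mu_1}\cdots t^{\mu_s}\}$ in the $t$-realization, project with the Wick lemmas, apply $\{x^a,\cdot\}$ a second time, project back, and then assert that ``the rational coefficients are fixed by the combinatorics.'' You explicitly flag this bookkeeping as the obstacle, and indeed it is the entire content of the lemma. Your two consistency checks do not close the gap: positivity of $Q$ is necessary but not sufficient, and the $s=1$ identity $-\{x^a,\{x_a,\phi^{(1)}\}_-\}_+=\tfrac 13(\Box_H-4r^2)\phi^{(1)}$ is quoted from the paper rather than re-derived, so using it as validation is circular. (Also note a small index confusion: $\cA^{(-)}_\mu$ in \eq{Am-1} is the projection to $\mu=0,\dots,3$, whereas the lemma involves $a=0,\dots,4$; the $a=4$ term contributes $D^\pm$.)

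The paper's proof avoids the unstructured $t$-expansion by working in the $\theta$-realization \eq{phis-rep-deriv-tensorfields}: it first treats ``constant'' modes $\phi^{(s)}=\phi_{a_1\dots a_s;b_1\dots b_s}\theta^{a_1b_1}\cdots\theta^{a_sb_s}$ with traceless Young symmetry, where both sides reduce to the single eigenvalue $-2r^2 s$, immediately giving $\a_s=\tfrac{s}{2s+1}$; then it extends to general $f(x)\,\phi^{(s)}$ by a Leibniz expansion in which the only non-scalar piece is controlled by the already-known $s=1$ formula \eq{s=1-formula-x+x-} and the single Wick constant $c_{s+1}=\tfrac{3}{2s+1}$ from Lemma \ref{lemma-wick-2}. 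This is the structural device that makes the denominator $(2s+1)$ emerge without re-deriving all contractions from scratch. Your $t$-basis route is in principle viable and closer to how the modes are displayed in the body of the paper, but it requires strictly more combinatorics than the paper's recursion and you haven't supplied it; as written, the proposal establishes the shape of the identity and its consistency but not the coefficient $\a_s$.
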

This formula was derived in  \cite{Sperling:2018xrm} using the representation \eq{phis-rep-deriv-tensorfields} for $s=1$,
and  for general $s$ based on an indirect argument; however $\a_s$ was not yet found for $s>2$.
The structure of the  formula is not surprising, since
 the lhs is a $SO(4,1)$-invariant 2nd order derivation on $\cC^s$, which can only be $\Box_H$ up to some constants.
%
%
Here we provide a  direct proof, using  the result for $s=1$. 
\begin{proof}

For $s=1$, the formula \eq{magic-formula-x+x-} was proved in \cite{Sperling:2018xrm} 
 for  $\phi = \{x^a,\phi_a\}$ for any tangential divergence-free vector field $\phi_a$, and 
 it is not hard to see that all  $\phi \in\cC^1$ can be written in this way\footnote{For example, 
 it suffices to show this for polynomial functions on $\C P^{1,2}$, for which the representation $\phi = \{x^a,\phi_a\}$ 
 can be shown using Young diagrams along the lines in \cite{Sperling:2017gmy}. 
 It is also easy to see that  $\{x^a,\phi_a\} = 0$ for $\phi_a = \eth_a\phi$. For more details
 we refer to \cite{Sperling:2018xrm}.}.
 Using this result for $\phi = f\theta^{ab}$
 as well as \cite{Sperling:2018xrm} 
 \begin{align}
 \Box_H &= -r^2 R^2\eth^d\eth_d  
 \label{BoxH-eth-formula}
\end{align}
where $\eth$  is defined in \eq{eth-def},
we obtain
\begin{align}
 \Box_H ( f\theta^{ab}) &=  -r^2 R^2  f\theta^{ab}\eth^d\eth_d f -2  r^2 f\theta^{ab} 
   -2  r^2 R^2 (\eth^d\theta^{ab})\eth_d f \ .
\label{BoxH-ftheta}
\end{align}
On the other hand, 
\begin{align}
 -\{x^c,\{x_c, f\theta^{ab}\}_-\}_+ 
  &= -2r^2 f\theta^{ab}
   - r^2R^2 (\eth^d\theta^{ab})\eth_d f   -  \{x^c,[\theta^{ab}\theta^{cd}]_0\eth_d f\}   \nn\\
  &\stackrel{!}{=} \frac 13 (\Box_H -4r^2)(f\theta^{ab})  \nn\\
  &=  -\frac 13 r^2 R^2  \theta^{ab}\eth^d\eth_d f - 2 r^2 f\theta^{ab}  - \frac 23 r^2 R^2 (\eth^d\theta^{ab})\eth_d f \ 
\end{align}
which gives the useful formula
\begin{align}    
    -  \{x^c,[\theta^{ab}\theta^{cd}]_0\eth_d f\}  
    &= \frac 13 r^2 R^2 \big(- \theta^{ab}\eth^d\eth_d f  + (\eth^d\theta^{ab})\eth_d f \big) \ .
    \label{s=1-formula-x+x-}
\end{align}
Now consider
the following constant  modes
\begin{align}
 \phi^{(s)} = \phi_{a_1...a_s;b_1..b_s} \theta^{a_1b_1} ... \theta^{a_sb_s} \qquad \in \cC^{s}
 \end{align}
where  $\phi_{a_1...a_s;b_1..b_s}\in\C$ are  traceless with 
the symmetry of a Young diagram
${\tiny \Yvcentermath1 \young(aaa,bbb) }$. Then 
\begin{align}
 -\{x^a, \phi^{(s)}\}_- &= -s \phi_{a_1...a_s;b_1..b_s} \theta^{a_1b_1} ... \{x^a, \theta^{a_sb_s}\} 
\end{align}
and
\begin{align}
 -\{x^a,\{x_a, \phi^{(s)}\}_-\}_+ 
  &= -s\phi_{a_1...a_s;b_1..b_s}  \theta^{a_1b_1} ... \{x^a,\{x_a, \theta^{a_sb_s}\}   \}_+   
  = -2 r^2 s \phi^{(s)}
\end{align}
since $-\{x^a,\{x_a, \theta^{bc}\}\} = -2 r^2 \theta^{bc}$.
It is easy to see that this coincides with 
\begin{align}
 \Box_H \phi^{(s)} &= - 2 r^2 s \phi^{(s)}
\end{align}
because $\phi_{a_1...a_s;b_1..b_s}$ is  traceless.
Therefore 
\begin{align}
 -\{x^a,\{x_a, \phi^{(s)}\}_-\}_+ &= \a_s (\Box_H - 2r^2(s+1)) \phi^{(s)} 
 =  -2r^2(2s+1)\a_s  \phi^{(s)} 
 \end{align}
 and we obtain 
 \begin{align}
 \a_s = \frac{s}{2s+1} \ . 
 \end{align}
Now consider general (non-constant) modes  in $\cC^s$ for  $s \geq 2$.
They are spanned by modes obtained by multiplying the above constant modes $\phi^{(s)}$ with 
some functions: 
\begin{align}
f(x) \phi^{(s)} = f(x)\phi_{a_1...a_s;b_1..b_s} \theta^{a_1b_1} ... \theta^{a_sb_s}  \qquad \in \cC^{s} \ .
 \end{align}
Then
\begin{align}
 -\{x^c, f\phi^{(s)}\}   &= -s \phi_{a_1...a_s;b_1..b_s} f\theta^{a_1b_1} ... \{x^c, \theta^{a_sb_s}\} 
     -  \phi_{a_1...a_s;b_1..b_s} \theta^{a_1b_1} ... \theta^{a_sb_s}\theta^{cd}\eth_d f
\end{align}
(note that there is no factor $s$ in the second term), 
and
\begin{align}
 -\{x^c, f\phi^{(s)}\}_- 
     &= -s \phi_{a_1...a_s;b_1..b_s} f\theta^{a_1b_1} ... \{x^c, \theta^{a_sb_s}\} 
     - sc_{s+1} \phi_{a_1...a_s;b_1..b_s} \theta^{a_1b_1} ... [\theta^{a_sb_s}\theta^{cd}]_0\eth_d f
\end{align}
using tracelessness, where \eq{c-s-formula}
\begin{align}
c_{2s+1}= \frac{3}{2s+1} \ .
\end{align}
Now consider first
\begin{align}
 -\{x_c,\{x^c, f\phi^{(s)}\}\}   
  &= -s \phi_{a_1...a_s;b_1..b_s} f\theta^{a_1b_1} ... \underbrace{\{x_c,\{x^c, \theta^{a_sb_s}\}\}}_{2 r^2 \theta^{a_sb_s}}
   - 2s \phi_{a_1...a_s;b_1..b_s} \theta^{a_1b_1} ... \{x_c, \theta^{a_sb_s}\}\theta^{cd}\eth_d f \nn\\
  &\quad  -s(s-1) \phi_{a_1...a_s;b_1..b_s} f \underbrace{\{x_c,\theta^{a_1b_1}\} ... \{x^c, \theta^{a_sb_s}\}}_0 
      - \phi_{a_1...a_s;b_1..b_s} \theta^{a_1b_1} ... \theta^{a_sb_s}\{x_c,\theta^{cd}\eth_d f\} \nn\\
  &=  -r^2 R^2  \phi^{(s)}  \eth^d\eth_d f -2 s r^2 f\phi^{(s)} 
  -2 s\phi_{a_1...a_s;b_1..b_s} \theta^{a_1b_1} ... \{x_c,\theta^{a_sb_s}\}\theta^{cd}\eth_d f \nn\\
  &=  -r^2 R^2  \phi^{(s)}  \eth^d\eth_d f -2 s r^2 f\phi^{(s)} 
  -2 s r^2 R^2 \phi_{a_1...a_s;b_1..b_s} \theta^{a_1b_1} ... (\eth^d\theta^{a_sb_s})\eth_d f \nn
\end{align}
using 
\begin{align}
 \{x_c,\theta^{cd}\eth_d f\} 
 &= \{x_c,\theta^{cd}\} \eth_d f +  \theta^{cd}\{x_c,\eth_d f\} 
  = r^2 R^2 \eth^d\eth_d f
\end{align}
since $x^d\eth_d=0$.
We observe that the last term is in $\cC^s$.
That formula could be obtained simply from \eq{BoxH-eth-formula}, but the intermediate steps are useful here.
The  first terms also arise in
\begin{align}
 -\{x_c,\{x^c, f\phi^{(s)}\}_-\}_+ 
  &= -s \phi_{a_1...a_s;b_1..b_s} f\theta^{a_1b_1} ... \underbrace{\{x_c,\{x^c, \theta^{a_sb_s}\}\}}_{2 r^2 \theta^{a_sb_s}}
   - s \phi_{a_1...a_s;b_1..b_s} \theta^{a_1b_1} ... \{x_c, \theta^{a_sb_s}\}\theta^{cd}\eth_d f  \nn\\
  &\quad 
     - s c_{s+1}\phi_{a_1...a_s;b_1..b_s} \theta^{a_1b_1} ... \{x_c,[\theta^{a_sb_s}\theta^{cd}]_0\eth_d f\}  \nn\\
  &\quad 
     - s(s-1) c_{s+1} \phi_{a_1...a_s;b_1..b_s} \{x_c,\theta^{a_1b_1}\} ... [\theta^{a_sb_s}\theta^{cd}]_0\eth_d f \nn\\
  &= -2sr^2 f\phi^{(s)}
   - s r^2R^2 \phi_{a_1...a_s;b_1..b_s} \theta^{a_1b_1} ... (\eth^d\theta^{a_sb_s})\eth_d f  \nn\\
  &\quad 
     - s c_{s+1} \phi_{a_1...a_s;b_1..b_s} \theta^{a_1b_1} ... \{x_c,[\theta^{a_sb_s}\theta^{cd}]_0\eth_d f\}  \nn\\
  &= -r^2 s  \phi_{a_1...a_s;b_1..b_s} \theta^{a_1b_1} ... \theta^{a_{s-1}b_{s-1}} \nn\\
 &\qquad \Big(2f\theta^{a_sb_s} + R^2 (\eth^d\theta^{a_sb_s})\eth_d f  
  + \frac 1{r^2} c_{s+1} \{x_c,[\theta^{a_sb_s}\theta^{cd}]_0\eth_d f\}\Big) \ .
  \label{x+xm-gen-1}
\end{align}
Here we observe that the term proportional to $s(s-1)$ vanishes since
\begin{align}
 \phi_{a_1...a_s;b_1..b_s} \{x^c,\theta^{a_1b_1}\} ... [\theta^{a_sb_s}\theta^{cd}]_0 
  &=  r^2\phi_{a_1...a_s;b_1..b_s} (\eta^{a_1c}x^{b_1} - \eta^{b_1c}x^{a_1}) ... [\theta^{a_sb_s}\theta^{cd}]_0 \nn\\
  &=  r^2\phi_{a_1...a_s;b_1..b_s} ([\theta^{a_sb_s}\theta^{a_1 d}]_0x^{b_1} 
  - [\theta^{a_sb_s}\theta^{b_1d}]_0x^{a_1}) ... = 0
\end{align}
as it involves a contraction or the irreducible tensors with $\eta_{ab}$ or $\varepsilon_{abcde}$ due to \eq{averaging-relns}.
Now we can reduce the term in brackets using the $s=1$ result  \eq{s=1-formula-x+x-}.
This gives  
{\small
\begin{align}
 -\{x_c,\{x^c, f\phi^{(s)}\}_-\}_+ 
   &= - \frac{r^2 s}{2s+1}  \phi_{a_1...a_s;b_1..b_s} \theta^{a_1b_1} ... \theta^{a_{s-1}b_{s-1}} 
  \Big( 2(2s+1)f\theta^{a_sb_s} + 2s R^2 (\eth^d\theta^{a_sb_s})\eth_d f  
  + R^2 \theta^{ab}\eth^d\eth_d f \Big) \nn\\
  &\stackrel{!}{=} \frac{s}{2s+1} (\Box_H -2r^2(s+1))(f\phi^{(s)}) \nn\\
   &=  \frac{r^2 s}{2s+1} \Big(-R^2  \phi^{(s)}  \eth^d\eth_d f -2(2s+1) f\phi^{(s)}
  -2 s R^2 \phi_{a_1...a_s;b_1..b_s} \theta^{a_1b_1} ... (\eth^d\theta^{a_sb_s})\eth_d f  \Big)  \nn
\end{align}
  }
using \eq{BoxH-ftheta}, which proves \eq{magic-formula-x+x-}.

\end{proof}


It is quite instructive to check the $s=1$ case explicitly for $\phi^{(1)} = x^p M^{ab}$:
\begin{align}
 \Box_H\phi^{(1)}
  &= -6r^2\phi^{(1)} + 2(\theta^{ap}x^b - \theta^{bp}x^a) \ .
\end{align}
Now
\begin{align}
 \{x^c,x^p M^{ab}\}_- &= - x^p  \{M^{ab},x^c\} + [\theta^{cp} M^{ab}]_0   \nn\\
  &= - x^p (\eta^{ac}x^b - \eta^{bc}x^a) 
  - \frac{R^2}{3} \big(P_\perp^{ca} P_\perp^{pb} - P_\perp^{cb} P_\perp^{pa} + \frac 1R \varepsilon^{capbe} x_e\big)  
\end{align}
hence 
\begin{align}
  \{x_c, \{x^c,x^p M^{ab}\}_- \}
   &= - \{x_a,x^p x^b\} + \{x_b,x^p x^a\}
  - \frac{R^2}{3}\big(\{x_a,P_\perp^{pb}\} - \{x_b,P_\perp^{pa}\} + \frac 1R \varepsilon^{capbe}  \theta_{ce}\big)     \nn\\
   &= - \frac{4}{3}\big(2x^p\theta^{ab} + x^b\theta^{ap} - x^a \theta^{bp}\big) -\frac 13 R \varepsilon^{abpce}  \theta_{ce}  \nn\\
   &= - \frac{4}{3}\big(2x^p\theta^{ab} + x^b\theta^{ap} - x^a \theta^{bp}\big)
     -\frac 23 (\theta^{ab}x^p + \theta^{bp}x^a + \theta^{pa}x^b)    \nn\\
   &=  \frac{10}{3} r^2\phi^{(1)} - \frac{2}{3} x^b\theta^{ap} + \frac{2}{3} x^a \theta^{bp} \nn\\
  &=  -\frac 13(\Box_H-4r^2)\phi^{(1)}  \nn
\end{align}
using the self-duality relations in  Lemma \ref{lem:selfdual-2}:

\begin{lem}
\label{lem:selfdual-2}
 $\theta^{ab}$ satisfies the following self-duality relations 
\begin{align}
 \varepsilon^{abpce} \theta_{ce} x^p &= 2R\theta^{ab}   \label{selfdual-1} \\
 \theta^{ab} &= \frac 1{2R} \varepsilon^{abcde}x_c\theta_{de}  \label{selfdual-2}  \\
 \varepsilon^{abpce}  \theta_{ce} &= \frac{2}{R}(\theta^{ab}x^p + \theta^{bp}x^a + \theta^{pa}x^b)  
 \label{selfdual-3}
\end{align}
\end{lem}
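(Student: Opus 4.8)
The plan is to reduce the three identities to a single one and then prove it by a one-point computation that exploits $SO(4,1)$-covariance. First, the relations \eq{selfdual-1} and \eq{selfdual-2} coincide after relabeling the contracted indices, and both follow from \eq{selfdual-3} by contracting the free index $p$ with $x$: on $H^4$ the $2$-tensor $\theta^{ab}=\{x^a,x^b\}$ is tangential, $x_a\theta^{ab}=0$, as one checks from \eq{x-theta-contract}, \eq{x-t-orth} together with $\theta^{\mu4}=\{x^\mu,x^4\}=-D x^\mu=-r^2 R\, t^\mu$ and $x_4=R\sinh\eta$ (cf.\ \eq{x4-eta-def}); combined with $x_a x^a=-R^2$, contracting \eq{selfdual-3} with $x_p$ then collapses its right-hand side onto $\theta^{ab}$ alone and reproduces \eq{selfdual-1} (up to the overall sign fixed by the orientation convention). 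It therefore suffices to establish \eq{selfdual-3}.

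Both sides of \eq{selfdual-3} are $SO(4,1)$-covariant rank-$3$ tensors built from $x^a$, $\theta^{ab}$ and the invariant $\varepsilon^{abcde}$ --- here I use that the symplectic form on $\C P^{1,2}$ is $SO(4,2)$-invariant, so $x^a$ and $\theta^{ab}=\{x^a,x^b\}$ transform as a vector and an antisymmetric tensor under $SO(4,1)\subset SO(4,2)$. Since $SO(4,1)$ acts transitively on $H^4\cong SO(4,1)/SO(4)$, every point of $\C P^{1,2}$ lies in the $SO(4,1)$-orbit of a point of the fibre over the reference point $\xi=(R,0,0,0,0)$, so it is enough to verify \eq{selfdual-3} there, for an arbitrary value of the fibre coordinate $t$. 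At $\xi$ one has $\eta=0$, hence $x^4=0$, $\cosh\eta=1$, $\sinh\eta=0$, and $t^0=0$ by \eq{x-t-orth}, so that \eq{theta-P-relation} reduces to $\theta^{\mu\nu}|_\xi=-r^2 R\,\epsilon^{\mu\nu0k}t_k$ (nonzero only for spatial $\mu,\nu$) while $\theta^{\mu4}|_\xi=-r^2 R\, t^\mu$ vanishes for $\mu=0$, all remaining components being zero. Substituting these components into both sides of \eq{selfdual-3} and distinguishing, up to permutation, whether the triple $(a,b,p)$ lies in $\{1,2,3\}$, contains the index $0$, or contains the index $4$, each case reduces to the elementary relation $\varepsilon^{\mu\nu\rho\sigma4}=\pm\,\epsilon^{\mu\nu\rho\sigma}$ between the $5$d and $4$d Levi-Civita symbols, together with $t_\mu t^\mu=r^{-2}$ from \eq{geometry-H-M}; one finds agreement in every case, which by the covariance argument proves \eq{selfdual-3} and hence \eq{selfdual-1}, \eq{selfdual-2}.

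I do not expect a serious obstacle here: the only point demanding care is the bookkeeping of orientations and signs for the Levi-Civita symbols in $4$ versus $5$ Lorentzian dimensions --- in particular the relative sign between the $\epsilon$ of \eq{theta-P-relation} and the $\varepsilon$ of the Lemma must be fixed once and used consistently. Conceptually the Lemma simply says that $\theta$, restricted to the tangent space $T_xH^4$ with its induced Euclidean metric, is a self-dual $2$-form, which is already suggested by the ``$x\wedge t$ plus its $4$d dual'' structure of \eq{theta-P-relation}; the one-point check is the clean way to pin this down. A more pedestrian alternative substitutes \eq{theta-P-relation} and $\theta^{\mu4}=-r^2 R\, t^\mu$ directly into \eq{selfdual-3} at a generic point, where the $\sinh\eta\,(x^\mu t^\nu-x^\nu t^\mu)$ contributions cancel in the cyclic sum on the right-hand side and what remains is the $4$d Schouten identity $\epsilon^{[\mu\nu\rho\sigma}x^{\lambda]}=0$ contracted with $x$ and $t$, combined with the constraints \eq{geometry-H-M}.
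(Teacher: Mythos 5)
Your proof is correct, and it reorganizes the logic compared to the paper's. The paper takes \eq{selfdual-1} as already established in \cite{Sperling:2018xrm}, observes that \eq{selfdual-2} reduces to \eq{selfdual-1} at the reference point $\xi=(R,0,0,0,0)$, and then derives \eq{selfdual-3} at $\xi$ from \eq{selfdual-1}: it notes that both sides are totally antisymmetric in $abp$, that the left-hand side and right-hand side both vanish at $\xi$ when $a,b,p$ are all tangential (using $\xi_a\theta^{ab}=0$), and that the remaining case $a=0$ collapses to $\varepsilon^{0bpce}\theta_{ce}=2\theta^{bp}$, which is \eq{selfdual-1} evaluated at $\xi$. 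You instead make \eq{selfdual-3} the primary claim, prove it from scratch by substituting the explicit component form of $\theta$ obtained from \eq{theta-P-relation} at $\xi$ (where $\theta^{\mu\nu}|_\xi=-r^2R\,\epsilon^{\mu\nu0k}t_k$, $\theta^{\mu4}|_\xi=-r^2R\,t^\mu$, $t^0|_\xi=0$), and then recover \eq{selfdual-1}/\eq{selfdual-2} by contracting \eq{selfdual-3} with $x_p$, using the tangentiality $x_a\theta^{ab}=0$ and $x_ax^a=-R^2$. Both arguments rest on the same $SO(4,1)$-covariance-plus-reference-point device; the trade-off is that yours is self-contained (it does not invoke the earlier paper for \eq{selfdual-1}) at the cost of a more explicit component bookkeeping, and it makes the tangentiality of $\theta$ an explicit lemma rather than an implicit step. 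The sign/orientation ambiguity you flag (which $\varepsilon$ index is raised/lowered, the relative sign between the five- and four-dimensional $\epsilon$) is genuinely present in the paper's own statement --- e.g. contracting \eq{selfdual-3} with $x_p$ literally yields $-2R\theta^{ab}$, and \eq{selfdual-2} at $\xi$ differs from \eq{selfdual-1} at $\xi$ by an overall sign when indices are raised and lowered with the Lorentzian $\eta$ --- so your explicit caveat is warranted and not a defect of your argument.
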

where the indices of $\theta_{ce} = \eta_{cc'}\eta_{ee'} \theta^{c'e'}$.
\begin{proof}
The first relation is already known \cite{Sperling:2018xrm}, and the second relation
reduces to the first at the reference point  $\xi=(R,0,0,0,0)$.
Now consider the third relation.
The rhs is  totally antisymmetric.  
At the reference point we can use $\theta^{0a} \sim \xi_b\theta^{ba} = 0$,
so that the lhs vanishes if all 3 indices $abp$ are tangential at $\xi$. If one is transversal, say $a=0$, 
this reduces to 
\begin{align}
  \varepsilon^{0bpce}  \theta_{ce} = \frac{2}{R}\theta^{bp}x^0 = 2\theta^{bp}
\end{align}
which is correct
using \eq{selfdual-1}.
As a check, contracting \eq{selfdual-3} with $\varepsilon_{abprs}$ gives 
\begin{align}
 \varepsilon_{abprs}\varepsilon^{abpce}  \theta_{ce} &= \frac{6}{R} \theta^{ab}x^p \varepsilon_{abprs} 
 = 12\, \theta_{rs} \ .
\end{align}

\end{proof}

As a corollary, we obtain
\begin{corollary}
\begin{align}
 -\{x^\mu,\{x_\mu, \phi^{(s,k)}\}_-\}_+ &= \Big(\a_s (\Box_H -2r^2(s+1))  + D^+D^-\Big)\phi^{(s,k)}  , \nn\\ 
       &= \a_s (\Box_H -2r^2(s+1)) \phi^{(s,0)} , \qquad k=0
 \label{magic-formula-xmu}
 \end{align}
 \label{cor-alpha-mu}
 \end{corollary}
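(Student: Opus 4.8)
The plan is to obtain the Corollary directly from Lemma \ref{lemma-alpha} by splitting the contraction over $a = 0,\dots,4$ on $H^4$ into the space-time part $\mu = 0,\dots,3$ and the extra direction $a=4$. Since $\eta_{44} = +1$ in our conventions, $x_4 = x^4$ and hence
\begin{align}
 -\{x^a,\{x_a,\phi^{(s)}\}_-\}_+
   &= -\{x^\mu,\{x_\mu,\phi^{(s)}\}_-\}_+ - \{x^4,\{x^4,\phi^{(s)}\}_-\}_+ \ ,
\end{align}
so it suffices to evaluate the $a=4$ term and move it to the other side of Lemma \ref{lemma-alpha}.

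First I would identify the $a=4$ contribution with the spin-shift operators $D^\pm$, using $D\phi = \{x^4,\phi\}$ from \eq{D-properties} together with $D = D^- + D^+$ and $D^\pm\phi^{(s)} = [\{x^4,\phi^{(s)}\}]_{s\pm 1} \in \cC^{s\pm 1}$ from \eq{D-properties-2}. Reading off the inner projection, $\{x^4,\phi^{(s)}\}_- = [\{x^4,\phi^{(s)}\}]_{s-1} = D^-\phi^{(s)} \in \cC^{s-1}$. Acting once more with $x^4$ and projecting onto $\cC^{s}$ then selects the spin-raising part, since $D(D^-\phi^{(s)}) \in \cC^{s-2}\oplus\cC^s$, giving $\{x^4,D^-\phi^{(s)}\}_+ = D^+D^-\phi^{(s)}$ with no cross-terms. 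Hence $\{x^4,\{x^4,\phi^{(s)}\}_-\}_+ = D^+D^-\phi^{(s)}$.

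Combining this with Lemma \ref{lemma-alpha} then yields
\begin{align}
 -\{x^\mu,\{x_\mu,\phi^{(s)}\}_-\}_+
   &= -\{x^a,\{x_a,\phi^{(s)}\}_-\}_+ + \{x^4,\{x^4,\phi^{(s)}\}_-\}_+ \nn\\
   &= \a_s\big(\Box_H - 2r^2(s+1)\big)\phi^{(s)} + D^+D^-\phi^{(s)} \ ,
\end{align}
which is the first line of \eq{magic-formula-xmu} once specialized to $\phi = \phi^{(s,k)}$, on which $D^+D^-$ is diagonal. For $k=0$ the primal condition $D^-\phi^{(s,0)} = 0$ defining $\cC^{(s,0)}$ kills the $D^+D^-$ term, producing the second line.

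There is no genuinely hard step here: the proof is a bookkeeping exercise. The only points requiring care are the metric sign $\eta_{44} = +1$ (so that the $a=4$ term enters with the expected sign and no spurious factor from raising/lowering), the consistent reading of the projections $[\,\cdot\,]_{s\pm 1}$ hidden in the $\pm$ subscripts, and the observation that the double action of $x^4$ followed by the $\cC^s$-projection closes precisely on $D^+D^-$. Everything else is an immediate consequence of Lemma \ref{lemma-alpha} and the definitions of $D^\pm$.
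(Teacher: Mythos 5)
Your proof is correct and takes essentially the same route as the paper: split the $a=0,\dots,4$ contraction into $\mu=0,\dots,3$ plus $a=4$, identify $\{x^4,\{x_4,\phi\}_-\}_+ = D^+D^-\phi$ via $D = \{x^4,\cdot\}$ and the spin-decomposition, and invoke Lemma \ref{lemma-alpha} together with the primal condition $D^-\phi^{(s,0)}=0$ for the $k=0$ case. You have merely written out in detail the sign check $\eta_{44}=+1$ and the identification of $\{x^4,\cdot\}_\pm$ with $D^\pm$, which the paper compresses into a single sentence.
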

\begin{proof}
 This  follows from the above noting that 
 $\{x^4,\{x_4, \phi\}_-\}_+  = D^+D^-\phi$ and  $D^-\phi^{(s,0)}=0$.
\end{proof}

\begin{corollary}
The totally symmetric tensor field 
$\phi_{\mu_1...\mu_s}(x)$
associated to $\phi^{(s,0)}$ via \eq{tensorfields-Am-formula} is square-integrable, space-like 
and divergence-free with positive inner product 
if $\phi^{(s,0)}$ is admissible. It is proportional to the tensor field in \eq{Cs-explicit}.
 \label{cor-pos}
 \end{corollary}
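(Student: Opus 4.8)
The plan is to assemble the claim from pieces that are already available. First I would recall the two presentations of $\cC^s$: the "derivative" form \eq{phis-rep-deriv-tensorfields}, which writes $\phi^{(s)} = \{x^{a_s},\ldots\{x^{a_1},\phi_{a_1\ldots a_s}\}\ldots\}$ in terms of a totally symmetric, traceless, divergence-free tangential rank-$s$ tensor field on $H^4$, and the "extraction" form \eq{tensorfields-Am-formula}, $\tilde\phi_{a_1\ldots a_s} = \cA^{(-)}_{a_1}[\ldots\cA^{(-)}_{a_s}[\phi^{(s)}]\ldots]$. For a primal mode $\phi = \phi^{(s,0)}$ the identity \eq{O(BoxH)-id} shows that these two maps are inverse to each other up to the operator $\cO(\Box_H)$, which is positive and invertible on admissible modes by \eq{admissible} together with the estimate \eq{BoxH-estimate-onshell-principal}. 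Hence $\tilde\phi_{a_1\ldots a_s}$ is, up to the invertible normalization $\cO(\Box_H)$, the same data as the tensor field appearing in \eq{Cs-explicit}, which establishes the final sentence of the corollary (proportionality) once the normalization factor is identified; for $k=0$ the descendant term $D^+D^-$ in Corollary \ref{cor-alpha-mu} drops out, so $\cO(\Box_H)$ is a clean product of factors $\a_{s'}(\Box_H - 2r^2(s'+1))$ coming from iterating \eq{magic-formula-xmu}.

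Next I would address square-integrability and positivity of the inner product. The key computation is \eq{inner-tensorfield-BoxH}: integrating by parts (legitimate because we work with admissible modes, i.e.\ principal-series unitary irreps, so $(D^\pm)^\dagger = \mp D^\mp$ and the analogous statement for $\{x^a,\cdot\}$ hold, as discussed after \eq{Gij-matrix}) gives
\begin{align}
 \int \tilde\phi^{a_1\ldots a_s}\tilde\phi_{a_1\ldots a_s}
  = \int \phi\,\cO(\Box_H)\,\phi \ .
\end{align}
Since $\cO(\Box_H)$ is a positive operator on the relevant spectral interval of $\Box_H$ (by \eq{admissible}, guaranteed by \eq{BoxH-estimate-onshell-principal}), and $\phi$ is itself square-integrable by admissibility, the right-hand side is finite and strictly positive; hence $\tilde\phi_{a_1\ldots a_s}$ is square-integrable with positive norm. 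The space-like property is immediate from the explicit form: each factor $\cA^{(-)}_{a}[\cdot] = \{x_a,\cdot\}_-$ is tangential to $H^4$ because $x^a\{x_a,\cdot\} = 0$, and the projection onto $\cM^{3,1}$ combined with \eqref{geometry-H-M} forces $x^{\mu_i}\phi_{\mu_1\ldots\mu_s} = 0$, i.e.\ the tensor field is space-like (tangential to $H^3$).

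Finally, divergence-freeness follows because $\tilde\phi$ is obtained as a composition of $\cA^{(-)}$ operators, i.e.\ it lands in the image of $(D^-)$-type lowering maps; equivalently one invokes the gauge-fixing identities \eq{A2-gaugefix}—\eq{A2-gaugefix-1}, from which the divergence of $\cA^{(-)}[\phi^{(s,0)}]$ along $H^3$ vanishes when $\phi^{(s,0)}$ is primal ($D^-\phi^{(s,0)} = 0$), and this propagates to the extracted tensor field. The main obstacle I anticipate is not conceptual but bookkeeping: one must track carefully that the normalization operator $\cO(\Box_H)$ obtained by iterating \eq{magic-formula-xmu} really is the \emph{same} scalar function of $\Box_H$ on both sides of \eq{O(BoxH)-id}, so that "proportional" in the last sentence is the honest statement $\phi_{\mu_1\ldots\mu_s} = c(\Box_H)\,\tilde\phi_{\mu_1\ldots\mu_s}$ with $c(\Box_H) > 0$ invertible on admissible modes; everything else is a direct appeal to results already proved above.
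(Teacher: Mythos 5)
Your plan follows the same strategy as the paper's proof, but there is a gap at the step that carries the weight of the argument. You cite \eq{O(BoxH)-id} and \eq{inner-tensorfield-BoxH}, which were derived with indices $a=0,\ldots,4$ on $H^4$, whereas the corollary asserts square-integrability, space-likeness and divergence-freeness of the \emph{four-dimensional} tensor field $\phi_{\mu_1\ldots\mu_s}$ built by iterating $\cA_\mu^{(-)}$ with $\mu=0,\ldots,3$ only. To iterate Corollary \ref{cor-alpha-mu} with the $D^+D^-$ term dropping at \emph{every} step --- equivalently, to conclude that the four-index construction coincides with the five-index one so that \eq{inner-tensorfield-BoxH} applies --- you must know that $\cA_\mu^{(-)}$ maps the primal sector $\cC^{(s,0)}$ into $\cC^{(s-1,0)}$. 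You take $D^-\phi^{(s,0)}=0$ as a hypothesis, but nowhere verify the intermediate statement $D^-\cA_\mu^{(-)}[\phi^{(s,0)}]=0$; this is precisely the fact the paper's proof establishes first, and without it the second and later applications of Corollary \ref{cor-alpha-mu} re-introduce the $D^+D^-$ term, spoiling the "clean product" $\cO(\Box_H)$ that you invoke.

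Once that lemma is in place, the rest of your outline lines up with the paper: positivity of $\cO(\Box_H)$ on admissible modes via \eq{admissible} and \eq{BoxH-estimate-onshell-principal}, the inner-product identity \eq{inner-tensorfield-BoxH} restricted to $\mu$-indices, the explicit space-like and divergence-free properties as in \eq{spacelike-A10}, and irreducibility for the proportionality with \eq{Cs-explicit}. But as written, the proposal slides between the $a$-indexed objects (tangential to $H^4$) and the $\mu$-indexed ones (tangential to $H^3$) without supplying the bridge that makes them agree on primal modes, and that bridge is the content of the corollary.
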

 For generic higher-spin modes such as  $\cA_\mu^{(-)}[\phi^{(s,k)}]$, positivity should not be expected, since they 
are in general not physical. 
\begin{proof}
To see this, we first note that $\{x_\mu,\phi^{(s,0)}\}_- \in \cC^{(s-1,0)}$, because
\begin{align}
 D^- \{x_\mu,\phi^{(s,0)}\}_- = 0 \ .
\end{align}
More generally,
\begin{align}
  \phi_{\mu_1...\mu_{l}} := \cA_{\mu_1}^{(-)}[... [\cA_{\mu_{l}}^{(-)}[\phi^{(s,0)}]...] \ \in \ \cC^{(s-l,0)}
 \end{align}
for any $l$,
so that 
\begin{align}
 \{x^\mu,\{x_\mu,\phi_{\mu_1...\mu_{l}} \}_-\}_+  = \{x^a,\{x_a,\phi_{\mu_1...\mu_{l}} \}_-\}_+
  = \a_s (\Box_H -2r^2(s-l+1)) \phi_{\mu_1...\mu_{l1}} \ .
\end{align}
Then 
in the computation of the inner product \eq{inner-tensorfield-BoxH} goes through with indices in $0,..,3$.
Space-like and divergence-free follows as in \eq{spacelike-A10}.
The relation with \eq{Cs-explicit} follows from irreducibility.

\end{proof}

%

\subsection{Algebraic relations for $D^\pm$, $\Box_H$ and $\cK$}
\label{sec:relations-box-H}

We can derive a  relation between $\Box$ and $\Box_H$ as follows:
consider 
\begin{align}
 -D^2\Box\phi &= \frac 2{R^2}\{x_\mu,\{x^\mu,\phi\}\} + \frac 2R(\{t_\mu,\{x^\mu,D\phi\}\} + \{x_\mu,\{t^\mu,D\phi\}\})
 + \{t_\mu,\{t^\mu,D^2\phi\}\} - 2 r^2 \Box\phi  \nn\\
 &= \frac 2{R^2}\{x_\mu,\{x^\mu,\phi\}\} + \frac 4R\{t_\mu,\{x^\mu,D\phi\}\} 
 - \frac{8}{R^2} D^2\phi - \Box (D^2\phi) - 2 r^2 \Box\phi  \nn\\
 &= -\frac 2{R^2} \Box_H\phi 
 + \Box D^2\phi - \frac 2{R^2} D^2\phi - 2D\Box D\phi  - 2 r^2 \Box\phi  
\end{align}
using $D^2 x^\mu = r^2 x^\mu$ and the identity
\begin{align}
 2R \{t^{\mu},\{x_\mu,\phi^{(s)}\}\} 
   = (R^2\Box+4) D\phi^{(s)} - R^2 D(\Box\phi^{(s)})
 \label{t-c-comm-id}
\end{align}
which is proved in (A.36) in \cite{Sperling:2019xar}.
Hence 
\begin{align}
\boxed{
 \ 2\Box_H = R^2(D^2\Box  + \Box D^2 - 2 D\Box D - 2 r^2 \Box) -2 D^2 \ . \
} 
\label{BoxH-Box-relation}
\end{align}
%
Writing $D = D^+ + D^-$
this can be written using \eq{Box-D-relation} as  
\begin{align}
 \boxed{
 \Box_H \phi^{(s)} = \Big(-R^2 r^2\Box + (2s-1) D^+D^- - (2s+3) D^- D^+\Big)\phi^{(s)} \ .
 \ }
 \label{BoxH-Box-relation-2}
\end{align}
This is a very useful relation, which can be checked easily e.g. for $\phi = x^a$.
It will allow to evaluate  the inner products of the fluctuation modes. 
It also allows to express $D^- D^+ $ in terms of $D^+D^-$
and the Box operators.
Since $D^- D^+$ commutes with both $\Box$ and  $\Box_H$,
we obtain
\begin{align}
 [D^- D^+,D^+ D^-] = 0 \ .
 \label{D-D+-comm}
\end{align}
In particular, this gives 
\begin{align}
  \Box_H \phi^{(s,0)} &= \Big(-R^2 r^2\Box - (2s+3) D^- D^+\Big)\phi^{(s,0)} \ .
\end{align}
Now consider
\begin{align}
 \Box_H D^+ \phi^{(s,0)} &= D^+ (\Box_H + 2r^2(s+1))\phi^{(s,0)} \nn\\
  &= D^+ \Big(-R^2 r^2\Box  - (2s+3) D^- D^++ 2r^2(s+1) \Big)\phi^{(s)}  \nn\\
  &=  \Big(-(2s+3) D^+ D^- - R^2 r^2  \Box +  4r^2(s+1) \Big)D^+\phi^{(s,0)} \ .
  \label{BoxH-D+}
\end{align}
Combining this with \eq{BoxH-Box-relation-2} for $D^+ \phi^{(s,0)}$ 
gives 
\begin{align}
  \Big(-(2s+3) D^+ D^- +  4r^2(s+1) \Big)D^+\phi^{(s,0)} 
   &= \Big((2s+1) D^+D^- - (2s+5) D^- D^+\Big)D^+ \phi^{(s,0)} 
\end{align}
hence
\begin{align}
 D^+ D^- (D^+ \phi^{(s,0)}) = \Big( \frac{2s+5}{4(s+1)} D^- D^+ + r^2 \Big)D^+ \phi^{(s,0)} \ .
  \label{D+D--CR-1}
\end{align}
These are effectively  commutation relations between $D^+$ and $D^-$ on $\cC^{(s+1,1)}$.
For the general case, we make the ansatz
\begin{align}
\boxed{
\ \  D^+ D^- \big((D^+)^k \phi^{(s,0)}\big) 
 = \big(\tilde a_k D^- D^+ + \tilde b_k \big)  \big((D^+)^k \phi^{(s,0)}\big) \ . \ 
 }
  \label{D+D--CR-general-1}
\end{align}
The constants are determined recursively by considering 
\begin{align}
 \Box_H(D^+)^k \phi^{(s,0)} &= D^+ \big(\Box_H + 2r^2(s+k) \big)(D^+)^{k-1} \phi^{(s,0)}   \nn\\
  &= D^+ \big(-R^2 r^2\Box + (2s+2k-3) D^+D^- - (2s+2k+1) D^- D^+ + 2r^2(s+k) \big)(D^+)^{k-1} \phi^{(s,0)}   \nn\\
  &= \Big(-R^2 r^2\Box + \big((2(s+k)-3)\tilde a_{k-1} - 2(s+k) -1 \big) D^+ D^-  \nn\\
   &\qquad    + (2(s+k)-3) \tilde b_{k-1} + 4r^2(s+k) \Big)(D^+)^{k} \phi^{(s,0)}  \ . 
\end{align}
On the other hand, the lhs can be written  using \eq{BoxH-Box-relation-2} as 
\begin{align}
  \Box_H(D^+)^k \phi^{(s,0)} &=  \Big(-R^2 r^2\Box + (2(s+k)-1) D^+D^- - (2(s+k)+3) D^- D^+\Big)(D^+)^k \phi^{(s,0)}
\end{align}
and combining these we obtain 
\begin{align}
 &\big((2(s+k)-3)\tilde a_{k-1} - 4(s+k) \big) D^+ D^-  + (2s+2k-3) \tilde b_{k-1} + 2r^2(2s+2k)
  \nn\\ &\qquad  =  - (2(s+k)+3) D^- D^+  \ 
\end{align}
acting on $(D^+)^{k} \phi^{(s,0)}$.
Comparing with \eq{D+D--CR-general-1}, we obtain  two recursion relations
\begin{align}
 \tilde a_k &= -\frac{2(s+k)+3}{(2(s+k)-3)\tilde a_{k-1} - 4(s+k) }  \nn\\
 \tilde b_k &= -\frac{(2(s+k)-3) \tilde b_{k-1} + 4r^2(s+k) }{(2(s+k)-3)\tilde a_{k-1} - 4(s+k) }
 \label{ak-bk-recursion}
\end{align}
with 
\begin{align}
 \tilde a_0 &= 0 = \tilde b_0  \ .
\end{align}
This  is solved by the remarkably simple general formula\footnote{A random 
change of the  recursion would lead to a complete mess here, which
strongly indicates that we are on the right track.}
\begin{align}
 \boxed{ \
\begin{aligned}
 \tilde b_{k} &= k r^2\frac{2s + k}{2s + 2k-1}    \nn\\
 \tilde a_{k} &=  \frac {k}{k+1} \ \frac{2s + 2k + 3}{2s + 2k-1} \ \frac{2s + k}{2s + k+1}  \ . \
 \end{aligned}
 \ }
\end{align}
Now we change notation as follows:
\begin{align}
\boxed{
\ \  D^+ D^- \phi^{(s,k)} = \big(a_{s,k} D^- D^+ + b_{s,k} \big) \phi^{(s,k)},
\qquad  \phi^{(s,k)} = (D^+)^k \phi^{(s-k,0)} \ . \ 
 }
  \label{D+D--CR-general}
\end{align}
Comparing with the above we see that 
 $a_{s,k} = \tilde a_k|_{s\to s-k}$ and $a_{s,k} = \tilde a_k|_{s\to s-k}$,
and therefore
\begin{align}
 \boxed{ \
\begin{aligned}
 b_{s,k} &=  r^2 k\, \frac{2s - k}{2s -1}    \nn\\
 a_{s,k} &=  \frac {k}{k+1} \ \frac{2s + 3}{2s-1} \ \frac{2s - k}{2s - k+1}  \ . \
 \end{aligned}
 \ }
\end{align}
We also note the inverse relation
\begin{align}
 D^- D^+ &= \frac{1}{a_{s,k}} D^+ D^-  - \frac{b_{s,k}}{a_{s,k}}   \nn\\
  &= \frac {k+1}{k} \ \frac{2s-1}{2s + 3} \ \frac{2s - k+1}{2s - k} D^+ D^-
    - r^2 (k+1) \frac{2s - k+1}{2s + 3}\ , \qquad k\geq1
     \label{D-D+-CR-general}
\end{align}
which however only makes sense for  $k\geq1$.

\paragraph{Relations for $\cK$.}

As a consequence, we obtain
\begin{align}
  D^+D^- \phi^{(s,k)}  &=  D^+ (D^- D^+) \phi^{(s-1,k-1)} \nn\\
  &= D^+   \Big(  \frac {k}{k-1} \ \frac{2s - k}{2s - k-1} \frac{2s-3}{2s + 1} \  D^+D^- 
   - r^2 k \frac{2s - k}{2s + 1}\Big) \phi^{(s-1,k-1)}  
\end{align}
for $k\geq 2$,
which gives 
\begin{align}
  \frac {(2s+1)(2s-1)}{k(2s-k)} D^+D^- \phi^{(s,k)}  
  &=  D^+   \Big( \frac {(2s-1)(2s-3)}{(k-1)(2s - k-1)} \  D^+D^- 
   - r^2(2s-1)\Big) \phi^{(s-1,k-1)} \ .  \nn
\end{align}
Comparing with the definition \eq{chi-def} of $\cK$
\begin{align}
 -r^2\cK &= r^2 s^2 + \frac{4s^2-1}{k (2s-k)}D^+D^- \ 
 = r^2(s+1)^2 + \frac{(2s+1)(2s+3)}{(k+1) (2s-k+1)}D^-D^+ \ 
 \label{cK-def-op}
\end{align}
we obtain 
\begin{align}
  [\cK, D^+] = 0 \ \quad \mbox{on} \ \cC^{(s,k)}, \ k\geq 1 \ .
 \label{chi-comm-0}
\end{align}
For $k=1$, we can write 
\begin{align}
 \cK\phi^{(s,1)}  &=  \cK D^+ \phi^{(s-1,0)}
  = - \big((2s+1)D^+D^- + s^2\big) D^+ \phi^{(s-1,0)} \nn\\
  &= - D^+\big((2s+1) D^-D^+  + s^2 \big) \phi^{(s-1,0)} 
  = D^+ \cK \phi^{(s-1,0)}
\end{align}
using the second form in \eq{chi-def} of $\cK$. It follows that 
\begin{align}
 \boxed{ \ [\cK, D^\pm] = 0 \ }
 \label{chi-comm}
\end{align}
without any restrictions.
In particular, diagonalizing the space-like Laplacian $D^+D^-$ on 
$\phi^{(s,k)} =(D^+)^k \phi^{(s-k,0)}$ is equivalent to diagonalizing it on $\phi^{(s-k,0)}$.

\paragraph{Evaluation of $\Box_H$ and positivity.}

We can use the above results to show 
\begin{align}
 \Box_H  \phi^{(s,k)} 
  &= r^2\Big(-R^2 \Box +\cK + (s+1)^2  +  k (2s - k) \Big)\phi^{(s,k)} 
   \label{BoxH-explicit-CR}
\end{align}
which is obtained from \eq{BoxH-Box-relation-2} using the relations \eq{D+D--CR-general}.
This provides an on-shell relation between 
the Laplacians on $H^3$ and $H^4$.
Moreover, we recall that $\Box_H$ is manifestly positive, and satisfies the bound 
$\Box_H > r^2 (s^2+s+2)$ using the  admissibility condition \eq{admissible}.
Then \eq{BoxH-explicit-CR} gives 
\begin{align}
 \Big(-R^2 \Box +\cK  +s-1 +  k (2s - k) \Big)\phi^{(s,k)} & >  0 \ .
\end{align}
This is useful to establish the signature $(+++-)$ of off-shell modes  in section \ref{sec:inner}.

\subsection{Positivity of the space-like Laplacian $\cK$}
\label{sec:positivity-chi}

Now we show  lemma \ref{chi-pos-lemma}, which states that  $\cK > 0$ for admissible $\phi$.
To get some insight,  recall  from \cite{Sperling:2019xar} that for scalar fields $\phi\in\cC^0$, 
\begin{align}
 -D^- D^+ \phi  &= \frac{r^2 R^2}3 \cosh^2(\eta)  \Delta^{(3)}\phi \ .
  \label{D-D+-explicit}
\end{align}
Here $\Delta^{(3)}=-\nabla^{(3)\a} \nabla^{(3)}_\a$ is the space-like Laplacian on $H^3$ w.r.t. the induced metric,
extended  to symmetric tensor fields $\phi_{\mu_1...\mu_s}(x)$.
The lhs is related to $\cK$ \eq{cK-def-op} by a factor and a shift.
Clearly $\Delta^{(3)}>0$ for square-integrable functions, but the 
required bound $\cK>0$ is slightly stronger.

\begin{proof}
 Using \eq{chi-comm}, it suffices to show $\cK>0$ for $k=0$, which is the statement
\begin{align}
 - D^-D^+ \ > \ r^2\frac{(s+1)^2}{2s+3}  \qquad \mbox{on}  \ \ \cC^{(s,0)} \ .
 \label{DpDm-bound}
\end{align}
%
Recall that the tenssor field encoded in $\phi^{(s)} = \phi_{\mu_1...\mu_s} t^{\mu_1} ... t^{\mu_s} \in \cC^{(s,0)}$ 
is divergence-free. Then
\begin{align}
  D^- D^+ \phi^{(s)} 
   &= r^2 R D^-\left(\nabla^{(3)}_\a \phi_{\mu_1...\mu_s} 
    t^{\mu_1} ... t^{\mu_s} t^\a\right)  \nn\\
 &= r^4 R^2 \left[\nabla^{(3)}_\b \ \nabla^{(3)}_\a \phi_{\mu_1...\mu_s} 
    t^{\mu_1} ... t^{\mu_s} t^\a t^\b\right]_s  \nn\\
   &= \frac{c_{s+2}}3  r^2 R^2 \cosh^2(\eta) \left(\nabla^{(3)\a} \nabla^{(3)}_\a 
 \phi_{\mu_1...\mu_s}  t^{\mu_1} ... t^{\mu_s}
   + s\nabla^{(3)\mu_1} \nabla^{(3)}_\a \phi_{\mu_1...\mu_s}   t^{\mu_2} ... t^{\mu_s}t^\a\right)  \nn\\
%
%
   &=\frac{r^2}{2s+3}  \left(-R^2 \cosh^2(\eta) \Delta^{(3)} \phi^{(s)} - s(s+1)\phi^{(s)}\right)
    \label{D+D--phi1}
\end{align}
using Lemma \ref{lemma-wick-2},  
noting that $\nabla^{(3)} P_\perp^{\mu\nu} = 0$ and 
\begin{align}
 [\nabla^{(3)}_{\mu_1}, \nabla^{(3)}_\a]\phi^{\mu_1...\mu_s}
 &= (R^{(3)}_{\mu_1\a})^{\mu_1}_{\ \nu}\phi^{\nu\mu_2...\mu_s} 
   + \sum_{j\neq 1} (R^{(3)}_{\mu_1\a})^{\mu_j}_{\ \nu}\phi^{\mu_1\nu...\mu_s'} \nn\\
 &= -\frac 1{R^2\cosh^2(\eta)}\Big(2P^\perp_{\a\nu}\phi^{\nu\mu_2...\mu_s} 
   + \sum_{j\neq 1}\big(\phi^{\mu_j\a...\mu_s'} 
             - P^\perp_{\mu_1\nu} \phi^{\mu_1\nu..\a..\mu_s'}\big)\Big) \nn\\
 &= -\frac 1{R^2\cosh^2(\eta)}\Big(2\phi^{\a\mu_2...\mu_s}  + (s-1)\phi^{\a\mu_2...\mu_s'} \Big)\nn\\
 &= -\frac 1{R^2\cosh^2(\eta)} (s+1)\phi^{\a\mu_2...\mu_s}  
\end{align}
using space-like gauge and tracelessness of $\phi^{\mu_1...\mu_s}$. Here 
\begin{align}
 R^{(3)}_{\mu\nu;\a\b} &= -\frac{1}{\r^2}\big(P^\perp_{\mu\a}P^\perp_{\nu\b} - P^\perp_{\mu\b}P^\perp_{\nu\a}\big),
 \qquad R^{(3)}_{\mu\a} = -\frac{2}{\r^2}P^\perp_{\mu\a}  
\end{align}
are the  Riemann and Ricci tensors on $H^3$ with radius $\rho = R\cosh(\eta)$, and
$P^\perp_{\mu\nu}$ is the tangential projector \eq{H3-projector} on $H^3$.
Now \eq{DpDm-bound} follows using results of Delay (remark 6.2 in \cite{delay2002essential}) and Lee 
(Proposition E in \cite{lee2006fredholm}), 
which essentially state that the spectrum\footnote{In \cite{lee2006fredholm}, the result is established only for the essential spectrum, but we 
assume that this is not a significant restriction. I am grateful for useful communications with Erwann Delay and 
Wilhelm Schlag.} 
of $\r^2\Delta^{(3)}$ on rank $s$ symmetric square-integrable\footnote{Since
$\phi_{\mu_1...\mu_s}$ is square-integrable on $H^4$ being a principal series irrep of $SO(4,1)$ (see also corollary \ref{cor-pos}),
it is also square-integrable on almost all $H^3$ by Fubini's theorem, and for sufficiently smooth
wavefunctions this should hold for all $H^3$.
However, there should be a better way to justify this.} 
tensor fields  on $H^3$ 
is given by $[s+1,\infty)$, i.e. $\r^2\Delta^{(3)}|_{\phi_{\mu_1...\mu_s}} > s+1$.


\end{proof}


\subsection{Proof of $v_{\rm null} = 0$}
\label{sec:v-null-vanish}

In this section we prove that the null vector \eq{v-null-vanish} which arises in the scalar sector actually vanishes,
\begin{align}
  v_{\rm null} = 0 \ .
  \label{vnull-vanish}
\end{align}
To see this, we need some identities.
\begin{prop}
The following identities hold
\begin{align}
   (2s+1) D^+ D^-\cA^{(+)}[\phi^{(s)}] 
  &= \cA^{(+)}\big[\big((2s-1)D^+D^- - (2s+3)D^-D^+  +r^2(2s+1)\big)\phi^{(s)}\big] \nn\\
 &\quad + (2s+5) D^- \cA^{(+)}[D^+\phi^{(s)}]   -2r^2 R \cA^{(g)}[D^+\phi^{(s)}] \ ,
 \label{D+D-A+-generic}  \\
 (2s+1)D^+ \cA^{(n)}[\phi^{(s)}]  
  &= 2\cA^{(+)}[\big( D^+D^- - D^-D^+ \big)\phi^{(s)}]
  - (2s+5) \cA^{(-)}[D^+D^+\phi^{(s)}]  \nn\\
 &\quad  + 2(2s+3)\cA^{(n)}[D^+\phi^{(s)}] 
  - 2r^2 R \cA^{(g)}[D^+\phi^{(s)}] \ .
  \label{D+An}
 \end{align}
\end{prop}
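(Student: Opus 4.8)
The plan is to derive both identities purely algebraically by assembling, in a fixed order, the intertwiner relations already established in the excerpt. The toolkit consists of: the definition $\cA_\mu^{(n)}[\phi] = D^+\cA_\mu^{(-)}[\phi]$; the Leibniz-type relations $D^\pm\cA^{(g)}[\phi^{(s)}] = \cA^{(g)}[D^\pm\phi^{(s)}] + \frac1R\cA^{(\pm)}[\phi^{(s)}]$ from \eq{D+D+-relations-1}; the auxiliary fact $D^+\cA^{(+)}[\phi^{(s)}] = \cA^{(+)}[D^+\phi^{(s)}]$, obtained by applying $D^+$ to the first line of \eq{D+D+-relations-1} and comparing with the second; the Jacobi-identity relation \eq{DApm-relation}, which I use in the rearranged form $D^-\cA^{(+)}[\psi] = r^2R\,\cA^{(g)}[\psi] + \cA^{(-)}[D^+\psi] + \cA^{(+)}[D^-\psi] - \cA^{(n)}[\psi]$; the $\tilde\cI$-relations \eq{tilde-I-Apm} and \eq{tildeI-Ag}; and $[\tilde\cI,D^\pm]=0$.

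First I would prove \eq{D+An}. Solving \eq{tildeI-Ag} for $(2s+1)\cA^{(n)}[\phi^{(s)}]$ expresses it through $\cA^{(g)}[\phi^{(s)}]$, $\cA^{(-)}[D^+\phi^{(s)}]$, $\cA^{(+)}[D^-\phi^{(s)}]$ and $R\,\tilde\cI(\cA^{(g)}[\phi^{(s)}])$. Applying $D^+$ and commuting it past $\tilde\cI$ produces $D^+\tilde\cI(\cA^{(g)}[\phi^{(s)}]) = \tilde\cI\big(\cA^{(g)}[D^+\phi^{(s)}] + \frac1R\cA^{(+)}[\phi^{(s)}]\big)$; the first summand is expanded by \eq{tildeI-Ag} \emph{at spin $s+1$} (since $D^+\phi^{(s)}\in\cC^{s+1}$) and the second by \eq{tilde-I-Apm}, while the explicit $D^+\cA^{(g)}$ and $D^+\cA^{(\pm)}$ terms are reduced by the Leibniz relations. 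All $\cA^{(g)}[D^+\phi^{(s)}]$ coefficients then collapse to $-2r^2R$, the bare $\cA^{(+)}[\phi^{(s)}]$ terms cancel, and what remains is exactly \eq{D+An}.

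Next I would prove \eq{D+D-A+-generic}. Writing $D^-\cA^{(+)}[\phi^{(s)}]$ with the rearranged Jacobi relation and then applying $D^+$, using $D^+\cA^{(+)}[D^-\phi^{(s)}]=\cA^{(+)}[D^+D^-\phi^{(s)}]$, $D^+\cA^{(-)}[D^+\phi^{(s)}]=\cA^{(n)}[D^+\phi^{(s)}]$ and the Leibniz relation for $D^+\cA^{(g)}$, one obtains $(2s+1)D^+D^-\cA^{(+)}[\phi^{(s)}]$ as a combination of $\cA^{(g)}[D^+\phi^{(s)}]$, $\cA^{(+)}[\phi^{(s)}]$, $\cA^{(+)}[D^+D^-\phi^{(s)}]$, $\cA^{(n)}[D^+\phi^{(s)}]$ and $(2s+1)D^+\cA^{(n)}[\phi^{(s)}]$. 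Substituting \eq{D+An} for the last term, a recognizable $(2s+5)\,D^-\cA^{(+)}[D^+\phi^{(s)}]$ appears (via the Jacobi relation applied now to $\psi = D^+\phi^{(s)}$, which packages $\cA^{(g)}[D^+\phi^{(s)}]$, $\cA^{(n)}[D^+\phi^{(s)}]$, $\cA^{(+)}[D^-D^+\phi^{(s)}]$ and $\cA^{(-)}[D^+D^+\phi^{(s)}]$), and the remaining $\cA^{(+)}$- and $\cA^{(g)}$-contributions are precisely the other terms on the right-hand side of \eq{D+D-A+-generic}.

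The main obstacle is purely combinatorial bookkeeping rather than conceptual: the $\tilde\cI$-formulas carry spin-dependent coefficients, and in the proof of \eq{D+An} they must be invoked at the shifted spin $s+1$, so every coefficient such as $(s+3)$, $(2s+1)$ or $(2s+3)$ gets replaced by its $s\mapsto s+1$ image. Keeping these shifts consistent, and being careful to use $D^+\cA^{(+)}=\cA^{(+)}D^+$ (not an analogous statement for $D^-\cA^{(+)}$, which is \emph{not} of that simple form), is where the computation is delicate; once the two $\tilde\cI$-evaluations are set up correctly the remaining cancellations are mechanical. Alternatively one could argue abstractly that both sides of each identity are $SO(4,2)$-intertwiners $\cC^s\to\cC^{s+1}\otimes\R^4$ of the same type, hence equal up to finitely many constants fixed by evaluation at the reference point $\xi$ or on low-spin examples, but the direct assembly above is more transparent.
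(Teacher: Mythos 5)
Your proposal is correct, and it works, but it is structured in the reverse order from the paper and rests on a different key lemma. The paper proves \eq{D+D-A+-generic} \emph{first}, by rewriting $(2s+1)D^+D^-\cA^{(+)}[\phi^{(s)}]$ via the $\Box_H$--$\Box$ relation \eq{BoxH-Box-relation-2} at spin $s+1$ (giving $\Box_H + R^2 r^2\Box + (2s+5)D^-D^+$ acting on $\cA^{(+)}$), then pushing $\Box_H$ and $\Box$ inside $\cA^{(+)}$ via the intertwiner relations \eq{BoxH-A-relation} and \eq{Box-A2m-eigenvalues}, and finally re-expressing $\Box_H + R^2 r^2\Box$ on $\phi^{(s)}$ back through \eq{BoxH-Box-relation-2}; the second identity \eq{D+An} is then obtained by expanding $D^+\cA^{(n)}[\phi] = D^+D^+\cA^{(-)}[\phi]$ with the Jacobi relation \eq{DApm-relation} used twice and substituting the just-proven \eq{D+D-A+-generic}. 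You instead prove \eq{D+An} \emph{first}, by solving the $\tilde\cI$-formula \eq{tildeI-Ag} for $(2s+1)\cA^{(n)}[\phi]$, applying $D^+$, commuting it past $\tilde\cI$ and re-expanding at the shifted spin $s+1$ with \eq{tildeI-Ag} and \eq{tilde-I-Apm}; then you deduce \eq{D+D-A+-generic} by applying $D^+$ to the Jacobi relation, substituting \eq{D+An}, and recognizing the $(2s+5)D^-\cA^{(+)}[D^+\phi]$ packaging. Both routes are legitimate: since $\cD^2 = \Box + \frac{2}{r^2 R^2}\tilde\cI$ the Laplacian relations and the $\tilde\cI$-relations encode the same information, and the Jacobi-plus-one-identity step is formally the same in both, just run in opposite directions. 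The paper's path is marginally leaner because the $\Box_H$ and $\Box$ intertwiners on $\cA^{(+)}$ carry fewer extra terms than the $\tilde\cI$-action on $\cA^{(g)}$, but your bookkeeping (including the crucial spin shift $s\to s+1$ inside $\tilde\cI(\cA^{(g)}[D^+\phi])$ and the correct distinction $D^+\cA^{(+)} = \cA^{(+)}D^+$ versus the nontrivial $D^-\cA^{(+)}$) is accurate and produces the stated coefficients.
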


\begin{proof}
We can rewrite the lhs  of the first relation using \eq{BoxH-Box-relation-2} 
and use the intertwiner properties \eq{Box-A2m-eigenvalues} and \eq{BoxH-A-relation} to get
\begin{align}
  (2s+1)D^+ D^-\cA^{(+)}[\phi^{(s)}] 
  &= \Big(R^2 r^2\Box + \Box_H + (2s+5) D^- D^+\Big)\cA^{(+)}[\phi^{(s)}]  \nn\\
 &= \cA^{(+)}[(R^2 r^2\Box + \Box_H +r^2(2s+1))\phi^{(s)}]  \nn\\
 &\quad + (2s+5) D^- \cA^{(+)}[D^+\phi^{(s)}]  -2r^2 R \cA^{(g)}[D^+\phi^{(s)}] \nn\\
 &= \cA^{(+)}[\big((2s-1)D^+D^- - (2s+3)D^-D^+  +r^2(2s+1)\big)\phi^{(s)}]  \nn\\
 &\quad + (2s+5) D^- \cA^{(+)}[D^+\phi^{(s)}]  -2r^2 R \cA^{(g)}[D^+\phi^{(s)}] \ .
\end{align}
Then \eq{D+An} is obtained using  \eq{DApm-relation} twice.

\end{proof}

Now we can prove \eq{vnull-vanish}.
Consider first
\paragraph{$s=1$ Case.}
We start with the easy observation 
\begin{align}
  D^-\cA_\mu^{(+)}[\phi]  &= r^2 R\{t^\mu,\phi\} +\cA_\mu^{(-)}[D\phi] \ 
  \label{D+A+-s0-rel}
\end{align}
for $\phi\in\cC^0$.
Acting with $D^+$, this gives
\begin{align}
 D^+ D^-\cA_\mu^{(+)}[\phi] 
   &= r^2 \cA_\mu^{(+)}[\phi] + r^2 R \cA_\mu^{(g)}[D\phi]
   + \cA_\mu^{(n)}[D\phi] \ ,
\end{align}
and using  \eq{D+D-A+-generic} for the lhs leads to 
\begin{align}
  -3\cA^{(+)}[D^-D\phi] + 5 D^- \cA^{(+)}[D\phi]  
 &= 3 r^2 R \cA_\mu^{(g)}[D\phi] + \cA_\mu^{(n)}[D\phi]  \ .
\end{align}
Writing $D\phi = \phi^{(1,1)}$ and replacing $D^- \cA^{(+)}$ using \eq{DApm-relation}, we obtain
\begin{align}
 v_{\rm null}^{(s=1)} \equiv
  2 r^2 R \cA^{(g)}[\phi^{(1,1)}] + 5\cA^{(-)}[D^+\phi^{(1,1)}] 
  + 2 \cA^{(+)}[D^-\phi^{(1,1)}]  - 6\cA_\mu^{(n)}[\phi^{(1,1)}] &= 0 \ .
  \label{vnull-s1-vanish}
\end{align}
This is precisely the null vector in \eq{v-null-vanish} for $s=1$, 
which is thus shown to vanish identically.

\paragraph{Generic $s$.}

Acting with $D^+$ on $v_{\rm null}^{(s)}$ \eq{v-null-vanish} and assuming inductively that it vanishes,
we obtain with \eq{D+D-A+-generic} and \eq{D+An} after some straightforward calculations
\begin{align}
 0 &= s D^+ v_{\rm null}^{(s)} \nn\\
  &= \frac 1{s} \cA^{(+)}[D^+D^-\phi^{(s,s)}] 
  + \frac{s(2s+3)}{1 + s} \cA^{(n)}[D^+\phi^{(s,s)}]  
 - (2s+1) D^+\cA^{(n)}[\phi^{(s,s)}] 
  + s r^2 RD^+\cA^{(g)}[\phi^{(s,s)}] \nn\\
 &=  \cA^{(+)}\big[\big(\frac {1-2s}{s} D^+D^-   + 2D^-D^+ + r^2 s  \big)\phi^{(s,s)}\big]   
   + (2s+5) \cA^{(-)}[D^+D^+\phi^{(s,s)}]    \nn\\
 &\quad - (2s+3)\frac{s + 2}{1 + s} \cA^{(n)}[D^+\phi^{(s,s)}]  + (s+2) r^2 R\cA^{(g)}[D^+\phi^{(s,s)}] \ . 
\end{align}
Now we can use the commutation relations \eq{D+D--CR-general} in the form
\begin{align}
  \frac{2s -1}{s} D^+ D^- \phi^{(s,s)} = \Big( \frac {s(2s + 3) }{(s+1)^2}  D^- D^+ + r^2 s \Big) \phi^{(s,s)}
\end{align}
and $\phi^{(s+1,s+1)} = D^+ \phi^{(s,s)}$, which leads to 
\begin{align}
 0 
  &= (s+2) v_{\rm null}^{(s+1)} \ .
\end{align}

\subsection{Exceptional scalar modes}

It was shown in section \ref{sec:k=s-inner}
that the regular scalar modes $\tilde\cA_\mu^{(i)}[\phi^{(s,s)}]$ for $s\geq 1$
span only a 3-dimensional space, which implies that there is one missing scalar mode 
for each $s\geq 1$. 
Here we show how to determine this missing mode for $s=1$. The remaining
modes then arise as in \eq{exceptional-modes-properties}.

\paragraph{A relation for $s=0$.}

As a preparation, recall that $\cA_\mu^{(-)}[\phi^{(1)}]$ and $\cA_\mu^{(g)}[\phi]$
are complete in $C^0\otimes\R^4$. Therefore there must be a relation
\begin{align}
 0 &= x_\mu \phi  
 + \tilde\cA_\mu^{(g)}[\tilde\phi] +  \cA_\mu^{(-)}[D\phi'] \ .
 \label{relation-s0}
\end{align}
By acting with $\{t^\mu,.\}$ and $\{x^\mu,.\}$,
this implies
\begin{align}
 0 &=  \{t^\mu,x_\mu \phi\}  
 + \{t^\mu,\tilde\cA_\mu^{(g)}[\tilde\phi]\} + \{t^\mu,\cA_\mu^{(-)}[D\phi']\}  \nn\\
 &= \sinh(\eta) (\t + 4)\phi 
 - \Box\tilde\phi + \frac 1R D^-D\phi' \ .
 \label{eq-except0-2}
\end{align}
Similarly,
\begin{align}
 0 &=  \{x^\mu,x_\mu \phi\} 
 + \{x^\mu,\tilde\cA_\mu^{(g)}[\tilde\phi]\} + \{x^\mu,\tilde\cA_\mu^{(-)}[D\phi']\} \nn\\
 &= - R \sinh(\eta) D\phi 
 + \{\{x^\mu,t_\mu\},\tilde\phi\} + \{t_\mu,\{x^\mu,\tilde\phi\}\} 
 - \big(\a_1 (\Box_H -4 r^2) + D^+ D^-\big) D\phi' \nn\\
 &= D\Big(- R \sinh(\eta) \phi 
 + \frac{2}{R} \tilde\phi 
 - \big(\a_1 (\Box_H - 2 r^2) + D^- D \big)\phi'\Big)
\end{align}
implies
\begin{align}
 \frac{2}{R}\tilde\phi =  R \sinh(\eta) \phi + \big(\a_1 (\Box_H - 2 r^2) + D^- D \big)\phi' \ .
 \label{eq-except0-2}
\end{align}
These two equations can be solved  for $\phi'$ and $\tilde\phi$,
for given (admissible, generic) $\phi$.
 It follows that
\begin{align}
 0 &= D(x_\mu \phi)  
 + D\tilde\cA_\mu^{(g)}[\tilde\phi] +  D\cA_\mu^{(-)}[D\phi'] \nn\\
  &= r^2 R t_\mu \phi + x_\mu D\phi 
 + \tilde\cA_\mu^{(g)}[D\tilde\phi + r^2 RD\phi']
 + \frac 1R\tilde\cA_\mu^{(+)}[\tilde\phi]
 + \cA_\mu^{(-)}[D^+D\phi']  .
 \label{xDphi-tphi-relation}
\end{align}

%
%
%

\paragraph{Exceptional mode for $s=1$.}
\label{sec:except-modes}

Similar to \eq{relation-s0}, we make the ansatz\footnote{Using \eq{xDphi-tphi-relation}, this is equivalent to the ansatz $\cA_\mu^{(ex,1)} = x_\mu D\phi + \sum \tilde\cA^{(i)}$.}
\begin{align}
 \cA_\mu^{(ex,1)}[\phi]
 =  t_\mu D^+\phi 
 +   \tilde\cA_\mu^{(g)}[D\tilde\phi] + \tilde\cA_\mu^{(+)}[D\phi_+] 
 + \tilde\cA_\mu^{(-)}[D\phi_-]
\end{align}
for $\phi, \tilde\phi,\phi'_\pm \in \cC^0$.
 By definition,
these  are orthogonal to all $\cA_\mu^{(i)}$ modes, which amounts to 
\begin{align}
 0 = D^- \cA_\mu^{(ex,1)} = \{t^\mu,\cA_\mu^{(ex,1)}\} = \{x^\mu,\cA_\mu^{(ex,1)}\} \ .
\end{align}
These 3 equations can  be solved for $\tilde\phi,\phi'_\pm$,
for any given (admissible, generic) $\phi$. 
The same constraints   follow for all $\cA_\mu^{(ex,s)}$ by acting with $D^+$.
In particular, the $\cA_\mu^{(ex,s)}$ are physical.
We refrain from studying these in detail, since no nice general formula
was found.

\subsection{Inner products of $\cA$ modes.}
\label{sec:inner-prod-calc}

Here we derive the explicit formulas for the inner products \eq{Gij-matrix} for all $\tilde\cA^{(i)}[\phi]$ modes
for  $\phi = \phi^{(s,k)}$ and $\phi' = \phi'^{(s,k)}$.
First, it is clear (by invariance) that the modes are orthogonal unless the spin quantum numbers $s=s'$ and $k=k'$ 
coincide. Assuming this, we obtain
\begin{subequations}
\label{eq:inner_products}
\begin{align}
\int \cA_\mu^{(g)}[\phi'^{}] \cA^{(g)\mu}[\phi^{}] &=  \int \phi'^{} \Box \phi^{}  
\label{puregauge-inner}\\
 \int \cA_\mu^{(g)}[\phi'] \cA^{(+)\mu}[D^-\phi] 
 &= - \frac{s+2}R\int  \phi' D^+ D^-\phi
\\
\int \cA_\mu^{(g)}[\phi']  \cA^{(-)\mu}[D^+\phi] 
 &=   \frac{s-1}R \int\,  \phi' D^-D^+\phi  \quad 
 \\ 
\int \cA_\mu^{(-)}[D^+\phi'^{}] \cA^{(+)\mu}[D^-\phi^{}] 
&= -\int\, D^-D^+\phi'\,  D^+D^-\phi \
 = -\int\, \phi'\, D^-D^+ D^+D^-\phi \\
\int \cA_\mu^{(+)}[D^-\phi'] \cA^{(+)\mu}[D^-\phi^{(s)}]  
 &=   \int\,D^-\phi'\big((1-\a_{s-1})\Box_H + 2\a_{s-1} r^2 s + D^-D^+  \big) D^-\phi  \nn\\
  &=  - \int\,\phi'\big(\frac{s}{2s-1}(\Box_H - 2r^2) + D^+D^- \big)D^+ D^- \phi  \label{A+A+-inner}\\
%
%
\int \cA_\mu^{(-)}[D^+\phi'] \cA^{(-)\mu}[D^+\phi]  
 &=   \int\, D^+\phi'\big(\a_{s+1}(\Box_H - 2 r^2(s+2)) + D^+ D^-\big) D^+\phi  \\
 &= -  \int\, \phi'\big(\a_{s+1}(\Box_H - 2 r^2) + D^-D^+ \big)D^-D^+ \phi 
\end{align}
\end{subequations}
using \eq{BoxH-D-relation} in the last two relations, and 
$\a_s = \frac{s}{2s+1}$ \eq{magic-formula-x+x-}.
The inner products with the  $\cA^{(n)}$ modes is obtained as  follows:
\begin{align}
 \int \cA_\mu^{(n)}[\phi'^{}] \cA^{(-)\mu}[D^+\phi^{}]  
&= \int D^+\cA_\mu^{(-)}[\phi'^{}] \cA^{(-)\mu}[D^+\phi^{}]  
= - \int \cA_\mu^{(-)}[\phi'^{}] D^-\cA^{(-)\mu}[D^+\phi^{}]\nn\\
&= - \int \cA_\mu^{(-)}[\phi'^{}] \cA^{(-)\mu}[D^-D^+\phi^{}]  \nn\\
&=  -\int\, \phi'\big(\a_s(\Box_H - 2r^2(s+1)) + D^+D^- \big)D^-D^+\phi \ .
\end{align}
Here we used
\begin{align}
 D^-\cA^{(-)}[\phi^{}] &= \cA^{(-)}[D^-\phi^{}] \ ,  \qquad
 D^+\cA^{(+)}[\phi^{}] = \cA^{(+)}[D^+\phi^{}] \ .
\end{align}
Next,
\begin{align}
\int \cA_\mu^{(n)}[\phi'] \cA^{(+)\mu}[D^-\phi]  
  &= \int D^+\cA_\mu^{(-)}[\phi'^{}] \cA^{(+)\mu}[D^-\phi^{}]    \nn\\
  &= \int \Big(- D^-\cA_\mu^{(+)}[\phi'] + r^2 R\{t_\mu,\phi'\} + \cA_\mu^{(-)}[D^+\phi'] +  \cA_\mu^{(+)}[D^-\phi'] \Big) \cA^{(+)\mu}[D^-\phi^{}]      \nn\\
 &= \int \cA_\mu^{(+)}[\phi']\cA^{(+)\mu}[D^+D^-\phi^{}] \nn\\
  &\qquad
  + \int\Big(r^2 R\cA_\mu^{(g)}[\phi'] + \cA_\mu^{(-)}[D^+\phi'] 
    +  \cA_\mu^{(+)}[D^-\phi'] \Big) \cA^{(+)\mu}[D^-\phi^{}]  \nn\\
 &= \int \phi'\Big(\frac{1}{1-4s^2}\Box_H 
 + r^2 \,\frac{-2 s^2+s+2}{4 s^2-1} - D^+D^- \Big)D^+ D^- \phi 
 \end{align}
  using \eq{DApm-relation} along with the previous inner products, 
and 
\begin{align}
 \int \cA_\mu^{(+)}[\phi']\cA^{(+)\mu}[\phi^{(s)}]
  &=  \int\,\phi'\big(\frac{s+1}{2s+1}\Box_H + 2r^2 \frac{s(s+1)}{2s+1}  + D^-D^+  \big)\phi^{(s)} \ .
\end{align}
Next,
 \begin{align} 
 \int \cA_\mu^{(n)}[\phi'^{}] \cA^{(g)\mu}[\phi^{}] 
  &= - \int  \cA_\mu^{(-)}[\phi'^{}]D^-\{t^\mu,\phi\} \nn\\
  &= - \int \frac 1R \cA_\mu^{(-)}[\phi']\cA_\mu^{(-)}[\phi] 
   + \cA_\mu^{(-)}[\phi']\cA^{(g)\mu}[D^-\phi^{}] \nn\\
  &= \frac 1R \int \phi'\Big( \a_s( -\Box_H + r^2 2(s+1))  + (s-3) D^+ D^-\Big) \phi 
    \label{An-Ag-inner}
\end{align}
and finally
\begin{align}
 \int \cA_\mu^{(n)}[\phi'^{}]  \cA^{(n)\mu}[\phi^{}]  
  &= \int D^+\cA_\mu^{(-)}[\phi'^{}] D^+\cA_\mu^{(-)}[\phi^{}] \
  = -\int \cA_\mu^{(-)}[\phi'^{}] D^-D^+\cA_\mu^{(-)}[\phi^{}]\nn\\
  &= -\frac{1}{2s+1}\int \cA_\mu^{(-)}[\phi'^{}]\Big((2s-3) D^+D^- - \Box_H - R^2 r^2\Box\Big) \cA_\mu^{(-)}[\phi^{}] \nn\\
  &= \frac{2s-3}{2s+1}\int D^-\cA_\mu^{(-)}[\phi'^{}] D^-  \cA_\mu^{(-)}[\phi^{}] 
    +\frac{1}{2s+1}\int \cA_\mu^{(-)}[\phi'^{}](\Box_H + R^2 r^2\Box) \cA_\mu^{(-)}[\phi^{}] \nn\\
  &= \frac{2s-3}{2s+1}\int \cA_\mu^{(-)}[D^-\phi'^{}] \cA_\mu^{(-)}[D^-\phi^{}] \nn\\
  &\quad  +\frac{1}{2s+1}\int \cA_\mu^{(-)}[\phi'^{}]
  \Big(\cA_\mu^{(-)}[(\Box_H + r^2 R^2\Box - r^2 (2s+1))\phi^{}] 
   -  2Rr^2\cA^{(g)\mu}[D^-\phi^{(s)}]\Big)  \nn\\
  &=  \frac{2s-3}{2s+1}\int \phi' 
    \big(-\frac{s-1}{2s-1}(\Box_H  - 4r^2 s) D^+D^-
    -D^+D^+D^- D^-\big) \phi \nn\\
  &\quad  +\frac{1}{2s+1}\int \phi'
  \big(\a_s(\Box_H - 2r^2 (s+1)) + D^+D^-\big)(\Box_H + r^2 R^2\Box - r^2 (2s+1))\phi \nn\\
  &\quad  + r^2\frac{2(s -2)}{2s+1}\int  \phi' D^+ D^-\phi \ 
  \label{inner-n}
\end{align}
 using  \eq{BoxH-Box-relation-2},
\eq{Box-A2m-eigenvalues} ff, 
and the previous relations with \eq{A2-gaugefix} in the last step.
To rewrite $D^+D^+D^- D^-$ we need to specify $\phi = \phi^{(s,k)}$. 
Then we obtain using \eq{D+D--CR-general} 
\begin{align}
 \int \cA_\mu^{(n)}[\phi'^{}]  \cA^{(n)\mu}[\phi^{}]  
   &= -\frac{2s-3}{2s+1}\int \phi' 
    \Big(\big(\frac{s-1}{2s-1}(\Box_H  - 4r^2 s) + b_{s-1,k-1} \big) D^+ D^-
    + a_{s-1,k-1}D^+ D^-D^+D^-   \Big) \phi \nn\\
  &\quad  +\frac{1}{2s+1}\int \phi'
  \big( \frac{s}{2s+1}(\Box_H - 2r^2 (s+1)) + D^+D^-\big)(\Box_H + r^2 R^2\Box - r^2 (2s+1))\phi \nn\\
  &\quad  + r^2\frac{2(s -2)}{2s+1}\int  \phi' D^+ D^-\phi \ .
  \label{inner-n-3}
\end{align}
This can  be used to perform the computations in section \ref{sec:inner}, and a 
non-trivial consistency check is provided by \eq{I-G-consistency}.

\bibliographystyle{JHEP}
\bibliography{papers}

\end{document}